\documentclass[11pt]{article}
\usepackage{mathrsfs}
\usepackage{palatino}
\usepackage[pdftex]{graphicx}
\usepackage[skip=0pt]{caption}
\usepackage[a4paper, margin=1in]{geometry}
\usepackage{standalone}
\usepackage{amsfonts}
\usepackage{amsmath}
\usepackage{amssymb}
\usepackage{tikz-cd}
\usepackage{hyperref}
\usepackage{blindtext}
\usepackage{enumitem}
\usepackage{braket}
\usepackage{mathtools}
\usepackage{environ}
\usepackage{fancyhdr}
\usepackage{stmaryrd}
\usepackage{tikz-cd}
\usepackage{xcolor}
\usepackage{pgfplots}
\usepackage{multirow}
\usepackage{tikz}
\usepackage{colortbl}
\usepackage{comment}
\usepackage{verbatim}
\usepackage{caption}
\usepackage{booktabs}
\usepackage{float}
\usepackage{subcaption}
\usepackage{bbm}
\usepackage[blocks]{authblk}
\restylefloat{table}

\allowdisplaybreaks
\newtheorem{theorem}{Theorem}[section]
\newtheorem{proposition}[theorem]{Proposition}
\newtheorem{lemma}[theorem]{Lemma}

\newtheorem{definition}[theorem]{Definition}
\newtheorem{example}[theorem]{Example}
\newtheorem{remark}[theorem]{Remark}
\newtheorem{condition}[theorem]{Condition}
\let\olddefinition\definition
\let\oldexample\example
\let\oldremark\remark
\let\oldcondition\condition

\renewcommand{\definition}{\olddefinition\normalfont}
\renewcommand{\example}{\oldexample\normalfont}
\renewcommand{\remark}{\oldremark\normalfont}
\renewcommand{\condition}{\oldcondition\normalfont}
\newenvironment{proof}{\noindent{\bf Proof:}}{$\hfill \Box$ \vspace{10pt}}  

\newcommand\numberthis{\addtocounter{equation}{1}\tag{\theequation}}

\DeclareMathOperator*\esssup{ess\,sup}
\DeclareMathOperator*\essinf{ess\,inf}
\DeclareMathOperator*\argmax{arg\,max}
\DeclareMathOperator\TV{TV}

\DeclareMathOperator\Unif{Unif}
\DeclareMathOperator\clip{clip}

\usepackage[backend=bibtex, style=authoryear, hyperref=true, maxbibnames=99, sorting=nyt]{biblatex}
\title{Content Platform GenAI Regulation via Compensation}
\author{Wee Chaimanowong\footnote{The Chinese University of Hong Kong (chaimanowongw@gmail.com).} }
\date{\today}

\begin{document}

\maketitle

\begin{abstract}
    The use of Generative AI (GenAI) for creative content generation has gained popularity in recent years. GenAI allows creators to generate contents that are increasingly becoming indistinguishable to the human--generated counter--part at a much lower cost. While GenAI reshapes the competitive landscape of the contents market, the original creators were typically not compensated for their works that were used in the GenAI training. On the other hands, the wide--spread adoption of GenAI threatens to replace the human--generated shares of contents on content platforms, contaminating training data source for future GenAI models. In this paper, we argue that an unregulated usage of GenAI can also be harmful to the platform by causing a contents distribution distortion which can lower the consumers' engagement and the platform's profit. We show that a simple economically--driven creator compensation scheme, can incentivize more creation of high--value human--generated contents, without the need for an AI--detector. This reduces the data pollution for future GenAI training, while improves the consumer engagement and the platform's profit
\end{abstract}

\section{Introduction}

The rise of generative AI (GenAI) in recent years has enabled automation of many tasks previously thought to only be possible with human guidances. One area with wide--spread adoption of GenAI is the creative content generations, such as the AI generated videos, images, musics, articles. These GenAI contents are posted, shared, and received consumers responses, alongside other usual content created manually by human creators. Moreover, identifying the content made by GenAI from the one made by humans is becoming more challenging over time as the technology evolves. 

Some of the leading concerns of the GenAI usages trend are as follows. GenAI training requires a large amount of data, and these data were not always obtained with the consent or permission from the original creators, e.g. via web scraping, raising concerns of intellectual property and privacy violation (\cite{schurz2025fair, solove2025great, novelli2024generative}). In particular, the creators are often not compensated for the use of their works in GenAI training, while facing increased economic competition from the GenAI content (\cite{ducru2024ai, pasquale2024consent, kim2026does, oderinwale2025economics}). It was shown in \cite{kim2026does} that the negative impact, in term of the loss of viewers' attention, is particularly significant among the creators who do not adopt GenAI, although some creators managed to leverage GenAI in production cost reduction. The spread of GenAI contents also raises concerns about \emph{data pollution}. This negatively impacts the development of future GenAI models as training on a dataset contaminated with other GenAI output, instead of purely human content, can degrade the model quality in a phenomenon known as \emph{model collapse} (\cite{shumailov2023curse, bertrand2023stability}). Lastly, the adoption of GenAI creates a distributional distortion of contents, or \emph{content homogenization} (\cite{anderson2024homogenization, doshi2024generative}), or more colloquially known as `AI--slop' (\cite{ansari2025ai}), which could worsen the consumers' experience and engagement (\cite{liu2025generative}). Roughly speaking, no GenAI models are perfectly trained, and some types of content will be disproportionally generated more than others, reducing the overall contents diversity, over--saturating a certain segment of the market while under--serving some other niches.

To enable GenAI development while addressing the aforementioned challenges, human generated content should continue to be promoted, and the creators should be compensated for any economically valuable contribution to future training dataset. We review some of the current approaches in the following.

\emph{Ex--ante revenue redistribution:} One solution is the pay--to--train compensation model, such as Shutterstock Contributor Fund (\cite{ducru2024ai}), where the platform (Shutterstock) earns revenue by licensing the dataset to AI developers and redistributing the revenue back to the creators. How the redistribution should be done is a topic of discussion in itself. A basic form is to redistribute a percentage of the revenue to the contributing creators proportionally to the contributed volume, which is the case for the Shutterstock Contributor Fund (\cite{ducru2024ai}). Rather than relying on series of negotiations between AI developers and data providers, which leads to fragmented coverage of data while potentially leaving out many independent creators behind, some government bodies, such as the European Union, have also been considering a statutory licensing option (\cite{peukert2025eu}). The economics and mechanism design aspects of data purchases have been covered in various works (\cite{ai2025genai,oderinwale2025economics,liang2018survey,agarwal2019marketplace, zhang2025fairshare, liu2025online, keinan2025strategic}). A key assumption underlining the approach for most of these works is that quality can be determined or controlled in the procurement of any specific volume of data. In particular, the platform may be required to have a certain control over GenAI usages, such as a robust AI--detection tool, otherwise the platform may unintentionally encourage creators to submit a large volume of GenAI contents to gain a larger share of compensation, worsening the data pollution. 





\emph{Ex--post revenue redistribution:} Another intuitive compensation model is to share revenue from any GenAI output with the creators on the platform proportional to each creator's influence on the output (\cite{ducru2024ai, bonnet2025fair}). While this approach provides an incentive for creators to sustain the high quality content production, the challenge remains to quantifies the influence of each creator's work on the GenAI output. An economic approach is \emph{Data Shapley} (\cite{ghorbani2019data, wang2024economic, lee2025faithful}); however, its practical consideration is limited to revenue sharing among few major data providers, as it is computationally expensive and often involves re--training the GenAI model on several data subsets. Some progress has been made to improve the computability of Data Shapley, for example, by first training an explainer model (\cite{sun2025fast}), or by embedding the Shapley value calculation as part of the training process (\cite{wang2024data}). A related concept in machine learning literature is that of \emph{influence function}, a useful attribution method which quantifies, via perturbative up--weighting, the impact of a training data point on the trained parameters (\cite{koh2017understanding, zhu2025revisiting}). Influence function is expensive to compute for GenAI models due to the need to invert a large Hessian, hence much of the machine learning literature on the topic focuses on technical challenges of computation or approximation efficiency (\cite{kwon2023datainf, park2023trak, park2025concept}). Computation aside, both Data Shapley and influence functions rely on a choice of the utility function or the loss function; therefore, more consensus remains on what is considered a fair metric and how to best translate them into monetary economic influence or compensation (\cite{ducru2024ai}). Furthermore, the validity of the influence function as a proxy for the data point importance in large GenAI models remains controversial (\cite{li2024influence}), a factor to consider before a policy--level implementation.

\emph{AI--detection:} Many of the challenges we have discussed, from preventing model collapse via verified data curation to combating misinformation, could be addressed with a robust AI--detection tool. As GenAI models becomes more advanced, the detection becomes more challenging. The existing detection methods often lack generalizablity, performs poorly on samples from different GenAI models, and are not robust against post--processing such as scaling or rotating of the content (\cite{zhang2025unmasking, xiao2025high}). A different approach mandated by some major AI companies is to include certain \emph{content credentials}, such as watermark or meta data to content generated by their GenAI models (\cite{hilbert2026watermarking, balan2025framework}). Although this approach has gained traction from policy makers recently, there remain key technical difficulties, for example, some content credentials can easily be removed by editing or taking a screenshot (\cite{nemecek2025watermarking}).

In this work, we take an alternative route to examine the problem of how to maintain a content platform where human creators and GenAI co--exists. We consider a model where the platform is economically incentivized to regulate the level of GenAI contents to avoid the decline in consumers' engagement due to content distributional distortion from the consumers' preference. We focus on the setting where the access to GenAI is democratized, the platform has no regulation power on GenAI access and has no AI--detection capability (or that the GenAI contents are indistinguishable from the human generated counter--part). We will show that with no intervention, GenAI will be widely adopted by most creators, some market segments will be flooded with GenAI contents while some niche contents will be under--supplied, resulting in a lower overall consumers' engagement. We will consider a simple economically--driven compensation scheme which does not rely on AI--detection nor any computationally intensive mechanism. We will show that by redistributing some of the revenue back to a certain portion of the creators via the compensation scheme, the platform can encourage more manual creation of high--value contents, improve its profit, and reduce the data pollution for future GenAI training in the process.

Our model is closely related to \cite{yao2024human}, however their focus is on establishing the equilibrium competition outcome of content creators when GenAI is available, whereas our focus will be on the platform's regulation of the said competition. The impact of GenAI has been an active research topic in business and economics in recent years, we review some notable related works as follows. We note that although most works in this area focus on vertical differentiation, our work focuses exclusively on horizontal differentiation between GenAI and human--generated contents. \cite{luo2025platform} examined the platform's fine--tuning of GenAI for consumers, with and without compensation to creators, and found that although the creators' welfare is higher with compensation, the platform's profit is maximized without compensation. They argued that compensation policy choice is a trade--off between profitability and equitability. \cite{gao2025pandora} analyzed how GenAI, which helps creators improve content quality and enables repositioning freedom, effects the market outcome of the creative platform, subject to the platform's GenAI usage penalty. They found that although the contents' quality can increase under GenAI, the creativity and welfare may decrease. \cite{zou2026express} explored how GenAI reduces the quality--gap among creators and showed that the impact on welfare is not always positive. \cite{yan2024creative} argued that GenAI enables creators to shift their focus from execution to ideation, thus, the quality--gap depends on whether the high--skill creators are ideation--savvy or execution--savvy. \cite{chen2025designing} showed that the most accurate AI--detector does not necessarily lead to the best outcome for the platform. \cite{wu2026self} studied the mandatory self--disclosure policy for GenAI usage as a supplement to the AI--detector, and found that such policy is useful when GenAI is not fully mature. \cite{yuan2025generative} argued that the platform should charge fees for GenAI usages, otherwise the mass adoption of GenAI by low--quality creators would drive high--quality creators to leave the platform. \cite{zhang2025democratizing} examined changes in the nature of competition between GenAI and human--generated content subject to variation in GenAI's creativity and quality.



\section{Model}

In the following, we let $\mathcal{X} \subset \mathbb{R}^d$ be compact. We write $\mathcal{D}(\mathcal{X})$ for the space of upper--semicontinuous densities with a positive lower--bound over $\mathcal{X}$, and $\mathcal{P}(\mathcal{X})\subset \mathcal{D}(\mathcal{X})$ for the subset of probability densities (density normalized to one). We introduce our base model, where the game is played for a single period. Later, the generalization to a multi--period model with short--lived consumers and creators will be straightforward.

\subsection{Basic Setups}

Let $\mathcal{X} \subset \mathbb{R}^d$ be the space of content preferences of consumers and creators on the platform in the given period.  A unit mass continuum of both creators and consumers are distributed over $\mathcal{X}$ according to the probability distribution with density $p \in \mathcal{P}(\mathcal{X})$. We will refer to a consumer or creator with preference $x \in \mathcal{X}$ as `consumer $x$' or `creator $x$'. 

For convenience, the space $\mathcal{X}$ will also serve as the phase space of contents. Each creator $x \in \mathcal{X}$ can create at most one new content using one of the following creation action $a(x) \in \{\text{H}, \text{AI}, \text{O}\}$:

\begin{itemize}
    \item \emph{Creates a content manually ($\text{H}$ -- `Human--generated content'):} The creator $x$ creates a new original content of the same type as her preference at a production cost $c > 0$. In other words, the content manually created by the creator $x$ is $x \in \mathcal{X}$.
    \item \emph{Use GenAI ($\text{AI}$):} We assume that a GenAI model is available and it is represented by a distribution with a density $g \in \mathcal{P}(\mathcal{X})$ over $\mathcal{X}$. The creator $x$ generates a new content using GenAI by randomly draws $y \in \mathcal{X}$ from the distribution $g$, independent of $x$, at a production cost of zero. 
    \item \emph{Do nothing ($\text{O}$ -- `Outside option'):} The creator exits without creating any new content.
\end{itemize}

The creation strategy of the creator $x\in \mathcal{X}$ can be formalized as:
\begin{equation*}
    \beta(x) := (\beta_{\text{H}}(x), \beta_{\text{AI}}(x), \beta_{\text{O}}(x)) \in \Delta\{\text{H}, \text{AI}, \text{O}\}, \quad \beta_a(x) := \mathbb{P}[a(x) = a], \ \forall a \in \{\text{H}, \text{AI}, \text{O}\}.
\end{equation*}
Each creator decides on the strategy once enters the platform without observing other creators or consumers. Given the creators' creation strategy $\beta$, and that creators are distributed across the space of preference with density $p$, we obtain the (potentially not normalized) density $q$ of the contents on the platform in the given period (see (\ref{generaldecomposedformwitho})). 

\subsection{Revenue Production}

We now discuss the revenue of the platform and creators, to motivate the definition we present a heuristic derivation. Instead of a continuum, let us first assume the platform consists $N$ consumers i.i.d. drawn from the distribution $p$, and $N$ contents i.i.d. drawn from the distribution of contents $q$. Each consumer $x$ will only engage with any content $y$ in some close preference neighborhoods $x+\Delta x \subset \mathcal{X}$. This could be by design, shaped by the platform's recommender system, or by consumers' self--selection through search settings. Then there are $N p(x)\Delta x$ consumers and $Nq(x)\Delta x$ contents in the neighborhood $x + \Delta x$. A unit of revenue is generated when a consumer $x$ positively engages with, or `likes', the content $y$. For each unit revenue generated, the platform receives a $\gamma \in [0,1]$ share while the creator $y$ receives a $1-\gamma$ share. 

A consumer $x$ may not like all the contents within the scope of her preference: $x+\Delta x$, instead, the `like' decision is idiosyncratic, depends on various factors such as the search pattern on the particular platform visit. We model the number of likes using a matching function (\cite{pissarides2000equilibrium}): $\mathcal{M} : \mathbb{Z}_{\geq 0}\times \mathbb{Z}_{\geq 0}\rightarrow \mathbb{R}_{\geq 0}$. In general, $\mathcal{M}(n,m)$ represents the (expected) number of matches between $n$ people in the first group and $m$ people in the second group. A typical choice of matching function is the Cobb--Douglas form $\mathcal{M}(n,m) := An^{\alpha}m^{1-\alpha}$ for some $\alpha \in (0,1)$. For our purposes, we will choose $A = 1$ so that the revenue generated from the consumers $Np(x)\Delta x$ and the contents $Nq(x)\Delta x$ is 
\begin{equation*}
    (Np(x)\Delta x)^\alpha (Nq(x)\Delta x)^{1-\alpha} = p(x)^\alpha q(x)^{1-\alpha}\cdot N\Delta x.
\end{equation*}
Note that for a fixed number of consumers, the number of likes is concave in the number of contents, reflecting the fact that each consumer has a limited time and attention, and the search becomes less efficient when the amount of contents becomes over--saturated. Meanwhile, for a fixed number of contents, the number of likes is concave in the number of consumers, reflecting the fact that the platform with a limited contents lacks sufficient variety and appeal to a large number of consumers.

Since there are $Nq(x)\Delta x$ contents from $Nq(x)\Delta x$ creators collectively generating a revenue of $p(x)^\alpha q(x)^{1-\alpha}\cdot N\Delta x$ in the neighborhood $x+\Delta x$, if we assume all contents in $\Delta x$ to have an equal chance of being liked, then the competitive share of revenue for the content $y \in x+\Delta x$ is $p(x)^\alpha q(x)^{1-\alpha}\cdot N\Delta x/(Nq(x)\Delta x) = (p(x)/q(x))^\alpha$. Motivated by this heuristic derivation, returning to the continuum limit with $N\rightarrow \infty, \Delta x\rightarrow 0$, we define the creator's revenue from the content $y \in \mathcal{X}$ after the platform commission to be
\begin{equation}\label{creatorrevenue}
    V(y;q) := (1-\gamma)\left(\frac{p(y)}{q(y)}\right)^\alpha.
\end{equation}
For the platform, if $\mathcal{X}$ is partitioned into small neighborhoods, then the revenue per consumer is found by summing the revenue share from all the neighborhoods: $\frac{\gamma}{N}\sum p(x)^\alpha q(x)^{1-\alpha} \cdot N\Delta x$. Motivated by this, returning to the continuum limit, we define the platform's revenue per consumer to be $\gamma\int_{\mathcal{X}}p(x)^\alpha q(x)^{1-\alpha}dx$.

\subsection{Compensation Schemes}

In addition to the $1-\gamma$ share of revenue per `like' from the consumer for the creator, which may be monetary or simply a social gain, the platform can also redistribute its revenue back to the creators based on the pre--defined rule we refer to as the platform's compensation scheme.

\begin{definition}
    A platform \emph{compensation scheme} is a function $W : \mathcal{X}\times \mathcal{D}(\mathcal{X})\rightarrow \mathbb{R}_{\geq 0}$. In addition to the usual share of revenue, the creator behind the content $y$ receives a compensation of $W(y;q)$ from the platform at the end of the period.
\end{definition}

Here, we present the definition of a compensation scheme in full generality; we will leave the question of practicality to later. We will assume throughout that the platform has no problem delivering payments to the creator $x\in \mathcal{X}$ behind any given content $y\in \mathcal{X}$. We believe this is a reasonable assumption as any creator must have created an account with the platform before posting the content. What the platform cannot directly observe is the content creation process, whether $x$ used GenAI or created the content $y$ manually. Since we either consider a single--period game, or multi--period game with short--lived creators, given a single observable content $y$ with an unobservable production process, the assumption does not help the platform identify the true location (the true preference) of $x$.

\subsection{Summary and Equilibrium Concept}

The compensated revenue for the content $y\in \mathcal{X}$ under the compensation scheme $W$ is:
\begin{equation}
    V^W(y;q) := V(y;q) + W(y;q).
\end{equation}
Therefore, the profit of creator $x\in \mathcal{X}$ from manual creation versus using GenAI are:
\begin{equation}\label{manualandaiutility}
    U^W(x;q) := V^W(x;q) - c, \quad \text{and} \quad V^W(g;q) := \int_{\mathcal{X}}V^W(y;q) g(y)dy,
\end{equation}
respectively. Finally, the platform's revenue and profit per consumer are given by:
\begin{equation}\label{platformrevenue}
    R^W(q) := \gamma\int_{\mathcal{X}}p(x)^\alpha q(x)^{1-\alpha}dx, \qquad \Pi^W(q) = R^W(q) - \int_{\mathcal{X}}W(x;q)q(x)dx,
\end{equation}
respectively. We drop the superscript $W$ to refer to each of the aforementioned quantities under no compensation: $W = 0$, e.g. $V(g;q) := V^{W = 0}(g;q) = \int_{\mathcal{X}}V(y;q)g(y)dy$.

We summarize the game timeline in the given period as follows. The probability densities $p$ and $g$ are exogenously given, but they are not known to the consumers, the creators, or the platform. First, the platform decides the compensation scheme $W$. The continuum of consumers and creators distributed across $\mathcal{X}$ by $p$ enters the platform, and each creator $x$ decides a creation strategy $\beta(x) \in \Delta\{\text{H}, \text{AI}, \text{O}\}$ according to the common belief on the expected revenue of each option\footnote{Each creator could form the belief based on past experience or observation of peer's performance.}: $\text{H}:U^W(x;q)$, $\text{AI}:V^W(g;q)$, and $\text{O}:0$. Finally, the revenue generated from each content is publicly observed, shared between the platform and the creators, and the platform compensates each creator as specified by the compensation scheme. The equilibrium concept is as follows:
\begin{definition}\label{equilibriumdefinition}
    Consider the expected profit of a creator $x\in \mathcal{X}$ with a creation strategy $\beta = \beta(x)\in \Delta\{\text{H}, \text{AI}, \text{O}\}$ under the platform compensation scheme $W$ when the contents on the platform is distributed by $q$:
    \begin{equation*}
        \pi^W(x, \beta; q) = \beta_{\text{H}}(x)\cdot U^W(x;q)+ \beta_{\text{AI}}(x)\cdot V^W(g;q),
    \end{equation*}
    The pair $(\beta, q)$, where $\beta : \mathcal{X}\rightarrow \Delta\{\text{H}, \text{AI}, \text{O}\}$ is a creation strategy and $q\in \mathcal{D}(\mathcal{X})$, is a \emph{mean field equilibrium} under the platform's compensation scheme $W$ if
    \begin{equation}
        \beta(x) \in \argmax_{\beta\in \Delta\{\text{H}, \text{AI}, \text{O}\}} \pi^W(x,\beta;q), \qquad \forall x \in \mathcal{X}
    \end{equation}
    and 
    \begin{equation}\label{generaldecomposedformwitho}
        q(x) = \beta_{\text{H}}(x)p(x) + g(x)\int_{\mathcal{X}}\beta_{\text{AI}}(y)p(y)dy, \qquad \forall x \in \mathcal{X}.
    \end{equation}
\end{definition}

We remark that, given an arbitrary compensation $W$, the existence of an equilibrium as in Definition \ref{equilibriumdefinition} is not guaranteed (see Proposition \ref{equilibriumnonexistenceproposition}). It is clear from H\"{o}lder inequality that:
\begin{equation*}
    \Pi^W(q) \leq \gamma\int_{\mathcal{X}}p(x)^\alpha q(x)^{1-\alpha}dx \leq \gamma\left(\int_{\mathcal{X}}p(x)dx\right)^\alpha\left(\int_{\mathcal{X}}q(x)dx\right)^{1-\alpha} \leq \gamma,
\end{equation*}
therefore, the best possible outcome for the platform is if $q = p$ without any non--trivial compensation. As we will see, this is not possible when GenAI is available with $g\neq p$, thus, choosing an optimal compensation scheme becomes crucial for the platform. Lastly, we note that the Definition \ref{equilibriumdefinition} allows for a non--normalized $q$, i.e. $\int_{\mathcal{X}}q(x)dx \leq 1$. The first term in (\ref{generaldecomposedformwitho}) represents the density contribution from manual creation at each $x$, the second term represents the distortion of the content distribution by GenAI, while the density from creators who choose the outside option contributes nothing to $q$.

\subsection{Discussion and Comments}

With GenAI, a new content can be generated with zero effort by randomly drawing from $g$. This reflects the reality that GenAI content is easy to make but impersonal, as the creator has limited control over the generation. The key assumption is that the creator will not put any significant additional effort into content customization. It is possible for a creator to use GenAI as a creative assistant and customize the final product to align with their creative vision, but we assume that the effort cost from doing so would also not be zero. In our view, creators who use GenAI as a creative assistant should be considered a separate category from both manual and GenAI creators, which we will not consider in our main model. 

Although our model is similar to that of \cite{yao2024human}, we note the important conceptual distinction. In the setting of \cite{yao2024human}, different points in the set $\mathcal{X}$ correspond to different `topic' (e.g. sports, musics, education, etc.), whereas it is important that in our settings, $\mathcal{X}$ represents different preferences of the same topic. For example, our $\mathcal{X}$ could represents the space of preferences for (and the phase space of) pictures of `a dog playing with a tennis ball'. This justifies our assumption that $p$ and $g$ are unknown to all parties, and why creators cannot have more control over GenAI generation via simple prompt engineering to target a specific `topic' that appears more profitable: drawing new content $y\sim g(.|\text{prompt})$. Instead, our $\mathcal{X}$ is already the part of the platform's market after filtered by a specific prompt such as `a dog playing with a tennis ball'. This still allows substantial variations in details such as the `dog breed', `posture', `background environment', and overall style, etc. A consumer who previously searched for `Shiba Inu' would then be recommended pictures of `a Shiba Inu playing with a tennis ball' by the platform when entering the market. The creators can also further refine the prompt to be more specific to better reflect her personal preference, and we can further refine the meaning of $\mathcal{X}$, but as argued in the previous paragraph, at some point the effort costs will also not be zero.


\section{Pre--GenAI Equilibrium}\label{generativeaisection}

We consider a baseline setting when GenAI is not yet available. In this case, the creators' action set simplifies to $\{\text{H}, \text{O}\}$. The following result shows that if the revenue share for creators is sufficiently high, then all creators will participate, leading to an optimal profit. However, if the revenue share is not sufficient, then we show that a flat--rate compensation can be effective at improving the profit. This baseline result illustrates the traditional usage of compensation, to improve the social welfare by subsidizing the production costs in a certain market of contents with high creative or artistic values.

\begin{lemma}\label{pregenailemma}
    Suppose that GenAI is not available. Consider a pair $(\beta, q)$ where $\beta(x) := (\beta_{\text{H}}(x), \beta_{\text{O}}(x))$ is given by: 
    \begin{equation}\label{pregenaibeta}
        \beta_{\text{H}}(x) := \min\left\{1, \left(\frac{1-\gamma}{\max\{c - W(x;q),0\}}\right)^{1/\alpha}\right\}, \qquad \beta_{\text{O}}(x) := 1 - \beta_{\text{H}}(x).
    \end{equation}
    and $q(x) = \beta_{\text{H}}(x)p(x)$, then $(\beta, q)$ is the unique equilibrium under the platform compensation scheme $W : \mathcal{X}\times \mathcal{D}(\mathcal{X})\rightarrow \mathbb{R}_{\geq 0}$. If $1-\gamma \geq c$ then $q = p$ and the platform achieves the best possible profit $\Pi^W(q) = \gamma$ with $W = 0$. If $1-\gamma < c$ then the platform's optimal compensation scheme is given by $W^*(x;q) = \max\left\{0,c\frac{\gamma-\alpha}{1-\alpha}\right\}$ for all $(x,q)\in \mathcal{X}\times \mathcal{D}(\mathcal{X})$ and the corresponding optimal profit is $\Pi^{W^*}(q) = \min\{\gamma, \alpha\}\left(\frac{1-\min\{\gamma, \alpha\}}{c}\right)^{1/\alpha-1}$.
\end{lemma}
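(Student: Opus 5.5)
The plan is to reduce everything to a pointwise problem at each $x\in\mathcal{X}$. Without GenAI, (\ref{generaldecomposedformwitho}) becomes $q(x)=\beta_{\text{H}}(x)p(x)$. Hence
\begin{equation*}
V^W(x;q)=(1-\gamma)\beta_{\text{H}}(x)^{-\alpha}+W(x;q),
\end{equation*}
so creator $x$'s manual payoff $U^W(x;q)=(1-\gamma)\beta_{\text{H}}(x)^{-\alpha}+W(x;q)-c$ depends only on her own mixing probability. This is a congestion effect at the point $x$. For fixed $W(x;q)$ I will then do a case analysis on $\beta_{\text{H}}(x)$.

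\emph{Case $\beta_{\text{H}}(x)=0$.} This is excluded, because $q\in\mathcal{D}(\mathcal{X})$ must have a positive lower bound. Equivalently, $p/q=\infty$ would make manual creation strictly better than the outside option.

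\emph{Case $\beta_{\text{H}}(x)\in(0,1)$.} The creator must be indifferent, $U^W(x;q)=0$, which forces $\beta_{\text{H}}(x)^{\alpha}=(1-\gamma)/(c-W)$. This requires $c-W>1-\gamma$.

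\emph{Case $\beta_{\text{H}}(x)=1$.} Here we need $U^W\geq 0$, i.e.\ $1-\gamma\geq c-W$.

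Since $\beta\mapsto(1-\gamma)\beta^{-\alpha}$ is strictly decreasing, exactly one case applies, and it gives (\ref{pregenaibeta}), including the $\max\{\cdot,0\}$ for the case $W\geq c$. This establishes both existence and uniqueness. When $W$ depends on $q$, I will read (\ref{pregenaibeta}) as the fixed-point characterisation: any equilibrium must satisfy it and conversely, so the argument above is unchanged. When $1-\gamma\geq c$ and $W=0$ we get $\beta_{\text{H}}\equiv1$, so $q=p$, and the Hölder bound preceding the lemma shows that $\Pi=\gamma$ is optimal.

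For the case $1-\gamma<c$, I will reparametrise the platform's choice by the induced $b(x):=\beta_{\text{H}}(x)$ instead of by $W$. Any equilibrium outcome with $b(x)<1$ requires $W(x)=c-(1-\gamma)b(x)^{-\alpha}$. Nonnegativity of $W$ gives $b(x)\geq b_0:=((1-\gamma)/c)^{1/\alpha}$. Substituting into (\ref{platformrevenue}) gives
\begin{equation*}
\Pi^W=\int_{\mathcal{X}}p(x)\bigl[\gamma b^{1-\alpha}-Wb\bigr]dx=\int_{\mathcal{X}}p(x)\bigl[b(x)^{1-\alpha}-c\,b(x)\bigr]dx .
\end{equation*}
The $\gamma$ terms cancel, so the platform effectively internalises the whole revenue minus the production cost. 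The outcome $b=1$ is reached most cheaply with $W=c-(1-\gamma)$, which is the same formula evaluated at $b=1$. It therefore suffices to maximise $f(b)=b^{1-\alpha}-cb$ pointwise over $b\in[b_0,1]$, and no $x$- or $q$-dependence of $W$ can do better than this pointwise optimum.

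Since $f$ is concave, its unconstrained maximiser is $b^*=((1-\alpha)/c)^{1/\alpha}$. If $\alpha\geq\gamma$, then $b^*\leq b_0$, so the optimum is $b_0$ with $W=0$, and the profit is $\gamma b_0^{1-\alpha}$. If $\gamma>\alpha$, the optimum is $b^*$, and back-substitution gives $W^*=c-(1-\gamma)c/(1-\alpha)=c(\gamma-\alpha)/(1-\alpha)$ and profit $\alpha (b^*)^{1-\alpha}$. Both cases combine into the stated formula with $\min\{\gamma,\alpha\}$.

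The step I expect to be delicate is the upper constraint $b\leq1$. If $\gamma>\alpha$ but $c<1-\alpha$, then $b^*>1$, and the cap binds. In that regime the optimum should instead be $b=1$ with $W=c-(1-\gamma)$ and profit $1-c$. I will therefore check whether the hypothesis $1-\gamma<c$ together with the intended parameter regime ensures $c\geq1-\alpha$ when $\gamma>\alpha$. If it does not, I will state the extra condition, or the capped value, explicitly in the proof.
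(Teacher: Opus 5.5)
Your equilibrium characterization is the same case analysis the paper uses (indifference when $c-W>1-\gamma$, full participation otherwise). For the optimization you take a genuinely different route. The paper substitutes the indifference relation and maximizes the integrand $(c\gamma-W)(c-W)^{-1/\alpha}$ directly in $W$, reading off the sign of $c(\gamma-\alpha)-(1-\alpha)W$. You instead parametrize by the induced participation $b=\beta_{\text{H}}$, observe that the $\gamma$-terms cancel so the platform effectively maximizes total surplus $b^{1-\alpha}-cb$, and optimize over the explicit feasible set $[b_0,1]$. The two computations agree whenever the interior optimum is feasible. Your formulation, however, makes visible the constraint $b\le 1$, i.e.\ $W\le c-(1-\gamma)$. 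The paper's proof never imposes this: it only restricts to $W\in[0,c\gamma]$ and applies (\ref{pregenaiplatformprofit}) even where that expression would require $\beta_{\text{H}}>1$. For $W>c-(1-\gamma)$ one actually has $\beta_{\text{H}}=1$ and profit $\gamma-W$.

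The check you deferred fails: $1-\gamma<c$ does not force $c\ge 1-\alpha$ when $\gamma>\alpha$. For $1-\gamma<c<1-\alpha$ the lemma as stated is false. Concretely, take $\alpha=1/2$, $\gamma=0.9$, $c=0.2$.
\begin{itemize}
\item The stated optimal profit is $0.5\cdot(0.5/0.2)=1.25>\gamma$, contradicting the H\"{o}lder bound $\Pi^W\le\gamma$.
\item The stated $W^*=0.16$ exceeds $c-(1-\gamma)=0.1$, so it induces $q=p$ and earns only $0.9-0.16=0.74$.
\item $W=0.1$ earns $1-c=0.8$, which is the true maximum by your concavity argument: $f$ is increasing on $[b_0,1]$ because $b^*=6.25>1$.
\end{itemize}
So your proof should conclude with the capped branch:
\[
W^*=\min\Bigl\{\max\bigl\{0,\,c\tfrac{\gamma-\alpha}{1-\alpha}\bigr\},\,c-(1-\gamma)\Bigr\}.
\]
The optimal profit is then $1-c$ when $c<1-\alpha$, and the paper's formula when $c\ge 1-\alpha$. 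The latter case is automatic if $\gamma\le\alpha$, since then $c>1-\gamma\ge 1-\alpha$.
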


In the case where $1-\gamma \leq c - W(x;q)$, we remark that $q(x) = \beta_{\text{H}}(x)p(x) = \left(\frac{1-\gamma}{c - W(x;q)}\right)^{1/\alpha}$ is a functional equation for $q$ because $W(x;q)$ can depend on $q$. If the functional equation is not satisfied, then $(\beta, q)$ is not an equilibrium. In fact, not every choices of $W$ admits an equilibrium \footnote{For example, consider $W(x;q) := (c+\gamma-1)\mathbbm{1}[V(x;q)>1-\gamma]$ when $1-\gamma < c$. If $\beta_{\text{H}}(x) < 1$ then $V(x;q) > 1-\gamma$ and $U^W(x;q) > 1-\gamma+(c+\gamma-1)-c = 0$. But $(\beta_{\text{H}}=1,q=p)$ cannot be an equilibrium, since $W(x;q) = 0$ and $U^W(x;q) = 1-\gamma-c<0$.}. If $W$ admits an equilibrium $(\beta, q)$ then the platform's profit $\Pi^W(q)$ is given by the expression (\ref{pregenaiplatformprofit}) in the proof of Lemma \ref{pregenailemma}. But such an expression for $\Pi^W(q)$ is also valid for any arbitrary $W : \mathcal{X}\times \mathcal{D}(\mathcal{X})\rightarrow \mathbb{R}_{\geq 0}$, and we have shown in the proof of Lemma \ref{pregenailemma} that the global maximum of $\Pi^W(q)$ is achieved with a constant compensation: $W^*(x;q) = \max\left\{0, c\frac{\gamma-\alpha}{1-\alpha}\right\}$. Under constant compensation $W^*$ the equation $q(x) = \left(\frac{1-\gamma}{c - W^*}\right)^{1/\alpha}$ is trivial, therefore, we may ignore the issues of the existence of equilibrium when looking for the optimal compensation scheme, under which the platform's profit is as given in Lemma \ref{pregenailemma}.


The equilibrium content distribution $q$ can be thought of as the training data for the first generation of GenAI model that will be available in the next period. An example that fits the context of our work is an AI company that scrapes the contents on the platform to construct a distribution $g$. In practice, the training set will consist of a finite $N$ data points i.i.d. drawn from $q$ which can be used to train GenAI based on many available models such as VAE (\cite{kingma2013auto}), GAN (\cite{goodfellow2014generative}), or diffusion models such as DDPM (\cite{sohl2015deep}), SMLD (\cite{song2019generative}), and their generalization, the score-based SDE diffusion model (\cite{song2020score}). Each of these models represents different parametric ways that one can construct and sample the density $g$ that approximates $p$. Our consideration of $g$ will be model--free, hence independent of the fast--evolving technical details. However, it can be argued using a non--parametric lower--bound that with a finite $N$ data points, the trained $g$ will not be identical to $p$, even if $q = p$, or $q$ is proportional to $p$ as suggested in Lemma \ref{pregenailemma}. For example, if the SDE--based model is used, $p$ is $\beta$--Sobolev and $d$ is the dimension of the samples, then it was proven in \cite{zhang2024minimax} that the minimax risk of the total variation distance $\mathbb{E}\TV(p,g)$ is bounded by $N^{-2\beta/(2\beta+d)}(\log N)^C \rightarrow 0$ for some constant $C > 0$. 

\section{Compensation Schemes}\label{compensationschemesection}

For the remainder of this work, we assume that a GenAI model $g$ is available to all creators. We will study the choice of platform's compensation scheme, then characterize the equilibrium outcome and the resulting platform's profit. With GenAI, the `Outside option' is dominated by GenAI which always give a positive revenue, hence we can effectively restrict the creators' creation action set to $\{\text{H}, \text{AI}\}$. Since $\beta_{\text{H}}(x) = 1 - \beta_{\text{AI}}(x)$ and $\beta_{\text{O}}$ for any creation strategy $\beta$, to simplify the notation, we will refer to $\beta$ by $\beta_{\text{AI}}$ and write $\beta(x) = \beta_{\text{AI}}(x)$. In particular, (\ref{generaldecomposedformwitho}) simplifies to:
\begin{equation}\label{generaldecomposedform}
    q(x) = (1-\beta(x))p(x) + g(x)\cdot \int_{\mathcal{X}}\beta(y)p(y)dy,
\end{equation}
and $q$ is automatically normalized: $\int_{\mathcal{X}}q(x)dx = 1$. 

\subsection{Revenue--Threshold Compensation Schemes}

Based on our on--going discussion, we seek a compensation scheme satisfying the following properties:
\begin{itemize}
    \item \emph{Encourages economically valuable human--generated contents:} We follow--up on the idea of an ex--post revenue redistribution in the similar spirit to that of Data Shapley or influence function: compensation to the creators proportional to the contribution level and the revenue generated from the content. This is unlike the ex--ante revenue redistribution where the dataset are priced based on some past performance metrics, and could be vulnerable to spam submission of GenAI contents or low--quality human--generated contents. In our context, if the creator $y\in \mathcal{X}$ can generates more revenue than the creator $x\in\mathcal{X}$ from a manual creation under the current market condition, then $y$ should be more likely to create a manual content and receive more compensation compared to $x$ under our compensation scheme.

    \item \emph{Computationally feasible and inexpensive:} We can broadly summarize the main challenge of designing compensation allocation to creators in the era of GenAI to be the problem of quantifying the level of contribution. The mentioned attribution techniques such as Data Shapley and influence function address this problem in theory, but the main implementation hurdle remains computational efficiency. Meanwhile, a robust AI--detector would enable a direct GenAI content regulation as the platform can choose to reward a creator who generates a content manually or punish a creator who uses GenAI at--will, however, the feasibility of such a detector remains a question. Therefore, the compensation scheme we consider will not attempt to identify the creation process of any given content $x\in \mathcal{X}$.
\end{itemize}

We stress that our work is distinct from the literature on data attribution or procurement mechanism in that the majority of the literature is from the point--of--view of the GenAI provider, whereas we focus on the policy decision of the platform with no control over the GenAI model. In this sense, we are not proposing a compensation scheme to replace existing data attribution techniques. However, in a broader sense, we are addressing the common problem of data pollution due to GenAI and promoting human--generated contents with high economic value.

Let us analyze the required properties as follows. The distribution $q$ represents the current market condition on the platform. Suppose that a creator $x$ can generate $V(x;q)$ and $y$ can generate $V(y;q) \geq V(x;q)$ from creating a content manually. If $x$ is compensated $W(x;q)$, then $y$ should be compensated $W(y;q) \geq W(x;q)$, to reflect the greater economic contribution and to ensure a greater incentive for manual content production. However, the platform should only offer a compensation if it impacts a creator decision, and it should offers the minimum amount to do so. This means, if $V(x;q) - c + W(x;q) < V^W(g;q)$, then it is better for the platform to offer $W(x;q) = 0$ to $x$, and if $V(y;q)\geq V(x;q)$ then the platform should offer $W(y;q) \leq W(x;q)$ because it takes less compensation for $y$ manual creation to break--even with the GenAI option. It follows that the compensation amount is identical for all the recipient creators $x,y$: $W(x;q) = W(y;q)$. Finally, we assume the platform has no knowledge of $p$, $g$, or AI--detection capability. Consequently, the platform does not differentiate between a content $y$ created by a creator $y$ and a content $y$ created by a creator $x$ using GenAI. This is a typical moral hazard problem, the creators' creation decision cannot be observed or contracted, additionally, since the platform has no knowledge of $p$ or $g$, the compensation contract must entirely be in terms of the ex--post revenue $V(y;q)$. 

This motivates us to study the \emph{revenue--threshold} compensation scheme $W^{\underline{v}, w} : \mathcal{X}\times \mathcal{D}(\mathcal{X})\rightarrow \mathbb{R}_{\geq 0}$ where the creator $x$ receives a compensation $w \geq 0$ if her raw revenue is at least $\underline{v}\geq 0$: 
\begin{equation}\label{compensationformula}
    W^{\underline{v},w}(x;q) := w\cdot \mathbbm{1}[V(x;q) \geq \underline{v}].
\end{equation}
We will typically refer to $W^{\underline{v}, w}$ compensation scheme simply as a $(\underline{v}, w)$ compensation scheme. The threshold $\underline{v}$ can also be thought of as a shield against paying compensation to random `flops' contents, or any potential deliberate spam contents. We can also consider lowering the compensation $w>0$ for any content $x\in \mathcal{X}$ with $V(x;q)+w-c > V^{\underline{v}, w}(g;q)$ since this does not change the creator's creation decision. However, as we will later see in Lemma \ref{xhisemptylemma}, when GenAI is available, no creator strictly prefers manual creation at equilibrium under any $(\underline{v}, w)$ compensation scheme, i.e. we have $V^{\underline{v}, w}(x;q) \leq V^{\underline{v}, w}(g;q)+c$ for all $x \in \mathcal{X}$. In fact, as we will argue in \S\ref{equilibriumanalysissection}, the only critical parameter in the design of a revenue--threshold compensation scheme is the threshold $\underline{v}$, from which the compensation amount largely follows. Therefore, the platform's problem of optimal revenue--threshold compensation scheme selection can be reduced to an optimization problem in a single variable $\underline{v}$ (see Proposition \ref{platformrevennueproposition}). The following result summarizes the optimal decision rule for each creator under $(\underline{v}, w)$ given the belief $q$ of the platform's content distribution.

\begin{lemma}[Creators' Decision Under a Revenue--Threshold Compensation Scheme]\label{decisionundercompensationlemma}
    Suppose that the common belief that the platform's content distribution density is given by $q\in \mathcal{D}(\mathcal{X})$, then we characterize the creators' creation decision under the platform's compensation scheme $(\underline{v}, w)$ as follows.
    \begin{enumerate}
        \item If $\underline{v} \leq V^{\underline{v}, w}(g;q) + c\leq \underline{v}+w$ then the creator $x \in \mathcal{X}$ strictly prefers manual creation if $V(x;q) > \underline{v}$ and strictly prefers to use GenAI if $V(x;q) < \underline{v}$. 
        \item If $\underline{v}+w=V^{\underline{v}, w}(g;q) + c$ then the creator $x$ with $V(x;q) = \underline{v}$ is indifferent between manual creation and using GenAI. If $\underline{v}+w > V^{\underline{v}, w}(g;q) + c$, then the creator $x$ with $V(x;q) = \underline{v}$ strictly prefers manual creation.
        \item If $\underline{v} + w < V^{\underline{v}, w}(g;q) + c$ then the profit maximization creation decision of any creator $x \in \mathcal{X}$ under $(\underline{v}, w)$ is also a profit maximization creation decision under $(\tilde{\underline{v}}, \tilde{w})$ where $\tilde{\underline{v}} := V^{\underline{v}, w}(g;q) + c - w$ and some $\tilde{w} \in [0,w]$ chosen such that $\tilde{\underline{v}}+\tilde{w} = V^{\tilde{\underline{v}}, \tilde{w}}(g;q) + c \geq \tilde{\underline{v}}$.
        \item If $\underline{v} > V^{\underline{v}, w}(g;q) + c$ any creator $x\in\mathcal{X}$ who strictly prefers GenAI under no compensation $(\underline{v}_0, w_0)$, where $\underline{v}_0 = V(g;q)+c, w_0 = 0$, also strictly prefers GenAI under $(\underline{v}, w)$. Additionally, if none of the creators $x \in \mathcal{X}$ strictly prefers manual creation then the profit maximization creation decision of any $x \in \mathcal{X}$ under $(\underline{v}, w)$ is also a profit maximization creation decision under $(\underline{v}_0, w_0)$.
    \end{enumerate}
\end{lemma}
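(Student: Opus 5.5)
The whole argument rests on one comparison. For a fixed belief $q$, creator $x$ compares $U^{W}(x;q) = V(x;q) + w\mathbbm{1}[V(x;q)\geq \underline{v}] - c$ against $V^{\underline{v},w}(g;q)$, where the latter does not depend on $x$. Write $T := V^{\underline{v},w}(g;q) + c$. Then $x$ strictly prefers H iff $V(x;q) + w\mathbbm{1}[V(x;q)\geq\underline{v}] > T$, and strictly prefers AI iff the reverse strict inequality holds. Each case of the lemma is obtained by splitting on whether $V(x;q)$ lies below $\underline{v}$, equals it, or exceeds it, and comparing with $T$.

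For items 1 and 2, assume $\underline{v}\le T\le \underline{v}+w$.
\begin{itemize}
\item If $V(x;q)<\underline{v}$, the payoff from H is $V(x;q)-c$. Since $V(x;q) < \underline{v}\le T$, the creator strictly prefers AI.
\item If $V(x;q)>\underline{v}$, the payoff from H is $V(x;q)+w-c$. Since $V(x;q)+w>\underline{v}+w\ge T$, the creator strictly prefers H.
\item If $V(x;q)=\underline{v}$, the comparison is between $\underline{v}+w$ and $T$. This gives indifference when they are equal and a strict preference for H when $\underline{v}+w > T$, which is item 2.
\end{itemize}

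For item 3, assume $\underline{v}+w<T$. Every creator then has $V(x;q)+w\mathbbm{1}[\cdot]\le V(x;q)+w$. A creator prefers H only if $V(x;q)\ge T-w=:\tilde{\underline{v}}>\underline{v}$, so her compensation indicator is already $1$. Hence the decision rule is the threshold rule: H iff $V(x;q)>\tilde{\underline{v}}$ (weakly at equality), AI otherwise.

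The step I expect to be the main obstacle is producing $\tilde w$. I would define $\phi(s) := V^{\tilde{\underline{v}},s}(g;q)+c-\tilde{\underline{v}}-s$ for $s\in[0,w]$. Since $V^{\tilde{\underline{v}},s}(g;q)=V(g;q)+s\,G$ with $G:=\int g\,\mathbbm{1}[V(\cdot;q)\ge\tilde{\underline{v}}]$, the function $\phi$ is affine in $s$.
\begin{itemize}
\item At $s=w$: the threshold set $\{V\ge \tilde{\underline{v}}\}$ is contained in $\{V\ge\underline{v}\}$, so $V^{\tilde{\underline{v}},w}(g;q)\le V^{\underline{v},w}(g;q)=T-c$. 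This gives $\phi(w)\le 0$.
\item At $s=0$: here $\phi(0)=V(g;q)+c-\tilde{\underline{v}} = V(g;q)+w-V^{\underline{v},w}(g;q)\ge 0$, because $V^{\underline{v},w}(g;q)\le V(g;q)+w$.
\end{itemize}
The intermediate value theorem then yields $\tilde w\in[0,w]$ with $\phi(\tilde w)=0$, i.e. $\tilde{\underline{v}}+\tilde w = V^{\tilde{\underline{v}},\tilde w}(g;q)+c$. The inequality $\geq\tilde{\underline{v}}$ is immediate because $\tilde w\ge0$.

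By items 1–2 applied to $(\tilde{\underline{v}},\tilde w)$, the rule under the new scheme is: H if $V>\tilde{\underline{v}}$, AI if $V<\tilde{\underline{v}}$, and indifferent at equality. So every optimal action under $(\underline{v},w)$ remains optimal under $(\tilde{\underline{v}},\tilde w)$. At $V=\tilde{\underline{v}}$ the original scheme gives $V+w=T$, which is also indifference.

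For item 4, assume $\underline{v}>T$. A creator with $V(x;q)<\underline{v}$ gets no compensation and compares $V(x;q)$ with $T\ge V(g;q)+c=\underline{v}_0$, since $V^{\underline{v},w}(g;q)\ge V(g;q)$. A creator with $V(x;q)\geq\underline{v}$ strictly prefers H under $(\underline{v},w)$. So a creator who strictly prefers AI under no compensation, i.e. has $V(x;q)<\underline{v}_0\le T<\underline{v}$, gets no compensation and strictly prefers AI here too.

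For the second claim of item 4, suppose no one strictly prefers H. Then $\{V\ge\underline{v}\}$ contains no creators, so every creator has $V(x;q)<\underline{v}$ and satisfies $V(x;q)\le T$. The remaining issue is reconciling $T$ with $\underline{v}_0$: in general $V^{\underline{v},w}(g;q)$ may exceed $V(g;q)$ if $g$ puts mass where $V\ge\underline{v}$. I would argue that any optimal action under $(\underline{v},w)$, which is AI (or either action at $V=T$), is optimal under $(\underline{v}_0,w_0)$. This holds because $V(x;q)\le T$ together with $V(x;q)\le\underline{v}_0$ suffices. I would try to verify the latter by noting that if $V(x;q)>\underline{v}_0$ for some $x$, one needs $T$ compared appropriately; this is the delicate point and may require interpreting the claim up to creators at the indifference boundary.
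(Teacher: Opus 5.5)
Items 1--3 are correct and follow the paper's proof. Your affine $\phi$ with $\phi(0)\ge 0\ge\phi(w)$ is the paper's construction of $\tilde w$, and you spell out the coincidence of the two threshold rules, which the paper only asserts. The first claim of item 4 is also fine. The gap is the second claim of item 4, which you leave open and suggest may need weakening. It does not need weakening.

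You already showed that every $x$ with $V(x;q)\ge\underline{v}$ strictly prefers manual creation under $(\underline{v},w)$. So the hypothesis gives $V(x;q)<\underline{v}$ for every $x\in\mathcal{X}$. The missing observation is that $\mathcal{X}$ is both the creator space and the content space. The compensation $w\,\mathbbm{1}[V(y;q)\ge\underline{v}]$ is evaluated on contents $y\in\mathcal{X}$ with the same function $V(\cdot;q)$. Hence the set $\{y\in\mathcal{X}: V(y;q)\ge\underline{v}\}$ on which you feared $g$ might put mass is empty. So $W^{\underline{v},w}(\cdot;q)\equiv 0$, $V^{\underline{v},w}(g;q)=V(g;q)$, and $T=\underline{v}_0$. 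Even under an almost-everywhere reading, this set is $p$-null, hence Lebesgue-null because $p$ is bounded below, hence $g$-null.

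With $T=\underline{v}_0$ and nobody compensated, each creator compares $V(x;q)-c$ with $V(g;q)$ under both schemes. The two best-response correspondences are therefore identical, including at the indifference boundary. This is the paper's argument: it notes $V(x;q)\le V^{\underline{v},w}(x;q)\le V^{\underline{v},w}(g;q)+c<\underline{v}$ for all $x$ and concludes $V^{\underline{v},w}(g;q)=V(g;q)$.

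Without this step your sufficient condition $V(x;q)\le\underline{v}_0$ cannot be verified, and it genuinely matters. If $T$ were strictly larger than $\underline{v}_0$, a creator with $\underline{v}_0<V(x;q)<T$ would strictly prefer GenAI under $(\underline{v},w)$ but strictly prefer manual creation under no compensation, and the claim would fail.
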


\subsection{Generalization and $\mathcal{X}$--based Compensation Schemes}

So far, we motivated the consideration of revenue--threshold class of compensation schemes from the implementation perspective, but it remains a question whether the platform can do much better if it is allowed to consider more general class of compensation schemes. For starters, we observe from (\ref{compensationformula}) that $W^{\underline{v}, w}(x;q)$ is technically a function of $q$, it only depends implicitly on $x$ via $q(x)$. On the opposite end of the spectrum we have a class of compensation schemes which are independent of $q$:

\begin{definition}
    We call a compensation scheme $W : \mathcal{X}\times \mathcal{D}(\mathcal{X}) \rightarrow \mathbb{R}_{\geq 0}$ an \emph{$\mathcal{X}$--based} compensation scheme if $W(x;q) := W(x)$ is independent of $q$.
\end{definition}

A nice property is that an equilibrium always exists under any $\mathcal{X}$--based compensation scheme.

\begin{lemma}\label{xbasedexistencelemma}
    Any $\mathcal{X}$--based compensation scheme $W:\mathcal{X}\times \mathcal{D}(\mathcal{X})\rightarrow \mathbb{R}_{\geq 0}$, $W(x;q) := W(x)$, admits an equilibrium. Further, the expected quantity of GenAI contents is strictly positive at any equilibrium under $W$: i.e. $\int_{\mathcal{X}}\beta(z)p(z)dz > 0$.
\end{lemma}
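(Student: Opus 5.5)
We reduce the existence question to a fixed point in one scalar: the total GenAI mass $m := \int_{\mathcal{X}}\beta(z)p(z)dz \in [0,1]$. Since $W$ does not depend on $q$, once $m$ is fixed the content density is pinned down locally, and each creator's best response depends only on her own location and on the common GenAI payoff. Throughout, we use the action set $\{\text{H},\text{AI}\}$ of \S\ref{compensationschemesection}; the outside option is dominated because $V^W(g;q)>0$.

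\textbf{Construction for a fixed GenAI payoff.} Let $u\ge 0$ stand in for the GenAI payoff $V^W(g;q)$. At location $x$, if manual creation has probability $1-\beta(x)$, the density is $q(x)=(1-\beta(x))p(x)+m g(x)$. The manual payoff is
\[
h_x(\beta):=(1-\gamma)\left(\frac{p(x)}{(1-\beta)p(x)+mg(x)}\right)^{\alpha}+W(x)-c,
\]
which is continuous and strictly increasing in $\beta\in[0,1]$. Indifference means $h_x(\beta)=u$, so the best response is:
\begin{itemize}
\item $\beta(x)=0$ if $h_x(0)\ge u$;
\item $\beta(x)=1$ if $h_x(1)\le u$;
\item otherwise, the unique root of $h_x(\beta)=u$.
\end{itemize}
This gives a function $\beta_{m,u}(x)$ that is continuous and monotone in $(m,u)$. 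Explicitly, the interior case is
\[
q(x)=p(x)\left(\frac{1-\gamma}{u+c-W(x)}\right)^{1/\alpha},
\]
clipped between $mg(x)$ and $p(x)+mg(x)$.

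\textbf{Consistency conditions.} We need two equations to hold simultaneously:
\[
u=\int_{\mathcal{X}}(1-\gamma)\left(\frac{p}{q_{m,u}}\right)^{\alpha}g\,dx+\int_{\mathcal{X}} W g\,dx,
\qquad
m=\int_{\mathcal{X}}\beta_{m,u}\,p\,dx .
\]
A cleaner route is to eliminate $u$ using normalization. Since $\int_{\mathcal{X}} q=1$ automatically, we parametrize by $m$ alone and define $u(m)$ by the first equation. We then show that the self-map $m\mapsto\int_{\mathcal{X}}\beta_{m,u(m)}p$ on $[0,1]$ is continuous and apply the intermediate value theorem, or Brouwer, to $\Phi(m)-m$. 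Continuity follows from dominated convergence, provided $q$ stays bounded below, which we address next.

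\textbf{Positivity of GenAI mass.} Suppose $m=0$. Then $q=(1-\beta)p\le p$, so $V(y;q)\ge 1-\gamma$ on the support of $q$, and $V$ is infinite where $q=0$. Either way, $V^W(g;q)$ is large. More concretely, $\int_{\mathcal{X}} q=1=\int_{\mathcal{X}} p$ together with $q\le p$ forces $q=p$ and $\beta\equiv 0$. In that case the GenAI payoff is
\[
V^W(g;p)=1-\gamma+\int_{\mathcal{X}} Wg\,dx ,
\]
while the manual payoff at $x$ is $1-\gamma+W(x)-c$. Indifference-or-preference for H at every $x$ requires $W(x)-c\ge \int_{\mathcal{X}} Wg$ for all $x$. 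Integrating this against $g$ gives $-c\ge 0$, a contradiction. Hence $m>0$ at every equilibrium.

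This same computation supplies the boundary sign needed for the fixed-point argument: $\Phi(0)>0$, while $\Phi(1)\le 1$ trivially. It also keeps $q\ge mg>0$ away from zero for $m$ bounded below, since $g$ has a positive lower bound.

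\textbf{Main obstacle.} The hard step is making the map $m\mapsto\Phi(m)$ well defined and continuous near $m=0$, where $q$ can vanish and the payoffs blow up. The first thing we will try is to restrict to $m\in[\epsilon,1]$, which we can do because of the positivity argument, and to check that $\Phi(\epsilon)>\epsilon$ for small $\epsilon$. Otherwise we will work directly with the two-variable map $(m,u)$ on a compact rectangle. To secure existence, we will also need to verify that $u(m)$ is well defined: for a given $m$, the payoff $u$ enters $q$ monotonically, and the right-hand side is decreasing in $u$ while the left-hand side is increasing, so $u(m)$ is unique.
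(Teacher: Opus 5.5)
Your reduction to the two scalars $m=\int\beta p$ and $u=V^W(g;q)$ is a genuinely different and more elementary route than the paper's. The paper works in function space:
\begin{itemize}
\item it defines a map on $L^2(\mathcal{X},[0,1])\times L^2(\mathcal{X},[0,\overline{V}_\delta])$ with the weak topology;
\item it floors $\beta$ at $\delta$, so that $q\ge\delta g$ keeps payoffs bounded;
\item it gets a fixed point from Schauder--Tychonoff and then sends $\delta\searrow0$.
\end{itemize}
In that limit it uses essentially your averaging trick ($W(x)\le\mathbb{E}_{y\sim g}W(y)$ on a positive-measure set) to keep $\int\beta^*_\delta p$ away from $0$. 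That map depends on $(\beta,V)$ only through $\int\beta p$ and $\int(V+W)g$. So your scalar reduction plus the intermediate value theorem can replace the functional analysis. Your direct proof that $m=0$ is never an equilibrium is correct, and cleaner than the paper's, which argues along the approximating sequence.

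As written, though, two steps fail.

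\emph{The monotonicity you use for $u(m)$ is backwards.} Raising $u$ raises $\beta_{m,u}$, lowers $q_{m,u}$, and so \emph{raises} $F_m(u):=\int(1-\gamma)(p/q_{m,u})^\alpha g+\int Wg$. The correct argument goes as follows.
\begin{itemize}
\item We have $V(x;q_{m,u})=\clip\bigl(u+c-W(x),a_m(x),b_m(x)\bigr)$, where $a_m=(1-\gamma)\bigl(p/(p+mg)\bigr)^\alpha$ and $b_m=(1-\gamma)(r/m)^\alpha$.
\item Hence $F_m$ is nondecreasing and $1$-Lipschitz, so $u-F_m(u)$ is nondecreasing.
\item It is negative at $u=0$, and tends to $+\infty$ because $b_m$ is bounded for $m>0$.
\item It cannot vanish on an interval: there the unclipped set would carry full $g$-mass, forcing $F_m(u)=u+c$.
\end{itemize}
This gives a unique $u(m)$, continuous on $(0,1]$.

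\emph{The boundary sign $\Phi(\epsilon)>\epsilon$ does not follow from ``the same computation''; this is the real gap.} $\Phi$ is undefined at $m=0$, and knowing that $m=0$ is not an equilibrium says nothing about $\Phi$ at small non-equilibrium $m$. You need the asymptotics as $m\to0$:
\begin{itemize}
\item If $b_m\ge u(m)+c-W$ held $g$-a.e., the clip would give $F_m(u(m))\ge u(m)+c$, which is impossible.
\item So on a set of positive $g$-measure, $(1-\gamma)(r(x)/m)^\alpha<u(m)+c$.
\item Hence $u(m)\ge(1-\gamma)(\inf r/m)^\alpha-c\to\infty$; here $\inf r>0$ since $p$ is bounded below and $g$ above.
\item Then, once $u(m)+c>W(x)$, we get $\beta_{m,u(m)}(x)\ge1-\bigl((1-\gamma)/(u(m)+c-W(x))\bigr)^{1/\alpha}$, which tends to $1$ for every $x$.
\item By dominated convergence, $\Phi(m)\to1$.
\end{itemize}
The intermediate value theorem on $[\epsilon,1]$ then finishes the existence proof. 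Without this step the argument stops exactly at the obstacle you flagged.
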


This is in contrast to the fact that the existence of an equilibrium is not guaranteed under a general compensation scheme $W:\mathcal{X}\times \mathcal{D}(\mathcal{X})\rightarrow \mathbb{R}_{\geq 0}$ due to the potentially discontinuity in $q$. Proposition \ref{equilibriumnonexistenceproposition} shows that revenue--threshold compensation scheme $(\underline{v}, w)$, for a certain range of parameters $\underline{v}, w\geq 0$, provides an example of a compensation scheme without an equilibrium. The following result shows the importance of $\mathcal{X}$--based compensation schemes:

\begin{lemma}[$\mathcal{X}$--based Equivalence]\label{rationalexpectationlemma}
    Suppose that $(\tilde{\beta}, \tilde{q})$ is an equilibrium under the platform's compensation scheme $\widetilde{W} : \mathcal{X}\times\mathcal{D}(\mathcal{X}) \rightarrow \mathbb{R}_{\geq 0}$ then $(\tilde{\beta}, \tilde{q})$ is also an equilibrium under an $\mathcal{X}$--based compensation scheme $W : \mathcal{X}\times \mathcal{D}(\mathcal{X})\rightarrow \mathbb{R}_{\geq 0}$ given by $W(x;q) = W(x) := \widetilde{W}(x;\tilde{q})$.
\end{lemma}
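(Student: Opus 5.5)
This is a direct verification against Definition~\ref{equilibriumdefinition}. The equilibrium conditions are two: creator optimality and the consistency equation (\ref{generaldecomposedform}). Every quantity in them depends on the scheme only through evaluations at the equilibrium density $\tilde q$. So the plan is to show that all payoff--relevant objects agree between $\widetilde W$ and $W$ once we evaluate at $q=\tilde q$.

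First, note that the consistency condition (\ref{generaldecomposedform}) (or (\ref{generaldecomposedformwitho}) when the outside option is kept) involves only $\tilde\beta$, $p$ and $g$. It does not involve the compensation scheme at all, so it holds for $(\tilde\beta,\tilde q)$ verbatim under $W$. We should also check that $W$ is an admissible compensation scheme. It maps $\mathcal{X}\times\mathcal{D}(\mathcal{X})$ into $\mathbb{R}_{\geq 0}$ because $\widetilde W\ge 0$, and it is $\mathcal{X}$--based by construction.

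Second, compare payoffs at $q=\tilde q$. By definition $W(y;\tilde q)=W(y)=\widetilde W(y;\tilde q)$ for every $y\in\mathcal{X}$. Hence
\[
V^{W}(y;\tilde q)=V(y;\tilde q)+W(y;\tilde q)=V^{\widetilde W}(y;\tilde q).
\]
Integrating against $g$ gives $V^{W}(g;\tilde q)=V^{\widetilde W}(g;\tilde q)$, and subtracting $c$ gives $U^{W}(x;\tilde q)=U^{\widetilde W}(x;\tilde q)$. The raw revenue $V(\cdot;\tilde q)$ is unchanged since it does not depend on the scheme. Consequently, for every creator $x$ and every $\beta\in\Delta\{\text{H},\text{AI},\text{O}\}$,
\[
\pi^{W}(x,\beta;\tilde q)=\pi^{\widetilde W}(x,\beta;\tilde q).
\]
The argmax sets therefore coincide, and $\tilde\beta(x)$, being optimal under $\widetilde W$, is optimal under $W$.

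The only step needing any care is making sure the integral $V^W(g;\tilde q)$ is well defined and equal to its counterpart. This is immediate because the integrands agree pointwise, so whatever integrability holds for $\widetilde W(\cdot;\tilde q)g$ holds for $W g$. There is also a subtlety to stress: the equilibrium definition fixes $q=\tilde q$ in the optimality condition and does not require robustness to off--equilibrium $q$. So the fact that $W$ and $\widetilde W$ differ for $q\neq\tilde q$ is irrelevant. With this observation the argument is complete, and I do not expect a genuine obstacle.
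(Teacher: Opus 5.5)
Your proposal is correct and follows essentially the same route as the paper. Because $W(\cdot;\tilde q)=\widetilde W(\cdot;\tilde q)$, the quantities $V^{W}(\cdot;\tilde q)$, $V^{W}(g;\tilde q)$ and $\pi^{W}(\cdot,\cdot;\tilde q)$ coincide with their $\widetilde W$ counterparts, so $\tilde\beta$ remains a best response, and the consistency equation for $\tilde q$ holds verbatim since it does not involve the scheme.
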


The converse of Lemma \ref{rationalexpectationlemma} is not necessary true, i.e. if $(\tilde{\beta}, \tilde{q})$ is an equilibrium under $W$ given by $W(x;q) = W(x) := \widetilde{W}(x;\tilde{q})$, it is not necessary true that $(\tilde{\beta}, \tilde{q})$ is an equilibrium under $\widetilde{W} = \widetilde{W}(x;q)$. Lemma \ref{rationalexpectationlemma} implies that if our objective is to find, under minimal constraints, the compensation scheme for the platform that maximizes the profit at equilibrium, then it is sufficiently general to consider the class of $\mathcal{X}$--based compensation schemes. The next result refines this further by showing that, under a mild condition, the optimal $\mathcal{X}$--based compensation scheme follows a simple description.

\begin{proposition}\label{rthresholdschemeproposition}
    Let $r(x) := p(x)/g(x)$. Suppose that the distribution of $r(x)$ under $p$ is atomless with density, and that $\gamma \leq \frac{2\alpha}{\alpha+1}$. Consider any compensation scheme $\widetilde{W} : \mathcal{X}\times \mathcal{D}(\mathcal{X}) \rightarrow \mathbb{R}_{\geq 0}$ with an equilibrium $(\tilde{\beta}, \tilde{q})$, then there exists an $\mathcal{X}$--based compensation scheme $W : \mathcal{X}\times\mathcal{D}(\mathcal{X})\rightarrow \mathbb{R}_{\geq 0}$ given for some $\underline{r} \in [\inf_{x\in \mathcal{X}}r(x), \sup_{x\in \mathcal{X}}r(x)]$ and $w\geq 0$ by:
    \begin{equation}\label{xbasedrthresholdcompensation}
        W(x;q) = W(x) := w \cdot \mathbbm{1}[r(x) \geq \underline{r}],
    \end{equation}
    with an equilibrium $(\beta, q)$, such that $\Pi^W(q) \geq \Pi^{\widetilde{W}}(\tilde{q})$.
\end{proposition}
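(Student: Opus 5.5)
The plan is to reduce to $\mathcal{X}$--based schemes, rewrite the platform's profit as a functional of the equilibrium creation strategy alone, and then improve that strategy in two moves: make compensation uniform, then rearrange manual creation toward high $r$.

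\textbf{Step 1 (reduction and setup).} By Lemma \ref{rationalexpectationlemma}, $(\tilde\beta,\tilde q)$ is also an equilibrium under the $\mathcal{X}$--based scheme $\widehat W(x):=\widetilde W(x;\tilde q)$, with the same profit $\Pi^{\widehat W}(\tilde q)=\Pi^{\widetilde W}(\tilde q)$. So it suffices to beat an arbitrary $\mathcal{X}$--based scheme at one of its equilibria. Write $B:=\int_{\mathcal{X}}\tilde\beta p\,dx$, which is positive by Lemma \ref{xbasedexistencelemma}, and $A:=V^{\widehat W}(g;\tilde q)$ for the GenAI payoff. From (\ref{generaldecomposedform}),
\begin{equation*}
\frac{p(x)}{\tilde q(x)}=\frac{r(x)}{(1-\tilde\beta(x))r(x)+B}.
\end{equation*}
Hence $V(x;\tilde q)$ is increasing in $r(x)$ for fixed $\tilde\beta(x)$. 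So creators with larger $r$ earn more from manual creation and need less compensation to prefer it.

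\textbf{Step 2 (profit as a functional of $\beta$).} Every creator with $\tilde\beta(x)<1$ must satisfy $V(x;\tilde q)+\widehat W(x)-c\ge A$. I would replace $\widehat W$ by the minimal scheme that still supports $\tilde\beta$:
\begin{itemize}
\item on the manual set, $W(x)=A'+c-V(x;\tilde q)$;
\item elsewhere, $W=0$.
\end{itemize}
Here $A'$ is the new GenAI payoff. Lowering $W$ off the manual set lowers $A'=\int(V+W)g$ because GenAI creators also collect compensation. This in turn lowers the required payments, so the new payoff is a fixed point at or below $A$ and profit weakly increases. The profit then becomes an explicit functional of $\beta$ (equivalently of $B$ and the manual set):
\begin{equation*}
\Pi=\gamma\int p^\alpha q^{1-\alpha}-\int W q,
\end{equation*}
with $A'$ solving a linear equation.

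\textbf{Step 3 (threshold structure).} This step has two parts.
\begin{itemize}
\item Among manual-receiving creators, I would show that making $V(x;q)$ constant is optimal: the indifferent creators mix so that $p/q$ is constant on the compensated set. Then $W\equiv w$ there, and the payments reduce to a constant times the $q$-mass of that set.
\item Holding $B$ fixed, I would use an exchange/rearrangement argument: swapping manual mass from a lower-$r$ location to a higher-$r$ location raises $\int p^\alpha q^{1-\alpha}$ and lowers total compensation. The optimal compensated set is therefore a super-level set $\{r\ge\underline r\}$, and atomlessness of $r$ under $p$ lets me hit any mass exactly.
\end{itemize}
Pointwise first-order conditions in $\beta(x)$ would confirm that interior mixing at constant $p/q$ beats the alternative of giving some creators strictly positive surplus. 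I expect $\gamma\le\frac{2\alpha}{\alpha+1}$ to be exactly the concavity/sign condition that makes this comparison go the right way. I expect this to be the main obstacle, because $A'$ depends globally on $\beta$ through $\int Wg$. I would try first to fix $B$ and $A'$, optimize pointwise, and then check consistency.

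\textbf{Step 4 (equilibrium and comparison).} With $W(x)=w\mathbbm{1}[r(x)\ge\underline r]$ for the resulting $\underline r,w$, Lemma \ref{xbasedexistencelemma} gives an equilibrium. I would verify that the constructed $(\beta,q)$ from Step 3 is such an equilibrium:
\begin{itemize}
\item creators with $r<\underline r$ strictly prefer GenAI, since their $V(x;q)\le A+c$ and they receive no compensation;
\item creators with $r\ge\underline r$ are indifferent and mix as constructed.
\end{itemize}
The chain of weak improvements from Steps 1--3 then gives $\Pi^W(q)\ge\Pi^{\widetilde W}(\tilde q)$. The threshold satisfies $\underline r\in[\inf r,\sup r]$; the degenerate endpoint choices cover the no-compensation case.
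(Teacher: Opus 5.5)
Your outline has the same architecture as the paper's proof (Part 1 equalizes compensation among indifferent creators, Part 2 shifts it toward higher $r$, Part 3 reads off the threshold), recast in terms of $\beta$ rather than perturbations of $W$. The gap is that Step 3, which is the entire content of the proposition, is only asserted. You name the dependence of the GenAI payoff $A'$ on $\beta$ as ``the main obstacle'' and stop, but that dependence is exactly what decides both comparisons.

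\emph{The exchange.} The claim ``lowers total compensation'' requires showing that $A'$ does not rise when manual creation moves to higher $r$. This is the paper's (\ref{vwvwtildeatstar}) and the comparison after it, and you give no argument. The revenue claim also holds only for the paper's specific swap, not for arbitrary pairs. That swap pairs a block of pure GenAI users near $\esssup_{\mathcal{X}_{\text{AI}}} r$ with an equal-$p$-mass block of indifferent creators near $\essinf_{\mathcal{X}_{\text{IN}}} r$; there $p/q=r/B$ on the first block and $p/q<r/B$ on the second. In general, the marginal revenue of manual mass at $x$ is $(1-\alpha)(p/q)^\alpha(x)$, and a larger $r$ does not imply a larger $p/q$ among indifferent creators with different $\beta$.

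\emph{The equalization.} The paper needs Jensen on $\phi$ in (\ref{phihelperfuncdefinition}) together with (\ref{genaiusagequantitybeforeafter})--(\ref{compensationbeforeafter}), which show that averaging raises $M$ and lowers $V^W(g;q)$. Pointwise first-order conditions cannot replace this. Moreover, the competitor you must rule out is non-constant $q/p$ across indifferent creators, not creators earning a surplus.

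\emph{Step 2.} It also fails as stated. Removing compensation from GenAI users lowers $A'$, which can violate $V(x;\tilde q)-c\le A'$ for GenAI users with high $r$. Then $\tilde\beta$ need not remain a best response. The paper's Part 1 makes the same reduction without comment, so this needs an argument in either route.

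\emph{The missing idea.} Suppose $W=0$ on $\mathcal{X}_{\text{AI}}$ and every other creator is exactly indifferent. Then $A'$ depends only on $B$ and $\mathcal{X}_{\text{AI}}$:
\[
A'+c=\frac{(1-\gamma)B^{-\alpha}\int_{\mathcal{X}_{\text{AI}}}p(x)r(x)^{\alpha-1}dx+c}{\int_{\mathcal{X}_{\text{AI}}}p(x)/r(x)\,dx},
\]
which is (\ref{genairevenuebeforeafter}). Moreover, an indifferent creator with $t=q(x)/p(x)$ contributes $\bigl(t^{1-\alpha}-(A'+c)t\bigr)p(x)$ to the platform's profit. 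Two consequences close your Step 3.
\begin{enumerate}
\item Replace $t$ on the indifferent set by its $p$--conditional mean given $r$. This preserves $\int_{\mathcal{X}_{\text{IN}}}q$, hence $B$, $\mathcal{X}_{\text{AI}}$ and $A'$. It keeps $\beta\in[0,1]$, since the box $[B/r,1+B/r]$ depends only on $r$. It keeps $W\ge 0$, since that constraint is $t\ge((1-\gamma)/(A'+c))^{1/\alpha}$. It raises profit by concavity of $t\mapsto t^{1-\alpha}$. This step uses no condition on $\gamma$. The hypothesis $\gamma\le\frac{2\alpha}{\alpha+1}$ is the concavity of the same density viewed as a function of $w$, which is exactly $\phi$; the paper needs it only because it averages $W$ rather than $q/p$.
\item In the swap above at fixed $B$, the numerator and denominator of $A'+c$ both increase. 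The ratio of the increments is a mediant of Cauchy mean-value quotients $(1-\alpha)(1-\gamma)(\xi/B)^\alpha$, with $\xi$ below the largest $r$ in the GenAI block. It is therefore strictly below $A'+c$, by that block's incentive constraint $(1-\gamma)(r/B)^\alpha\le A'+c$, so $A'$ falls.
\end{enumerate}
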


The idea behind Proposition \ref{rthresholdschemeproposition} is simple, a content $y\in \mathcal{X}$ with high $p(y)$ is in high demand, while a low $g(y)$ indicates that $y$ is not easily produced with GenAI, therefore a high $r(y)$ indicates a high benefit for the platform to encourage the creator $y$ to manually creates the content $y$. Although the full formal proof is quite long, here we give a rough outline. Start from an arbitrary $\widetilde{W}$. The atomless assumption allows us to perturbatively adjust $\widetilde{W}$ to $W$, solve for the new equilibrium to the first--order, and show an incremental first--order improvement to the profit. Suppose that at equilibrium $(\tilde{\beta}, \tilde{q})$ under $\widetilde{W}$, if we can find a pair of small subsets $\overline{\Delta}\subset \mathcal{X}$, $\underline{\Delta} \subset \mathcal{X}$ with equal measure and $\inf_{y\in \overline{\Delta}}r(y) > \sup_{x\in \underline{\Delta}}r(x)$ such that creators $y \in \overline{\Delta}$ are more likely to use GenAI than creators $x\in \underline{\Delta}$. Then we argue that a under a new compensation scheme $W$ which re--allocates compensation from creators in $\underline{\Delta}$ to $\overline{\Delta}$ yields an equilibrium $(\beta, q)$ with an incremental profit compared to $(\tilde{\beta}, \tilde{q})$. This is because it takes a lower compensation to incentivize any creators $y \in \overline{\Delta}$ to create content manually with the same probability compared to any creators $x \in \underline{\Delta}$, because $r(y) > r(x)$. On the other hand, the platform gains a greater marginal benefit from the manual creation of $y\in\overline{\Delta}$ compared to $x\in\underline{\Delta}$ since the market is more under--supplied at $y$. This shows that we can replace $\widetilde{W}$ with $W$ such that $\Pi^W(q) \geq \Pi^{\widetilde{W}}(\tilde{q})$, where under $W$, there exists some threshold $\underline{r}$ for the minimum $r(y)$ a content $y \in \mathcal{X}$ needs to be qualified for a compensation, and $W(y) \geq W(x)$ for $x,y\in r^{-1}[\underline{r}, \infty)$ such that $r(y) \geq r(x)$. Now, for any small $\Delta_0 \subset r^{-1}[\underline{r}, \infty)$, the creators in $\Delta_0$ are indifferent between manual creation and using GenAI. We can show that the platform's profit from an indifferent creator is a concave function of the compensation if $\gamma \leq \frac{2\alpha}{\alpha+1}$, hence by Jensen's inequality, we can replace $W|_{\Delta_0}$ with a constant $\mathbb{E}_{y\sim p}[W(y)|\Delta_0]$ and incrementally improve the profit. The condition $\gamma\leq \frac{2\alpha}{\alpha+1}$ holds when the engagement is elastic (high $\alpha$) with respect to the number of consumers and the platform commission rate $\gamma$ is low. A high $\alpha$ indicates the platform is able to present each consumer with contents that match well with her preference, while a typical commission rates of major platforms such as Meta or Youtube is about $\gamma = 0.30$ to $0.45$, which is relatively low (\cite{wu2026self}).

The $\mathcal{X}$--based compensation scheme (\ref{xbasedrthresholdcompensation}) requires the platform to have the ability to compute $r(y) = p(y)/g(y)$ for any given content $y$. This, in fact, suggests that the best possible compensation scheme is the one relies on an AI--detector, since:
\begin{equation*}
    \mathbb{P}[\text{AI}\ |\ y] = \frac{\mathbb{P}[y\ |\ \text{AI}]\cdot \mathbb{P}[\text{AI}]}{\mathbb{P}[y]} = \frac{g(y)\cdot \int_{\mathcal{X}}\beta(z)p(z)dz}{q(y)} \propto \frac{V(y;q)^{1/\alpha}}{r(y)}.
\end{equation*}
In other words, to decide if $y$ is qualified for a compensation under (\ref{xbasedrthresholdcompensation}), the platform can divide the ex--post revenue generated by the content $y$ by the output of AI--detector on $y$ and compare the result with the pre--specified threshold. Thus, the feasibility of computing $r(y)$ is equivalent to that of an AI--detector. The main challenge is to determine $g(y)$. Usually, this can be done if the platform knows which GenAI model likely to have generated $y$ and has a white--box access to (the correct version of) such model, although it would still be computationally demanding. For example, for the score--based SDE diffusion model (\cite{song2020score}), with an access to the trained score function, the platform can write the probability flow ODE and integrate it to get $g(y)$. Realistically, given the variety of third--party GenAI models creators can use, it would be difficult for the platform to determine the correct model to test. Meanwhile, determining $p(y)$ is potentially easier, if the platform has a trained recommender system, then a list of consumers the given content $y$ will be recommended to can be populated. Otherwise, the platform may estimate $p(y)$ from some census data, or market survey.

One of the main equilibrium characterization results in \S\ref{equilibriumanalysissection}, namely Proposition \ref{equilibriumnonexistenceproposition}, is that each revenue--threshold compensation scheme $(\underline{v}, w)$ has an $\mathcal{X}$--based equivalence given by (\ref{xbasedrthresholdcompensation}) with the threshold $\underline{r}(\underline{v})$ given as a function of $\underline{v}$. This shows the revenue--threshold compensation schemes are among the best for platform's profit maximization (in the sense of Proposition \ref{rthresholdschemeproposition}), while being intuitively simple and allowing us to by--pass the computation of $r(x) = p(x)/g(x)$. For the remainder of this paper, we will focus exclusively on revenue--threshold compensation schemes.

\section{Equilibrium Analysis}\label{equilibriumanalysissection}

In this section, we characterize the equilibrium under the platform's $(\underline{v}, w)$ compensation scheme for $\underline{v}, w\geq0$ and a GenAI model $g$ is accessible to all creators. We may decompose the space of all creators as $\mathcal{X} = \mathcal{X}_{\text{AI}}\sqcup\mathcal{X}_{\text{IN}}\sqcup\mathcal{X}_{\text{H}}$, where $\mathcal{X}_{\text{AI}}$ is a subset of creators who strictly prefer GenAI ($\beta(x) = 1$ for all $x\in \mathcal{X}_{\text{IN}})$, $\mathcal{X}_{\text{H}}$ is a subset of creators who strictly prefer to manually generate content ($\beta(x) = 0$ for all $x \in \mathcal{X}_{\text{H}}$), and $\mathcal{X}_{\text{IN}} := \mathcal{X}\setminus (\mathcal{X}_{\text{AI}}\sqcup \mathcal{X}_{\text{H}})$ is a subset of creators who are indifferent between both methods ($\beta(x)\in[0,1]$ for all $x \in \mathcal{X}_{\text{IN}}$). 

If $
\underline{v} \leq V^{\underline{v}, w}(g;q) + c \leq \underline{v} + w$ then from the decision rules we derived in Lemma \ref{decisionundercompensationlemma}, we also have $\mathcal{X}_{\text{AI}} = \{x \in \mathcal{X}\ |\ V(x;q) < \underline{v}\}$, $\mathcal{X}_{\text{H}} = \{x\in \mathcal{X}\ |\ V(x;q) > \underline{v}\}$, and if $\underline{v} + w = V^{\underline{v},w}(g;q) + c$ then $\mathcal{X}_{\text{IN}} = \{x\in \mathcal{X}\ |\ V(x;q) = \underline{v}\}$, therefore the decomposition is nothing more than the level set decomposition of $V(x;q)$. In this case, the creators in $\mathcal{X}_{\text{AI}}$ receive no compensation, while the creators in $\mathcal{X}\setminus \mathcal{X}_{\text{AI}}$ each receive the same compensation of $w$. We begin with a simple observation that, in fact, no creator will strictly prefer to manually create content over using GenAI at equilibrium. 

\begin{lemma}\label{xhisemptylemma}
    We have $\mathcal{X}_{\text{H}} = \emptyset$ and $\mathcal{X}_{\text{AI}}$ has a positive measure at any equilibrium under a platform compensation scheme $(\underline{v}, w)$.
\end{lemma}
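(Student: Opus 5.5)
The plan has two parts, in the reverse of the order of the statement: first that $\mathcal{X}_{\text{AI}}$ has positive measure, then that $\mathcal{X}_{\text{H}}=\emptyset$. Throughout, write $B:=\int_{\mathcal{X}}\beta(z)p(z)dz$ and $m:=V^{\underline{v},w}(g;q)+c$.

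The structural fact I will use is that under a $(\underline{v},w)$ scheme the compensated revenue is a function of the ratio alone: $V^{\underline{v},w}(x;q)=\varphi\big(p(x)/q(x)\big)$ with
\begin{equation*}
\varphi(t):=(1-\gamma)t^\alpha+w\,\mathbbm{1}\big[(1-\gamma)t^\alpha\geq \underline{v}\big],
\end{equation*}
which is nondecreasing in $t$. The equilibrium conditions then read
\begin{equation*}
\varphi(p/q)>m \text{ on } \mathcal{X}_{\text{H}}, \qquad \varphi(p/q)<m \text{ on } \mathcal{X}_{\text{AI}}, \qquad \varphi(p/q)=m \text{ on } \mathcal{X}_{\text{IN}},
\end{equation*}
and (\ref{generaldecomposedform}) gives $q=p+gB$ on $\mathcal{X}_{\text{H}}$ and $q=gB$ on $\mathcal{X}_{\text{AI}}$.

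\emph{Positivity of $\mathcal{X}_{\text{AI}}$.} Suppose $B=0$. Then $\beta=0$ $p$-a.e., so $q=p$ and $V(\cdot;q)\equiv 1-\gamma$. Hence $V^{\underline{v},w}(\cdot;q)$ equals a constant $K$, and therefore $V^{\underline{v},w}(g;q)=K$ as well. Every creator then has $U^{\underline{v},w}(x;q)=K-c<K=V^{\underline{v},w}(g;q)$, so every creator strictly prefers GenAI, which contradicts $B=0$. Thus $B>0$. To upgrade this to positive measure of $\mathcal{X}_{\text{AI}}$ itself (not merely of $\{\beta>0\}$), I will combine this with $\mathcal{X}_{\text{H}}=\emptyset$ proved below: if in addition $\mathcal{X}_{\text{AI}}$ were null, then $\varphi(p/q)=m$ a.e. The same averaging identity as in the next step then gives $V^{\underline{v},w}(g;q)=m=V^{\underline{v},w}(g;q)+c$, which is impossible since $c>0$.

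\emph{Emptiness of $\mathcal{X}_{\text{H}}$.} Since $B>0$, I can eliminate $g$ using (\ref{generaldecomposedform}): $g=\big(q-(1-\beta)p\big)/B$. This gives
\begin{equation*}
V^{\underline{v},w}(g;q)=\frac{1}{B}\left[\int_{\mathcal{X}}q\,\varphi(p/q)\,dx-\int_{\mathcal{X}}(1-\beta)p\,\varphi(p/q)\,dx\right].
\end{equation*}
In the second integral the weight $(1-\beta)p$ lives only on $\mathcal{X}_{\text{H}}\cup\mathcal{X}_{\text{IN}}$, where $\varphi\geq m$. The first integral splits as $\int q\varphi=\int (1-\beta)p\,\varphi+B\int g\varphi$, consistent with the identity above. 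The plan is to play this against the normalization $\int q=\int(1-\beta)p+B=1$. On $\mathcal{X}_{\text{H}}$ I have $p/q=r/(r+B)<1$ with $r=p/g$, so $\varphi(p/q)\leq 1-\gamma+w$ there. I also intend to invoke case 4 of Lemma \ref{decisionundercompensationlemma}, which handles the regime $\underline{v}>m$, and case 3, which lets me reduce to a scheme with $\underline{v}+w=m$. These two cases cover all parameter configurations in which $\mathcal{X}_{\text{H}}$ is described by the revenue inequality $V(x;q)>\underline{v}$ of case 1.

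\emph{Main obstacle.} The hardest step is the lower bound on $V^{\underline{v},w}(g;q)$ that rules out $\varphi(p/q)>m$ on a nonempty set. The level-set picture is internally consistent, so the contradiction has to come from a global averaging argument rather than a pointwise one. What I will try first is a perturbation. If $\mathcal{X}_{\text{H}}\neq\emptyset$, then upper semicontinuity of $p$ and the positive lower bound on densities should give a set of positive measure on which $\varphi(p/q)>m$. Because $q\geq gB$ everywhere, GenAI output also lands on that set with positive probability, and on it GenAI users earn the same $\varphi(p/q)$ as manual creators while not paying $c$. I will then try to show that this $g$-weighted gain, together with $\varphi\geq m$ on $\mathcal{X}_{\text{IN}}$, forces $V^{\underline{v},w}(g;q)\geq\sup_{\mathcal{X}}\varphi(p/q)-c$. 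That inequality is exactly the statement that no creator strictly prefers manual creation. If this averaging fails in general, the fallback is to check it separately in each of the cases 1--4 of Lemma \ref{decisionundercompensationlemma}.
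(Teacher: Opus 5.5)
You prove $B>0$ correctly, and your closing step also agrees with the paper: a null $\mathcal{X}_{\text{AI}}$ would force $V^{\underline{v},w}(g;q)=V^{\underline{v},w}(g;q)+c$. The central claim $\mathcal{X}_{\text{H}}=\emptyset$, however, is never proved. The inequality you aim for, $V^{\underline{v},w}(g;q)\geq\sup_{\mathcal{X}}\varphi(p/q)-c$, merely restates the conclusion. The identity you get by eliminating $g$ is circular, since it re-expresses the same $g$-average. The perturbation route would not get there either, for two reasons:
\begin{itemize}
\item Upper semicontinuity of $p$ makes superlevel sets closed rather than open, and $q$ near a point of $\mathcal{X}_{\text{H}}$ depends on the arbitrary measurable $\beta$. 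So nothing forces $\mathcal{X}_{\text{H}}$ to have positive measure; an isolated upward spike of $p$ is allowed in $\mathcal{D}(\mathcal{X})$.
\item Even on a positive-measure set, the $g$-average defining $V^{\underline{v},w}(g;q)$ also picks up $\mathcal{X}_{\text{AI}}$, where $\varphi(p/q)<m$. Your sketch does not control that contribution, so no lower bound on the average follows.
\end{itemize}

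The missing idea is a pointwise comparison followed by the mass normalization you already wrote down, not a revenue average. Take $y\in\mathcal{X}_{\text{H}}$ and $x\notin\mathcal{X}_{\text{H}}$. Then $\varphi(p(x)/q(x))\leq m<\varphi(p(y)/q(y))$. Since $\varphi$ is nondecreasing, this forces $p(x)/q(x)<p(y)/q(y)=r(y)/(r(y)+B)<1$. So $q>p$ on $\mathcal{X}\setminus\mathcal{X}_{\text{H}}$. It also holds on $\mathcal{X}_{\text{H}}$, because $q=p+gB$ there with $B>0$ and $g$ bounded below. Integrating gives $\int_{\mathcal{X}}q(x)\,dx>\int_{\mathcal{X}}p(x)\,dx=1$, contradicting $\int_{\mathcal{X}}q(x)\,dx=1$.

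This is the paper's proof. It absorbs your $B=0$ case as the subcase where $\mathcal{X}\setminus\mathcal{X}_{\text{H}}$ is null, handled by the same deviation argument you give. It needs neither positive measure of $\mathcal{X}_{\text{H}}$ nor any case split via Lemma \ref{decisionundercompensationlemma}. Only monotonicity of $\varphi$ and $\int_{\mathcal{X}}q=\int_{\mathcal{X}}p$ are used.
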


Lemma \ref{xhisemptylemma} is consistent with the rapid and widespread adoption of GenAI. The decomposition now simplifies to $\mathcal{X} = \mathcal{X}_{\text{AI}} \sqcup \mathcal{X}_{\text{IN}}$. The following shows that the availability of GenAI always distorts the equilibrium content distribution $q$ away from $p$, unless the GenAI model is perfectly trained: $g = p$, which is not possible in practice as discussed in \S\ref{generativeaisection}. Consequently, the platform's revenue will always be suboptimal $R^W(q) = \gamma\int_{\mathcal{X}}p(x)^\alpha q(x)^{1-\alpha}dx < \gamma$, but we will show how compensation scheme can offer improvement regardless.

\begin{lemma}["AI-slop"]\label{aislopresult}
If $g\neq p$ then there exists no equilibrium with $q = p$ under any platform compensation scheme $(\underline{v}, w)$.
\end{lemma}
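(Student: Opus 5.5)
We argue by contradiction. Suppose $(\beta, q)$ is an equilibrium under some $(\underline{v}, w)$ with $q = p$, and $g \neq p$. Write $m := \int_{\mathcal{X}}\beta(y)p(y)dy$ for the mass of GenAI contents. By Lemma \ref{xhisemptylemma}, $\mathcal{X}_{\text{AI}}$ has positive measure. Since $p$ has a positive lower bound, this gives $m \geq \int_{\mathcal{X}_{\text{AI}}}p(y)dy > 0$.

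Next we substitute $q = p$ into the consistency condition (\ref{generaldecomposedform}):
\begin{equation*}
    p(x) = (1-\beta(x))p(x) + m\, g(x), \quad\text{hence}\quad \beta(x)p(x) = m\, g(x), \qquad \forall x\in\mathcal{X}.
\end{equation*}
Because $g \in \mathcal{P}(\mathcal{X})$ also has a positive lower bound and $m>0$, this forces $\beta(x) > 0$ for every $x$. On $\mathcal{X}_{\text{AI}}$ we have $\beta = 1$, so $p = m g$ there.

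Now we use that $q = p$ makes the raw revenue constant: $V(x;p) = (1-\gamma)(p(x)/p(x))^\alpha = 1-\gamma$ for all $x$. Under the $(\underline{v},w)$ scheme, the compensation $W^{\underline{v},w}(x;p) = w\mathbbm{1}[1-\gamma\geq \underline{v}]$ is therefore the same constant for all $x$. So $V^{\underline{v},w}(x;p)$ is constant in $x$, and we denote it by $\bar V$. Then $V^{\underline{v},w}(g;p) = \bar V$ as well. Every creator compares $\bar V - c$ (manual creation) with $\bar V$ (GenAI). Since $c>0$, GenAI is strictly preferred by everyone, so $\beta \equiv 1$ and $\mathcal{X} = \mathcal{X}_{\text{AI}}$. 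Lemma \ref{decisionundercompensationlemma} is not needed here, because the payoffs are identical across $x$.

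With $\beta\equiv 1$ we get $m = \int_{\mathcal{X}} p = 1$. The identity above then gives $p(x) = g(x)$ for all $x$, which contradicts $g\neq p$. Hence no equilibrium with $q = p$ exists.

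The only delicate point is to make sure the step "$V(\cdot;p)$ constant implies $W$ constant" really applies to the given revenue–threshold form (\ref{compensationformula}). It does, because $W^{\underline{v},w}(x;q)$ depends on $x$ only through $V(x;q)$. The comparison $\bar V - c < \bar V$ then rules out any indifferent or manual creator. If the densities are read as a.e. objects, the same argument goes through with "for all $x$" replaced by "for almost every $x$".
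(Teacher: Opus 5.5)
Your proof is correct and is essentially the paper's argument: $q=p$ forces $V(\cdot;p)\equiv 1-\gamma$, so the threshold compensation is the same constant for every $x$; GenAI then strictly beats manual creation by $c>0$, which gives $\beta\equiv 1$ and $q=g\neq p$. Your first step, which uses Lemma \ref{xhisemptylemma} to get $m>0$ and $\beta>0$, is harmless but unnecessary, since $\beta\equiv 1$ already gives $m=1$.
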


When $\underline{v} \leq V^{\underline{v}, w}(g;q) + c \leq \underline{v} + w$, it is clear that the threshold $\underline{v}$ plays an important role in characterizing the decomposition $\mathcal{X} = \mathcal{X}_{\text{AI}}\sqcup\mathcal{X}_{\text{IN}}$, and hence, the equilibrium. Unlike in the pre--GenAI case, the decision of one $x$ also impacts the revenue of the other $x'$ distant away. Therefore, it appears more challenging to understand $x$ creation decision based on $p(x)/q(x)$, i.e. to directly compare $V(x;q)$ with the threshold $\underline{v}$ and $V^{\underline{v}, w}(g;q)$, since $q(x)$ depends on decision of all other creators at equilibrium. Instead, let us introduce some additional definitions before we proceed. For any given revenue level $\underline{v} \geq 1-\gamma$, let $r(x) := p(x)/g(x)$ and define $\underline{r}(\underline{v})\geq 0$ to be the solution to:
\begin{equation}\label{definingbarrbarv}
    \int_{\mathcal{X}}\max\left\{\frac{\underline{r}(\underline{v})}{r(y)}, 1\right\} p(y)dy = \left(\frac{\underline{v}}{1-\gamma}\right)^{1/\alpha}.
\end{equation}
Note that the LHS is continuous and strictly monotonically increasing in $\underline{r}(\underline{v})\in [\inf_{x \in \mathcal{X}}r(x), \infty)$ with a range $[1,\infty)$, thus, the equation (\ref{definingbarrbarv}) has a unique solution given $\underline{v} \geq 1-\gamma$. It is clear that $\underline{r}(\underline{v})$ is continuous and strictly monotonically increasing in $\underline{v}$. Additionally, we define:
\begin{equation}\label{mgmpvbarvdefinition}
    \begin{aligned}
     M_g(\underline{v}) &:= \int_{r^{-1}[0,\underline{r}(\underline{v}))}p(y)^\alpha g(y)^{1-\alpha}dy, \qquad M_p(\underline{v}) := \left(\frac{1-\gamma}{\underline{v}}\right)^{1/\alpha}\int_{r^{-1}[\underline{r}(\underline{v}), \infty)}p(y)dy\\
     \widetilde{V}^{\underline{v}}(g) &:= \frac{\underline{r}(\underline{v})}{1-M_p(\underline{v})}\left(\frac{1-\gamma}{\underline{v}}\right)^{1/\alpha}\left(\frac{1-\gamma}{\underline{r}(\underline{v})^\alpha}M_g(\underline{v})+c\right)-c,     
    \end{aligned}
\end{equation}
for $\underline{v} \geq 1-\gamma$. We can see that $M_g(\underline{v})$ is monotonically increasing while $M_p(\underline{v})$ is monotonically decreasing as a function of $\underline{v}$, however they might not be continuous in $\underline{v}$, especially if $r(x)$ is not atomless under $p$, i.e. informally, $r$ has some `flat region'.

\begin{proposition}\label{generalequilibriumproposition}
    Let $r(x) := p(x)/g(x)$ and $\underline{v}$ be given such that
    \begin{equation}\label{admissibleconditionforvbar}
        1-\gamma < \underline{v} < (1-\gamma)\sup_{x\in\mathcal{X}}r(x)^\alpha, \quad \text{and} \quad \underline{v} \leq \widetilde{V}^{\underline{v}}(g)+c
    \end{equation}
    then $(\beta, q)$, where
    \begin{equation}\label{generalequilibriumform}
    \begin{aligned}
        q(x) &= \left(\frac{1-\gamma}{\underline{v}}\right)^{1/\alpha}\left(p(x)\cdot \mathbbm{1}[r(x) \geq \underline{r}(\underline{v})] + \underline{r}(\underline{v}) g(x)\cdot \mathbbm{1}[r(x) < \underline{r}(\underline{v})]\right)\\
        \beta(x) &= \left(1 - \left(\frac{1-\gamma}{\underline{v}}\right)^{1/\alpha}\left(1- \frac{\underline{r}(\underline{v})}{r(x)}\right)\right)\cdot \mathbbm{1}[r(x) \geq \underline{r}(\underline{v})] + \mathbbm{1}[r(x) < \underline{r}(\underline{v})]
    \end{aligned},
    \end{equation}
    is an equilibrium under the platform's compensation scheme $(\underline{v}, w)$ where $w := \widetilde{V}^{\underline{v}}(g) + c - \underline{v}$. It also follows that $\mathcal{X}_{\text{AI}} = r^{-1}[0,\underline{r}(\underline{v}))$, $\mathcal{X}_{\text{IN}} = r^{-1}[\underline{r}(\underline{v}), \infty)$, and $V^{\underline{v}, w}(g;q) = \widetilde{V}^{\underline{v}}(g)$.
    
    Conversely, if $(\beta, q)$ is an equilibrium under a platform's compensation scheme $(\underline{v}, w)$ characterized by $\mathcal{X} = \mathcal{X}_{\text{AI}}\sqcup \mathcal{X}_{\text{IN}}$ such that $\mathcal{X}_{\text{IN}} \neq \emptyset$ and $\underline{v} \leq V^{\underline{v},w}(g;q) + c = \underline{v} + w$, then $1-\gamma < \underline{v} \leq (1-\gamma)\sup_{x\in \mathcal{X}}r(x)^\alpha$, $V^{\underline{v}, w}(g;q)=\widetilde{V}^{\underline{v}}(g)$ and $(\beta, q)$ is given by (\ref{generalequilibriumform}).
\end{proposition}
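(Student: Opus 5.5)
The plan is to verify the forward direction by direct substitution, using the consistency equation (\ref{generaldecomposedform}) and the decision rules of Lemma \ref{decisionundercompensationlemma}. The converse then follows by reading the same computation backwards, with Lemma \ref{xhisemptylemma} restricting the possible structure. Throughout, write $k := \left(\frac{1-\gamma}{\underline{v}}\right)^{1/\alpha}$, $\underline{r} := \underline{r}(\underline{v})$, $A := r^{-1}[0,\underline{r})$ and $B := r^{-1}[\underline{r},\infty)$.

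\emph{Forward direction.} I would proceed in five steps.

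\textbf{Step 1 (admissibility of $\beta$).} Since $\underline{v} > 1-\gamma$ we have $k<1$. On $B$ we have $r \geq \underline{r}$, so $\beta(x) = 1 - k(1-\underline{r}/r(x)) \in [0,1]$. The condition $\underline{v} < (1-\gamma)\sup r^\alpha$ is used to guarantee $\underline{r} < \sup r$, so that $B$ is nonempty. I expect it is also needed for $\underline{r} > \inf r$, so that $A$ has positive measure, as Lemma \ref{xhisemptylemma} demands.

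\textbf{Step 2 (normalization).} The defining equation (\ref{definingbarrbarv}) says precisely that $k\left(\int_B p + \underline{r}\int_A g\right) = 1$, so $\int q = 1$.

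\textbf{Step 3 (consistency of $q$ with $\beta$).} Using Step 2, compute
\begin{equation*}
\int\beta p = 1 - k\int_B (p - \underline{r} g) = k\underline{r}.
\end{equation*}
Then (\ref{generaldecomposedform}) holds: on $A$ it gives $q = k\underline{r}\,g$, and on $B$ it gives $k(p-\underline{r} g) + k\underline{r}\,g = kp$.

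\textbf{Step 4 (revenues).} On $B$, $V(x;q) = (1-\gamma)k^{-\alpha} = \underline{v}$. On $A$, $V(x;q) = (1-\gamma)\big(r(x)/(k\underline{r})\big)^\alpha < (1-\gamma)k^{-\alpha} = \underline{v}$. Hence the compensated region is exactly $B$.

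\textbf{Step 5 (identify $V^{\underline{v},w}(g;q)$).} Write
\begin{equation*}
V^{\underline{v},w}(g;q) = (1-\gamma)(k\underline{r})^{-\alpha}M_g(\underline{v}) + (\underline{v}+w)\int_B g.
\end{equation*}
Impose $\underline{v}+w = V^{\underline{v},w}(g;q)+c$. By Step 2, $1-\int_B g = \int_A g = (1-M_p(\underline{v}))/(k\underline{r})$. Solving the resulting linear equation for $V^{\underline{v},w}(g;q)$ gives the closed form (\ref{mgmpvbarvdefinition}), i.e.\ $\widetilde{V}^{\underline{v}}(g)$; I will carefully track the power of $k$ attached to the $M_g$ term. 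This is a linear fixed point, so it defines $w := \widetilde{V}^{\underline{v}}(g)+c-\underline{v}$ consistently, and $w \geq 0$ is exactly the hypothesis $\underline{v}\leq \widetilde{V}^{\underline{v}}(g)+c$. Parts 1 and 2 of Lemma \ref{decisionundercompensationlemma} then show that creators in $A$ strictly prefer GenAI and creators in $B$ are indifferent, so $\beta$ is a best response.

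\emph{Converse.} By Lemma \ref{xhisemptylemma}, $\mathcal{X} = \mathcal{X}_{\text{AI}}\sqcup\mathcal{X}_{\text{IN}}$ and $\mathcal{X}_{\text{AI}}$ has positive measure. By the level-set description following Lemma \ref{decisionundercompensationlemma}, $V(x;q) = \underline{v}$ on $\mathcal{X}_{\text{IN}}$, so $q = kp$ there. On $\mathcal{X}_{\text{AI}}$, $\beta = 1$, so $q = mg$ with $m := \int\beta p$. Setting $\underline{r}' := m/k$, the following characterize the two sets.
\begin{itemize}
\item The strict inequality $V < \underline{v}$ on $\mathcal{X}_{\text{AI}}$ is equivalent to $r < \underline{r}'$.
\item On $\mathcal{X}_{\text{IN}}$, solving $(1-\beta)p + mg = kp$ gives $\beta = 1-k+m/r$. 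The constraint $\beta\leq 1$ forces $r\geq \underline{r}'$, and $\beta \geq 0$ needs $k\leq 1$.
\end{itemize}
Hence $\mathcal{X}_{\text{AI}} = A'$ and $\mathcal{X}_{\text{IN}} = B'$ for the threshold $\underline{r}'$. Normalization $\int q = 1$ is then exactly (\ref{definingbarrbarv}), so uniqueness of its solution gives $\underline{r}' = \underline{r}(\underline{v})$. Positivity of the measure of $\mathcal{X}_{\text{AI}}$ rules out $k=1$, giving $\underline{v}>1-\gamma$. Nonemptiness of $\mathcal{X}_{\text{IN}}$ forces $\underline{r}\leq\sup r$, i.e.\ $\underline{v}\leq (1-\gamma)\sup r^\alpha$. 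Step 5 then yields $V^{\underline{v},w}(g;q)=\widetilde{V}^{\underline{v}}(g)$.

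The main obstacle I expect is bookkeeping rather than ideas. The Step 5 algebra must match (\ref{mgmpvbarvdefinition}) exactly. The boundary cases also need care: points with $r(x)=\underline{r}$, possible atoms of $r$ under $p$, and the strict versus non-strict inequality at $\sup r$ in the two directions.
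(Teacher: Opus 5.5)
Your proposal is correct and follows the paper's proof essentially step for step. In the forward direction you substitute, normalize via (\ref{definingbarrbarv}), and solve the linear fixed point for $V^{\underline v,w}(g;q)$. In the converse you identify the level sets and invoke uniqueness of $\underline r(\underline v)$; your explicit threshold $m/k$ is a slightly cleaner version of the paper's sup/inf sandwich. Your caution about the power of $k$ is warranted. The correct coefficient is $\underline v\,\underline r^{-\alpha}M_g(\underline v)$, which is what the paper's own proof and the example in \S\ref{twolevelunifromexample} actually use, so the $(1-\gamma)$ in (\ref{mgmpvbarvdefinition}) is a typo.

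Two justifications need repair, and one observation fixes both. First, in the converse, $\beta\ge 0$ does not force $k\le 1$. Second, in Step 1, positive measure of $A$ comes from $\underline v>1-\gamma$, not from the $\sup$ condition. The observation is that $k^{-1}=\int\max\{\underline r/r,1\}p\ge 1$, with equality if and only if $A$ is null. This gives $A$ positive measure in the forward direction, and it gives $k<1$ in the converse because $\mathcal X_{\text{AI}}$ has positive measure. This is exactly how the paper argues.
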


Note, as can be seen from (\ref{generalequilibriumform}), that $q(x) \rightarrow p(x)$ pointwise, for all $x \in \mathcal{X}$, and $\int_{\mathcal{X}}\beta(y)p(y)dy$ decreases as $\underline{v}\searrow 1-\gamma$, but it is not necessarily true that $\int_{\mathcal{X}}\beta(y)p(y)dy$ converges to zero. In other words, even though the distribution of contents $q$ can be regulated to be as close to $p$ as the platform wishes, the amount of GenAI contents remains positive and bounded from zero. 

The condition (\ref{admissibleconditionforvbar}) for $\underline{v}$ is critical in the application of Proposition \ref{generalequilibriumproposition}. In particular, although the formula for $(\beta, q)$ in (\ref{generalequilibriumform}) remains valid for all $\underline{v}\geq 1-\gamma$, without the condition (\ref{admissibleconditionforvbar}) the resulting $(\beta, q)$ may not correspond to an equilibrium from any compensation scheme $(\underline{v}, w)$. The $\underline{v} \leq \widetilde{V}^{\underline{v}}(g) + c$ part of the condition (\ref{admissibleconditionforvbar}) can always be satisfied assuming the distribution of $r(x)$ under $p$ is supported inside $\mathbb{R}_{>0}$ with density and at most a finite number of point masses, since we can always choose $\underline{v} > 1-\gamma$ sufficiently close to $1-\gamma$, then $\underline{v} < \widetilde{V}^{\underline{v}}(g)+c$. To see this, note that $\underline{r}(\underline{v})\searrow \underline{\underline{r}} := \inf_{x\in \mathcal{X}}r(x)$ as $\underline{v}\searrow 1-\gamma$. If the distribution of $r(x)$ has a point mass at $\underline{\underline{r}}$ then 
\begin{equation*}
    M_g(\underline{v}) \searrow \frac{1}{\sqrt{\underline{\underline{r}}}}\int_{r^{-1}\{\underline{\underline{r}}\}}p(y)dy, \qquad M_p(\underline{v}) \nearrow 1 - \int_{r^{-1}\{\underline{\underline{r}}\}}p(y)dy,
\end{equation*}
hence, the first term of the expression of $\widetilde{V}^{\underline{v}}(g)$ in (\ref{mgmpvbarvdefinition}) converges to $1-\gamma$, so we have $\widetilde{V}^{\underline{v}}(g)+c > 1-\gamma$ for all $\underline{v} > 1-\gamma$ sufficiently close to $1-\gamma$. If the distribution of $r(x)$ has no point mass at $\underline{\underline{r}}$ then by L'Hopital rule, we also find that the limit of the first term of $\widetilde{V}^{\underline{v}}(g)$ in (\ref{mgmpvbarvdefinition}) is $1-\gamma$. But note that the second term in fact tends to infinity as $c/(1-M_p(\underline{v}))\rightarrow \infty$, so we have $\widetilde{V}^{\underline{v}}(g) + c\rightarrow \infty$.

Meanwhile, we can see that $\lim_{\underline{v}\rightarrow \infty}\widetilde{V}^{\underline{v}}(g) + c = 0$, so $\underline{v} > \widetilde{V}^{\underline{v}}(g) + c$ for all sufficiently large $\underline{v}$. However, it may not be possible to find $\underline{v}$ such that $\underline{v} = \widetilde{V}^{\underline{v}}(g)+c$ since $\widetilde{V}^{\underline{v}}(g)$ may not be continuous in $\underline{v}$ when $r(x)$ is not atomless under $p$. The solution $\underline{v}_0$ to $\underline{v} = \widetilde{V}^{\underline{v}}(g)+c$ has a significant of being the equilibrium revenue level where the creator is indifferent between GenAI and manual creation under no compensation. Any threshold $\underline{v} > \underline{v}_0$ will be ineffective since no creator has a sufficiently high revenue to reach $\underline{v}$ and the equilibrium is given by the equilibrium under no compensation $(\underline{v}_0, w_0=0)$. Therefore, it is worth considering when the existence of the solution $\underline{v}_0$ is guaranteed. 

\begin{lemma}[Existence and Uniqueness of The Level $\underline{v}_0$]\label{solutionv0lemma}
    Suppose that the distribution of $r(x) := p(x)/g(x)$ under $p$ is given by a distribution compactly supported inside $\mathbb{R}_{>0}$ with density and at most a finite number of point masses and let $\overline{\overline{r}} := \sup_{x\in \mathcal{X}}r(x)$.
    \begin{enumerate}
        \item $\widetilde{V}^{\underline{v}}(g)$ is a piecewise continuous function for $\underline{v} \in (1-\gamma, (1-\gamma)\overline{\overline{r}}^\alpha)$, with at most finite discontinuities at point masses. Moreover, 
        \begin{multline*}\label{discontinuityofvvbar}
            (1-\gamma)\mathbb{E}_{x\sim g}r(x)^\alpha - \lim_{\underline{v}\nearrow (1-\gamma)\overline{\overline{r}}^\alpha}\widetilde{V}^{\underline{v}}(g)\\
            = \frac{\int_{r^{-1}[\overline{\overline{r}},\infty)}g(y)dy}{\int_{r^{-1}[0,\overline{\overline{r}})}g(y)dy}\left((1-\gamma)\overline{\overline{r}}^\alpha - (1-\gamma)\mathbb{E}_{x\sim g}r(x)^\alpha - c\right)\numberthis
        \end{multline*}
        which shows the discontinuity if a point mass exists at $\underline{v} = (1-\gamma)\overline{\overline{r}}^\alpha$.
        \item If $\underline{v}_1\leq \widetilde{V}^{\underline{v}_1}(g) + c$ for some $\underline{v}_1 \in (1-\gamma, (1-\gamma)\overline{\overline{r}}^\alpha)$, then $\widetilde{V}^{\underline{v}}(g)+c$ is monotonically decreasing in $\underline{v} \in (1-\gamma, \underline{v}_1]$. In particular, there exists at most one solution $\underline{v}_0$ to $\underline{v} = \widetilde{V}^{\underline{v}}(g) + c$.
        \item Additionally, suppose that the distribution of $r(x)$ under $p$ has at most one point mass at $\overline{\overline{r}}$. If $(1-\gamma)\sup_{x\in \mathcal{X}}r(x)^\alpha > (1-\gamma)\mathbb{E}_{x\sim g}r(x)^\alpha + c$, then there exists a solution $\underline{v}_0 \in (1-\gamma, (1-\gamma)\overline{\overline{r}}^\alpha)$ to $\underline{v} = \widetilde{V}^{\underline{v}}(g) + c$, otherwise we have $\underline{v} < \widetilde{V}^{\underline{v}}(g) + c$ for all $\underline{v} \in (1-\gamma, (1-\gamma)\overline{\overline{r}}^\alpha)$.
    \end{enumerate}
\end{lemma}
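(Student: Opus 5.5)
The plan is to reparametrize everything by the threshold $s := \underline{r}(\underline{v})$ rather than by $\underline{v}$. By (\ref{definingbarrbarv}), $s$ is a continuous, strictly increasing bijection from $\underline{v}\in(1-\gamma,(1-\gamma)\overline{\overline{r}}^\alpha)$ onto $s\in(\underline{\underline{r}},s_{\max})$. Set $\lambda(s) := ((1-\gamma)/\underline{v})^{1/\alpha} = 1/\int\max\{s/r,1\}p$. Then
\begin{equation*}
M_p = \lambda(s)\,P[r\geq s], \qquad 1-M_p = \lambda(s)\, s\, G[r<s],
\end{equation*}
where $P,G$ denote probabilities under $p$ and $g$. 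The second identity uses $1/\lambda = P[r\geq s] + s\,G[r<s]$, since $\int_{r<s} (s/r)\,p = s\int_{r<s} g$. Substituting into (\ref{mgmpvbarvdefinition}) gives the clean form
\begin{equation*}
\widetilde{V}^{\underline{v}}(g)+c = \frac{(1-\gamma)\,s^{-\alpha}\int_{r<s}r^\alpha g + c}{G[r<s]} = \mathbb{E}_{g}\!\left[(1-\gamma)(r/s)^\alpha + c/s^\alpha \,\middle|\, r<s\right] s^{\alpha}\cdot s^{-\alpha},
\end{equation*}
that is, $\widetilde V+c = \big((1-\gamma)\mathbb{E}_g[r^\alpha\mid r<s] + c\big)\cdot s^{-\alpha}\cdot s^{\alpha}$ after checking the powers carefully. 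Equivalently, it equals the $g$-average over $\{r<s\}$ of the GenAI revenue $(1-\gamma)(p/q)^\alpha$ with $q=\lambda s g$, plus $c/G[r<s]$ adjustments. All three parts will then be read off from this one-variable expression, which I call $F(s)$.

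For part 1, the only non-smooth ingredient in $F$ is the set $\{r<s\}$, whose $g$-mass $G[r<s]$ and partial integral $\int_{r<s} r^\alpha g$ are continuous in $s$ except at atoms of $r$. This holds because $g = p/r$ with $r$ bounded away from $0$, so atoms under $p$ and under $g$ coincide, and there are finitely many of them. Hence $F$, and therefore $\widetilde V$, is piecewise continuous. For the limit formula, I let $s\nearrow\overline{\overline{r}}$ so that $\{r<s\}\to\{r<\overline{\overline{r}}\}$. Writing $A := G[r\ge\overline{\overline{r}}]$ and using $\int r^\alpha g = \mathbb{E}_g r^\alpha$, the difference $(1-\gamma)\mathbb{E}_g r^\alpha - \lim \widetilde V$ is a short algebraic manipulation. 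The full-space expectation decomposes as a mixture of the conditional expectation on $\{r<\overline{\overline{r}}\}$ and $\overline{\overline{r}}^\alpha$ on the atom, and this produces exactly the factor $A/(1-A)$ times $\big((1-\gamma)\overline{\overline{r}}^\alpha-(1-\gamma)\mathbb{E}_g r^\alpha - c\big)$.

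For part 2, I differentiate $F$ in $s$ on continuity intervals. On such intervals $d G[r<s] = g$-density at level $s$ and $d\int_{r<s}(\cdot) = (\text{integrand at } r=s)\,dG$. A direct computation should give $F'(s) \propto \big(\text{value of the integrand at the boundary}\big) - F(s)$. Translating back, $d(\widetilde V + c)/ds$ has the sign of $\underline{v}(s) - (\widetilde V+c)$, because the boundary creator's revenue is exactly $\underline{v}$. Consequently, wherever $\underline v\le\widetilde V+c$ the function $\widetilde V+c$ is nonincreasing, while $\underline v$ is strictly increasing. At atoms I check the jump direction by the same averaging identity: adding a mass at level $s$ with value $\underline v \le F$ pulls the average down. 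This gives monotonicity on $(1-\gamma,\underline v_1]$ by a continuity (first-crossing) argument. Once $\underline v = \widetilde V + c$ at some $\underline v_0$, the difference $\underline v - \widetilde V - c$ becomes positive and can never return to zero, since any return would require the same inequality in reverse. This yields uniqueness.

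For part 3, the absence of interior atoms makes $F$ continuous on the open interval. Near $1-\gamma$ we have $\underline v<\widetilde V+c$, as shown in the discussion before the lemma. As $s\nearrow\overline{\overline{r}}$, part 1 gives $\lim(\widetilde V+c)$ compared with $(1-\gamma)\overline{\overline{r}}^\alpha$; the sign of the bracket $(1-\gamma)\overline{\overline{r}}^\alpha-(1-\gamma)\mathbb{E}_g r^\alpha-c$ decides whether the limit lies below or above $\lim\underline v$. The intermediate value theorem then gives existence in the first case. In the second case, part 2's monotone structure excludes any crossing. The main obstacle is the algebra in part 2: getting the derivative of $\widetilde V$ in the boundary-value-minus-average form, including the atom case. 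I would attack it first via the $s$-reparametrization above.
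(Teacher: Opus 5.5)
Your strategy is the paper's. You reparametrize by $s=\underline r(\underline v)$, write $\widetilde V^{\underline v}(g)+c$ as a one-variable function, show it decreases wherever $\underline v\le \widetilde V^{\underline v}(g)+c$, and finish Part 3 with the limit at $\overline{\overline r}$ and the intermediate value theorem. However, your central formula is wrong as written. Your displayed $F$ has prefactor $(1-\gamma)s^{-\alpha}$. That is what the definition of $\widetilde V^{\underline v}(g)$ gives if read literally, but it is not the function the paper's proof differentiates: its $V(t)$ has prefactor $(1-\gamma)\big(\tfrac1t\int\max\{t/r,1\}h\big)^\alpha$. Nor is it the function consistent with the GenAI-revenue computation in the proof of Proposition~\ref{generalequilibriumproposition} or with the formulas in \S\ref{twolevelunifromexample}. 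In your notation the correct object is
\[
F(s)=\frac{K(s)\,I(s)+c}{G(s)},\qquad K(s):=(1-\gamma)(\lambda s)^{-\alpha}=\underline v(s)\,s^{-\alpha}=(1-\gamma)\Big(G(s)+\frac{P[r\ge s]}{s}\Big)^{\alpha},
\]
where $I(s):=\int_{r<s}r^\alpha g$ and $G(s):=G[r<s]$. This is exactly your reading as the ``$g$-average of $(1-\gamma)(p/q)^\alpha$ with $q=\lambda s g$'', since that revenue equals $\underline v\,(r/s)^\alpha$. With your literal prefactor, the boundary value of the integrand is $1-\gamma$ rather than $\underline v$, and the limit in Part 1 carries a spurious factor $\overline{\overline r}^{-\alpha}$. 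So neither the Part 1 identity nor the comparison with $\underline v$ in Part 2 comes out. The ``equivalently'' rewrites involving $s^{\alpha}\cdot s^{-\alpha}$ are also incorrect.

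Even with the correct $F$, the derivative is not proportional to ``boundary value minus $F$''. Because $K'(s)=-\alpha(1-\gamma)\big(G(s)+P[r\ge s]/s\big)^{\alpha-1}P[r\ge s]/s^2\le 0$, you get
\[
F'=\frac{G'}{G}\,(\underline v-F)+\frac{K' I}{G}.
\]
So ``$F'$ has the sign of $\underline v-F$'' is false when $\underline v>F$; only the implication $\underline v\le F\Rightarrow F'\le 0$ holds. This is the paper's ``$V'(t)\le 0$ if $D(t)\le 0$'', with $D=G\,(\underline v-F)$. That direction suffices for your first-crossing argument, since at a zero of $\Delta:=\underline v-F$ you have $\Delta'\ge \underline v'>0$. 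Argue the ``no return'' step that way, not via a reversed inequality. Your weighted-average treatment of atoms is correct and more explicit than the paper's.

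The paper closes this step differently. It shows $D'=G\,\underline v'-K'I>0$, so $D$ is globally strictly increasing; it is also continuous across atoms, since $K s^\alpha=\underline v$ at the atom. This gives more than your barrier argument in Part 3. As $s\nearrow\overline{\overline r}$, $D(s)\to(1-\gamma)\overline{\overline r}^\alpha-(1-\gamma)\mathbb E_g r^\alpha-c$. So in the boundary case where this limit is $0$, you still get $D<0$, i.e.\ $\underline v<\widetilde V^{\underline v}(g)+c$, on the whole interval. Your argument alone only shows $\Delta>0$ after any crossing, which is not contradicted by $\lim\Delta=0$. Switching to the monotone $D$ would close this case.
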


The following can be considered as the counterpart result to Proposition \ref{generalequilibriumproposition}, characterizing the equilibrium when (\ref{admissibleconditionforvbar}) does not hold, and that the platform can do no better than choosing to give no compensation in this case.

\begin{proposition}[``Pure--AI Platform"]\label{pureailemma}
    Suppose that the distribution of $r(x) := p(x)/g(x)$ under $p$ is given by a distribution compactly supported inside $\mathbb{R}_{>0}$ with density and at most a finite number of point masses and that:
    \begin{equation}\label{pureaicondition}
        \min\left\{(1-\gamma)\mathbb{E}_{x\sim g}r(x)^\alpha + c, \underline{v}\right\} \geq (1-\gamma)\sup_{x\in \mathcal{X}}r(x)^\alpha,
    \end{equation}
    then $(\beta = 1,q = g)$ is the only possible equilibrium under the platform compensation scheme $(\underline{v}, w)$ for any $w \geq 0$, and it is the unique equilibrium if the inequality in (\ref{pureaicondition}) is strict, or $\underline{v} + w \leq (1-\gamma)\mathbb{E}_{x\sim g}r(x)^\alpha + c$. Therefore, if $(1-\gamma)\mathbb{E}_{x\sim g}r(x)^\alpha+c \geq (1-\gamma)\sup_{x\in \mathcal{X}}r(x)^\alpha$, then any choice of the compensation scheme $(\underline{v}, w)$ for the platform that satisfies (\ref{pureaicondition}) is weakly dominated by $(\underline{v}, w = 0)$, under which $(\beta=1,q=g)$ is the unique equilibrium.

    Conversely, if $(\beta=1,q=g)$ is an equilibrium under any platform compensation scheme $(\underline{v}, w)$ then $(1-\gamma)\mathbb{E}_{x\sim g}r(x)^\alpha+c \geq (1-\gamma)\sup_{x\in \mathcal{X}}r(x)^\alpha$.
\end{proposition}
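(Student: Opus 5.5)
The plan is to split any candidate equilibrium according to the decomposition $\mathcal{X} = \mathcal{X}_{\text{AI}}\sqcup\mathcal{X}_{\text{IN}}$ from Lemma \ref{xhisemptylemma}, which guarantees $\mathcal{X}_{\text{H}}=\emptyset$. Either $\mathcal{X}_{\text{IN}}$ is (up to $p$-null sets) empty, or it is not.

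\textbf{Case 1: $\mathcal{X}_{\text{IN}}=\emptyset$.} Then $\beta\equiv 1$, and (\ref{generaldecomposedform}) gives $q=g$. At $q=g$ we have $V(x;g)=(1-\gamma)r(x)^\alpha\le(1-\gamma)\sup r^\alpha\le\underline{v}$ by (\ref{pureaicondition}).

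\emph{Subcase 1a: the inequality is strict, or $r$ has no atom at its supremum.} Here the compensation is paid only on a $g$-null set, so $V^{\underline{v},w}(g;g)=(1-\gamma)\mathbb{E}_{g}r^\alpha$. Every creator then weakly prefers AI because $V(x;g)\le(1-\gamma)\sup r^\alpha\le(1-\gamma)\mathbb{E}_g r^\alpha+c$. This verifies that $(1,g)$ is an equilibrium whenever any equilibrium of this type exists.

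\emph{Subcase 1b: there is an atom at $\overline{\overline{r}}$ with $\underline{v}=(1-\gamma)\overline{\overline{r}}^\alpha$.} Compensation $w$ is paid on the atom, and the indifference condition becomes $\underline{v}+w$ versus $V^{\underline{v},w}(g;g)+c$. This is exactly where the extra hypothesis $\underline{v}+w\le(1-\gamma)\mathbb{E}_g r^\alpha+c$ enters to guarantee existence. I would check this subcase carefully.

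\textbf{Case 2: $\mathcal{X}_{\text{IN}}\neq\emptyset$.} First I would rule out $V^{\underline{v},w}(g;q)+c<\underline{v}$ via Lemma \ref{decisionundercompensationlemma}(4): every creator then strictly prefers AI (contradicting $\mathcal{X}_{\text{IN}}\ne\emptyset$), or the situation reduces to no compensation. For the regime $\underline{v}\le V^{\underline{v},w}(g;q)+c\le\underline{v}+w$, Lemma \ref{decisionundercompensationlemma}(1)--(2) together with the converse part of Proposition \ref{generalequilibriumproposition} force $\underline{v}\le(1-\gamma)\sup r^\alpha$. Combined with (\ref{pureaicondition}), this gives $\underline{v}=(1-\gamma)\overline{\overline{r}}^\alpha$, hence $\underline{r}(\underline{v})=\overline{\overline{r}}$, and $q$ is given by (\ref{generalequilibriumform}). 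The indifferent set is then $r^{-1}\{\overline{\overline{r}}\}$ with $\beta=1$ there, so $q=g$ and $\beta\equiv1$ again.

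For the regime $V^{\underline{v},w}(g;q)+c>\underline{v}+w$, Lemma \ref{decisionundercompensationlemma}(3) reduces to an equivalent scheme $(\tilde{\underline{v}},\tilde w)$ falling into the previous regime. One must check that $\tilde{\underline{v}}\ge(1-\gamma)\sup r^\alpha$ still holds. I would show this by contradiction: if $\tilde{\underline{v}}<(1-\gamma)\sup r^\alpha$, then creators near the supremum of $r$ strictly prefer manual creation at the resulting $q$, which contradicts $\mathcal{X}_{\text{H}}=\emptyset$. This uses the first coordinate of the minimum in (\ref{pureaicondition}): $V(g;q)+c\ge\sup V(\cdot;q)$ whenever $q$ is near $g$. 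This reduction is the main obstacle I expect.

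\textbf{Weak dominance.} Since the equilibrium outcome is $q=g$ in all cases, the platform's revenue $R(g)$ is the same under every admissible scheme. Compensation is paid only on a $g$-null set, or on the atom in Subcase 1b, so the platform's cost is nonnegative and $(\underline{v},0)$ weakly dominates. Uniqueness under $w=0$ follows from the first part with the bound $\underline{v}+0\le(1-\gamma)\mathbb{E}_g r^\alpha+c$.

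\textbf{Converse.} If $(1,g)$ is an equilibrium, then for every $x$ the AI option is weakly preferred: $V^{\underline{v},w}(x;g)-c\le V^{\underline{v},w}(g;g)$. The compensation cancels on both sides or helps the manual side, so in either case $(1-\gamma)r(x)^\alpha\le(1-\gamma)\mathbb{E}_g r^\alpha+c$ up to null sets. Taking the supremum, and using the upper semicontinuity and positivity of the densities to pass from the essential supremum to the supremum, gives the claimed inequality.
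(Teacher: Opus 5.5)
Your Case 2 has a genuine hole in the regime $\underline{v}>V^{\underline{v},w}(g;q)+c$. This is exactly where the first entry of the minimum in (\ref{pureaicondition}) has to be used.

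Lemma \ref{decisionundercompensationlemma}(4) does not say that every creator then strictly prefers GenAI. Together with $\mathcal{X}_{\text{H}}=\emptyset$, it only says that nobody reaches the threshold. So $(\beta,q)$ is an equilibrium under no compensation, with its own indifference level $\underline{v}_0:=V(g;q)+c<\underline{v}$. An equilibrium of $(\underline{v},0)$ whose indifference level lies below the nominal $\underline{v}$ lands here as well, since $(\underline{v},0)$ does not depend on $\underline{v}$. Your regime-(1)--(2) argument applied to this equilibrium yields only $\underline{v}_0\le(1-\gamma)\overline{\overline{r}}^\alpha$, where $\overline{\overline{r}}=\sup_x r(x)$. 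You cannot then invoke (\ref{pureaicondition}), because it constrains $\underline{v}$, not $\underline{v}_0$.

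The second entry of the minimum alone cannot suffice. Suppose $(1-\gamma)\mathbb{E}_{x\sim g}r(x)^\alpha+c<(1-\gamma)\overline{\overline{r}}^\alpha$ and, e.g., $r$ has no atom below $\overline{\overline{r}}$. Then Lemma \ref{solutionv0lemma}(3) and Proposition \ref{generalequilibriumproposition} give a mixed no-compensation equilibrium with $\underline{v}_0<(1-\gamma)\overline{\overline{r}}^\alpha$. It stays an equilibrium under every $(\underline{v},w)$ with $\underline{v}>\underline{v}_0$.

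The paper closes this case via Lemma \ref{solutionv0lemma}. When $(1-\gamma)\mathbb{E}_{x\sim g}r(x)^\alpha+c\ge(1-\gamma)\overline{\overline{r}}^\alpha$, the function $\widetilde{V}^{\underline{v}}(g)+c$ is decreasing on $(1-\gamma,(1-\gamma)\overline{\overline{r}}^\alpha)$. By (\ref{discontinuityofvvbar}), its left limit at the endpoint is at least $(1-\gamma)\overline{\overline{r}}^\alpha$. So the fixed point $\underline{v}_0=\widetilde{V}^{\underline{v}_0}(g)+c$ supplied by the converse of Proposition \ref{generalequilibriumproposition} must equal $(1-\gamma)\overline{\overline{r}}^\alpha$, which forces $\beta\equiv 1$.

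A direct check also works. Take $q$ as in (\ref{generalequilibriumform}) with $t:=\underline{r}(\underline{v}_0)<\overline{\overline{r}}$. Then $V(y;q)=\underline{v}_0(r(y)/t)^\alpha$ on $r^{-1}[0,t)$ and $V(y;q)=\underline{v}_0$ elsewhere. Using $\underline{v}_0\le(1-\gamma)\overline{\overline{r}}^\alpha$ and $\int_{r^{-1}[0,t)}g(y)dy>0$, you get $\underline{v}_0-V(g;q)\le(1-\gamma)(\overline{\overline{r}}/t)^\alpha\int_{r^{-1}[0,t)}(t^\alpha-r(y)^\alpha)g(y)dy<(1-\gamma)\int_{\mathcal{X}}(\overline{\overline{r}}^\alpha-r(y)^\alpha)g(y)dy\le c$. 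This contradicts $\underline{v}_0=V(g;q)+c$. Your heuristic $V(g;q)+c\ge\sup V(\cdot;q)$ would have to become this strict inequality, holding at every such $q$ and not only for $q$ near $g$.

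By contrast, the step you single out as the main obstacle is automatic. In the regime $V^{\underline{v},w}(g;q)+c>\underline{v}+w$, the reduced threshold $\tilde{\underline{v}}=V^{\underline{v},w}(g;q)+c-w$ is strictly larger than $\underline{v}\ge(1-\gamma)\overline{\overline{r}}^\alpha$. So the converse of Proposition \ref{generalequilibriumproposition} leaves only $\beta\equiv1$. The mechanism you propose there (creators near $\sup r$ strictly preferring manual creation) never occurs, because at a candidate mixed equilibrium those creators are exactly indifferent.

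There are also two smaller errors.
\begin{itemize}
\item In Subcase 1a, suppose $\underline{v}=(1-\gamma)\overline{\overline{r}}^\alpha$ and the supremum is attained on a $p$-null set. Those creators are still paid $w$. Because the best-response condition must hold at every $x$, they weakly prefer GenAI only if $\underline{v}+w\le(1-\gamma)\mathbb{E}_{x\sim g}r(x)^\alpha+c$. So $(1,g)$ is not an equilibrium there for every $w$. This is precisely why the statement asserts existence only under strict inequality or that bound.
\item For the same reason, in the converse you should apply the best-response inequality at each $x$ directly. Passing from an essential supremum to the supremum is not justified by upper semicontinuity, since it allows isolated upward spikes of $p$ and hence of $r$.
\end{itemize}
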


Proposition \ref{pureailemma} states that if the GenAI model is sufficiently well--trained, i.e. $g$ is sufficiently close to $p$, then the only equilibrium is where all the creators use AI, unless the platform offers a sufficiently generous compensation scheme. Clearly, if $g = p$ then $r(x) = 1$ for all $x$, and (\ref{pureaicondition}) becomes $(1-\gamma)+c > 1-\gamma$, which is trivial.

We have mentioned that an equilibrium does not always exist for a general compensation scheme $W : \mathcal{X}\times\mathcal{D}(\mathcal{X}) \rightarrow \mathbb{R}_{\geq 0}$. So far, we focus on the characterization of equilibrium $(\beta, q)$ with $\underline{v} + w = V^{\underline{v}, w}(g;q) + c \geq \underline{v}$ (i.e. Proposition \ref{generalequilibriumproposition}) or $\underline{v} > V^{\underline{v}, w}(g;q)+c$ (i.e. Proposition \ref{pureailemma}). Moreover, we have seen from Lemma \ref{decisionundercompensationlemma} that if an equilibrium $(\beta, q)$ under some 
$(\underline{v}, w)$ satisfies $\underline{v} + w < V^{\underline{v}, w}(g;q) + c$ then it is also an equilibrium under $(\tilde{\underline{v}}, \tilde{w})$ where $\tilde{\underline{v}} + \tilde{w} = V^{\tilde{\underline{v}}, \tilde{w}}(g;q) + c$, bringing us back to the previous case. The only possibility left to consider is an equilibrium with $\underline{v} + w > V^{\underline{v}, w}(g;q) + c \geq \underline{v}$. In the following, we argue that such an equilibrium cannot exists, in particular, no equilibrium exists under any revenue--threshold compensation scheme $(\underline{v}, w)$ with a sufficiently high $w > 0$.

\begin{proposition}\label{equilibriumnonexistenceproposition}
    If $\underline{v}$ satisfies condition (\ref{admissibleconditionforvbar}) but $\underline{v} + w > \widetilde{V}^{\underline{v}}(g) + c$ then there exists no equilibrium under the platform's compensation scheme $(\underline{v},w)$.
\end{proposition}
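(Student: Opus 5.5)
Suppose, for contradiction, that $(\beta,q)$ is an equilibrium under $(\underline{v},w)$ with $\underline{v}$ satisfying (\ref{admissibleconditionforvbar}) and $\underline{v}+w>\widetilde{V}^{\underline{v}}(g)+c$. We will locate $V^{\underline{v},w}(g;q)+c$ relative to the interval $[\underline{v},\underline{v}+w]$ and rule out each possible position in turn, using Lemma \ref{decisionundercompensationlemma}, Lemma \ref{xhisemptylemma}, Proposition \ref{generalequilibriumproposition} and Proposition \ref{pureailemma}. Write $V^*:=V^{\underline{v},w}(g;q)$. There are three cases: (i) $\underline{v}+w=V^*+c$; (ii) $\underline{v}+w<V^*+c$; (iii) $V^*+c<\underline{v}+w$. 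Case (iii) splits further into (iii-a) $V^*+c\ge\underline{v}$ and (iii-b) $V^*+c<\underline{v}$.

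\textbf{Case (i).} By Lemma \ref{xhisemptylemma}, $\mathcal{X}=\mathcal{X}_{\text{AI}}\sqcup\mathcal{X}_{\text{IN}}$. We check that $\mathcal{X}_{\text{IN}}\neq\emptyset$. If it were empty, then $\beta\equiv1$ and $q=g$. The creators with the largest $V(x;g)=(1-\gamma)r(x)^\alpha$ would then have revenue $(1-\gamma)\sup r^\alpha>\underline{v}$ by (\ref{admissibleconditionforvbar}), using upper semicontinuity and compactness so the supremum is attained. Lemma \ref{decisionundercompensationlemma}(1) would make them strictly prefer manual creation, which contradicts Lemma \ref{xhisemptylemma}. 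So $\mathcal{X}_{\text{IN}}\neq\emptyset$, and the converse part of Proposition \ref{generalequilibriumproposition} gives $V^*=\widetilde{V}^{\underline{v}}(g)$. Then $\underline{v}+w=\widetilde{V}^{\underline{v}}(g)+c$, contradicting the hypothesis.

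\textbf{Case (ii).} Lemma \ref{decisionundercompensationlemma}(3) shows $(\beta,q)$ is also an equilibrium under $(\tilde{\underline{v}},\tilde w)$ with $\tilde{\underline{v}}+\tilde w=V^{\tilde{\underline{v}},\tilde w}(g;q)+c$ and $\tilde{\underline{v}}=V^*+c-w>\underline{v}$. Every creator with $V(x;q)\ge\underline{v}$ earns at most $V^*$ from manual creation, and in this case that is strictly below the GenAI payoff, so no one receives compensation. Hence $V^*=V(g;q)$. Applying Proposition \ref{generalequilibriumproposition} (or Proposition \ref{pureailemma} if $\mathcal{X}_{\text{IN}}=\emptyset$) to the modified scheme gives the form (\ref{generalequilibriumform}) at level $\tilde{\underline{v}}$, together with $V(g;q)=\widetilde{V}^{\tilde{\underline{v}}}(g)$. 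We then exploit that the original threshold $\underline{v}<\tilde{\underline{v}}$ satisfies $\underline{v}\le\widetilde{V}^{\underline{v}}(g)+c<\underline{v}+w$. I expect the monotonicity in Lemma \ref{solutionv0lemma}(2), namely that $\widetilde{V}^{\cdot}(g)+c$ is decreasing on the admissible range, to give $\tilde{\underline{v}}+\tilde w\le\underline{v}+w$ and hence $\tilde w\le w-(\tilde{\underline{v}}-\underline{v})$. The remaining step is to use the fact that under $(\underline{v},w)$ all $x$ with $V(x;q)\ge\underline{v}$ collect $w$. I expect this to show that $V^*$ strictly exceeds $V(g;q)$, because the $g$-mass on $\{V\ge\underline{v}\}$ is positive. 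That contradicts $V^*=V(g;q)$.

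\textbf{Case (iii-a).} By Lemma \ref{decisionundercompensationlemma}(1)--(2), every $x$ with $V(x;q)\ge\underline{v}$ strictly prefers manual creation. Lemma \ref{xhisemptylemma} then forces $V(x;q)<\underline{v}$ for all $x$, so $\beta\equiv1$ and $q=g$. As in Case (i), the maximizers of $r$ have revenue $(1-\gamma)\sup r^\alpha>\underline{v}$, which is a contradiction.

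\textbf{Case (iii-b).} Here $\underline{v}>V^*+c$. Any $x$ with $V(x;q)\ge\underline{v}$ gets manual payoff $V(x;q)+w-c>V^*$, so it strictly prefers manual creation. This contradicts Lemma \ref{xhisemptylemma}, unless no such $x$ exists. In that case $q=g$ by Lemma \ref{decisionundercompensationlemma}(4), and the sup argument contradicts (\ref{admissibleconditionforvbar}) once more.

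The main obstacle is Case (ii). There one must carefully control how $V^{\underline{v},w}(g;q)$ changes between the schemes $(\underline{v},w)$ and $(\tilde{\underline{v}},\tilde w)$, since the two schemes pay compensation on different sets of creators. If the monotonicity argument there proves delicate, I would instead argue directly from (\ref{generalequilibriumform}). Writing the equilibrium at level $\tilde{\underline{v}}$, I would compute $V^*=\widetilde{V}^{\tilde{\underline{v}}}(g)+w\int_{\{V(y;q)\ge\underline{v}\}}g(y)\,dy$ and show that this is incompatible with $\underline{v}+w<V^*+c$ once (\ref{admissibleconditionforvbar}) is imposed.
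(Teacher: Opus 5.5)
Your Cases (i) and (iii-a) are fine. You do not need the supremum of $r$ to be attained, which is not guaranteed since $r=p/g$ need not be upper--semicontinuous; any $x$ with $(1-\gamma)r(x)^\alpha>\underline{v}$ suffices. The two remaining cases, however, are where the proposition has real content, and each rests on a false step.

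\textbf{Case (iii-b).} Lemma~\ref{decisionundercompensationlemma}(4) does not give $q=g$. It only says that $\beta$ is also a best response under no compensation, with indifference level $\underline{v}_0:=V(g;q)+c<\underline{v}$. A no--compensation equilibrium can have $\mathcal{X}_{\text{IN}}=\{x : V(x;q)=\underline{v}_0\}\neq\emptyset$, as in \S\ref{twolevelunifromexample} with $\underline{g}<\underline{g}^*$.

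Tellingly, your argument never uses the half $\underline{v}\le\widetilde{V}^{\underline{v}}(g)+c$ of (\ref{admissibleconditionforvbar}), yet without it the proposition fails there. If $\underline{v}_0<\underline{v}<(1-\gamma)\sup_{x}r(x)^\alpha$, the no--compensation equilibrium survives under $(\underline{v},w)$ for every $w$, since no content reaches revenue $\underline{v}$.

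The missing step has two parts. First, apply the converse of Proposition~\ref{generalequilibriumproposition} to $(\underline{v}_0,0)$ to get $\underline{v}_0=\widetilde{V}^{\underline{v}_0}(g)+c$. Then use Lemma~\ref{solutionv0lemma}(2) with $\underline{v}_1=\underline{v}$, which is legitimate precisely because $\underline{v}\le\widetilde{V}^{\underline{v}}(g)+c$. This gives $\underline{v}_0\ge\widetilde{V}^{\underline{v}}(g)+c\ge\underline{v}$, a contradiction. Your $q=g$ argument only disposes of the subcase $\mathcal{X}_{\text{IN}}=\emptyset$.

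\textbf{Case (ii).} The claim ``no one receives compensation, hence $V^*=V(g;q)$'' is false. When $\mathcal{X}_{\text{IN}}\neq\emptyset$, the indifferent creators sit at $V(x;q)=\tilde{\underline{v}}>\underline{v}$. So they, and every GenAI draw landing in $\{V(\cdot;q)\ge\underline{v}\}$, collect $w$.

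Your final contradiction $V^*>V(g;q)$ therefore only refutes your own intermediate claim. It would equally exclude case--(ii) equilibria with small $w$, and these do exist. In \S\ref{twolevelunifromexample}, take an admissible $\tilde{\underline{v}}$ with $\tilde w:=\widetilde{V}^{\tilde{\underline{v}}}(g)+c-\tilde{\underline{v}}>0$. The level--$\tilde{\underline{v}}$ equilibrium remains an equilibrium under $(\underline{v},\tilde w)$ for every $\underline{v}\in(1-\gamma,\tilde{\underline{v}})$, because $V(\cdot;q)<1-\gamma$ on $[0,1/2]$.

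Two related errors:
\begin{itemize}
\item Proposition~\ref{generalequilibriumproposition} yields $V^{\tilde{\underline{v}},\tilde w}(g;q)=\widetilde{V}^{\tilde{\underline{v}}}(g)$, not $V(g;q)=\widetilde{V}^{\tilde{\underline{v}}}(g)$.
\item Your fallback formula for $V^*$ omits the term $-\tilde w\int_{\{V(\cdot;q)\ge\tilde{\underline{v}}\}}g$.
\end{itemize}
Nor does the monotonicity bound $\tilde{\underline{v}}+\tilde w<\underline{v}+w$ suffice. It contradicts $\tilde{\underline{v}}>\underline{v}$ only if $\tilde w=w$, i.e.\ only if the band $\{\underline{v}\le V(\cdot;q)<\tilde{\underline{v}}\}$ carries no $g$--mass, which fails in general.

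The idea you are missing is the paper's comparison with the genuine level--$\underline{v}$ equilibrium $q'$ of (\ref{generalequilibriumform}), under $w':=\widetilde{V}^{\underline{v}}(g)+c-\underline{v}\in[0,w)$.
\begin{itemize}
\item Lowering $w$ to $w'$ at fixed $q$ (slope at most one) gives $V^{\underline{v},w'}(g;q)+c\ge\tilde{\underline{v}}+w'>\underline{v}+w'=V^{\underline{v},w'}(g;q')+c$.
\item In the other direction, $\underline{v}/\underline{r}(\underline{v})^\alpha$ is decreasing in $\underline{v}$. So the level--$\tilde{\underline{v}}$ form of $q$ gives $V(x;q)\le V(x;q')<\underline{v}$ on $r^{-1}[0,\underline{r}(\underline{v}))$.
\item Elsewhere, $V^{\underline{v},w'}(x;q)\le V^{\underline{v},w'}(g;q)+c$, because nobody strictly prefers manual creation.
\item Integrating against $g$ gives $V^{\underline{v},w'}(g;q)\le V^{\underline{v},w'}(g;q')$, contradicting the first point.
\end{itemize}
The subcase $q=g$ is handled the same way, using $1-\gamma\le\underline{v}/\underline{r}(\underline{v})^\alpha$.
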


The key is that the creator's utility under the revenue--threshold compensation scheme is not continuous in the common belief $q$, thus an equilibrium is not guaranteed. When $w \geq 0$ is too large, under the setting outlined in Proposition \ref{equilibriumnonexistenceproposition}, a creator $x$ will either strictly prefers GenAI if $q(x)$ is just above a threshold, or strictly prefers manual creation when $q(x)$ is at or just below the threshold, making an equilibrium impossible. This result shows that Proposition \ref{generalequilibriumproposition} and Proposition \ref{pureailemma} constitute the complete classification of all the possible equilibrium under the class of revenue--threshold compensation schemes. Given $\underline{v}$ satisfying the condition for Proposition \ref{generalequilibriumproposition} or Proposition \ref{pureailemma} we can appropriately choose $w$ to ensure the existence of the corresponding equilibrium. But choosing $w = \widetilde{V}^{\underline{v}}(g) + c - \underline{v}$ might not be practical if the platform does not have a prior knowledge of the expected revenue from using GenAI, or know $p$ or $g$. We argue informally in the following that this concern should be limited in practice, and the platform can choose a wide--range of $w$ after specifying $\underline{v}$ for an equivalent equilibrium outcome to choosing $w = \widetilde{V}^{\underline{v}}(g) + c - \underline{v}$. Recall our heuristic derivation which motivates the definition of our model, where the platform consists of $N$ consumers and $N$ creators. The revenue $V_N(y;q)$ from the content $y$ is a random variable converges (in mean) to $V(y;q)$ as $N\rightarrow \infty$ and the compensated revenue is given by $V^{\underline{v}, w}_N(x;q) := V_N(x;q) + w\cdot \mathbbm{1}[V_N(x;q) \geq \underline{v}]$. However, at finite $N$, the positive variance of $V_N(x;q)$ means that the expected revenue $\mathbb{E}V_N(x;q)$ for each creator $x$ is continuous with respect to the creation decision $\beta$. Suppose that $\underline{v} \leq V^{\underline{v}, w}_N(g;q)+c \leq \underline{v}+w$, where the second inequality may be strict, a creator $x$ can be indifferent at finite $N$ equilibrium if revenue from the event that $V_N(x;q) < \underline{v}$ balances the revenue from the event that $V_N(x;q) \geq \underline{v}$ and we get $\mathbb{E}V^{\underline{v}, q}_N(x;q) = V^{\underline{v}, w}(g;q) + c$. As $N$ becomes large, $V_N(x;q)$ becomes more concentrated around the mean, which means all the indifferent creators must be concentrated around $\underline{v}$. Following this informal reasoning, we can argue that $V(x;q) = \underline{v}$ gives a indifference condition, for all $w \geq 0$ sufficiently large such that $\underline{v} + w \geq V^{\underline{v}, w}(g;q)+c$. If we adopt this view then the equilibrium classification in Proposition \ref{generalequilibriumproposition} and Proposition \ref{pureailemma} are already complete term of $\underline{v}$, and we can always assume $w \geq 0$ to be set sufficiently large.

Finally, we present the following result which guarantees that the platform revenue is always improved with higher compensation by aligning the resulting equilibrium content distribution $q$ closer to $p$. It remains for the platform to balance the revenue improvement benefit with the total compensation costs, and using Proposition \ref{generalequilibriumproposition} and Proposition \ref{pureailemma} this becomes an optimization problem in a single variable $\underline{v}$. We show under a mild condition that the platform's profit can be maximized with a certain threshold $\underline{v}^* > 1-\gamma$.

\begin{proposition}\label{platformrevennueproposition}
    Suppose that the distribution of $r(x) := p(x)/g(x)$ under $p$ is given by a distribution compactly supported inside $\mathbb{R}_{>0}$ with density and at most a finite number of point masses. 
    
    At equilibrium $(\beta, q)$ characterized by $\underline{v}$ satisfying the condition (\ref{admissibleconditionforvbar}) of Proposition \ref{generalequilibriumproposition}, the platform's revenue can be written as:
    \begin{equation}
        R(\underline{v}) := R^{\underline{v}, w}(q) = \gamma\left(\frac{\underline{v}}{1-\gamma}\right) M_p(\underline{v}) + \gamma\underline{r}(\underline{v})^{1-\alpha}\left(\frac{1-\gamma}{\underline{v}}
        \right)^{1/\alpha-1}M_g(\underline{v})
    \end{equation}
    and it is a monotonically decreasing continuous function in $\underline{v}$. The platform pays the same expected compensation $w = \widetilde{V}^{\underline{v}}(g)+c-\underline{v}$ to all $x \in \mathcal{X}_{\text{IN}}$ and no compensation to any $x \in \mathcal{X}_{\text{AI}}$, therefore, the platform's profit is given by:
    \begin{multline*}\label{platformcompensationschemeprofitvbar}
        \Pi(\underline{v}) := \Pi^{\underline{v}, w}(q) = \left(\frac{\underline{v}}{1-\gamma}\right)M_p(\underline{v}) + \underline{r}(\underline{v})^{1-\alpha}\left(\frac{1-\gamma}{\underline{v}}\right)^{1/\alpha-1}\left(\frac{\gamma-M_p(\underline{v})}{1-M_p(\underline{v})}\right)M_g(\underline{v})\\
        - c\underline{r}(\underline{v})\left(\frac{1-\gamma}{\underline{v}}\right)^{1/\alpha}\frac{M_p(\underline{v})}{1-M_p(\underline{v})}.\numberthis
    \end{multline*}
    At equilibrium $(\beta, q)$ characterized by $\underline{v}$ satisfying the condition (\ref{pureaicondition}) of Proposition \ref{pureailemma} with strict inequality, the platform's revenue and profit are given by:
    \begin{equation}\label{pureaiplatformcompensationschemeprofitvbar}
        \Pi(\underline{v}) := \Pi^{\underline{v}, w=0}(q) = R^{\underline{v}, w=0}(q) = \gamma\int_{\mathcal{X}}p(y)^\alpha g(y)^{1-\alpha}dy = \gamma M_g(\underline{v}).
    \end{equation}
    Additionally, suppose that $\overline{\overline{r}} := \sup_{x\in \mathcal{X}}r(x)$ is the only possible point mass, then the platform's profit $\Pi(\underline{v})$ is upper--semicontinuous over the set of $\underline{v}$ satisfying either the condition (\ref{admissibleconditionforvbar}) or the condition (\ref{pureaicondition}) with the only possible discontinuity at $\underline{v} = (1-\gamma)\overline{\overline{r}}^\alpha$ and a profit maximization threshold $\underline{v}^*$ exists.
\end{proposition}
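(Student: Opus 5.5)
The proof splits into the three assertions of the statement: the revenue formula with its monotonicity, the profit formulas in the two regimes, and upper semicontinuity with existence of a maximizer. In each case we substitute the explicit equilibrium of Proposition \ref{generalequilibriumproposition} or Proposition \ref{pureailemma} into (\ref{platformrevenue}).

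\textbf{Revenue formula.} Write $\kappa := ((1-\gamma)/\underline{v})^{1/\alpha}$ and $\underline{r} := \underline{r}(\underline{v})$. Under (\ref{admissibleconditionforvbar}), (\ref{generalequilibriumform}) gives $q = \kappa p$ on $r^{-1}[\underline{r},\infty)$ and $q = \kappa\underline{r}g$ on $r^{-1}[0,\underline{r})$. Hence
\[
\int_{\mathcal{X}} p^\alpha q^{1-\alpha}\,dx = \kappa^{1-\alpha}\int_{r\ge \underline{r}}p\,dx + (\kappa\underline{r})^{1-\alpha}M_g(\underline{v}).
\]
Since $\kappa^{1-\alpha}\int_{r\geq\underline{r}}p = \kappa^{-\alpha}M_p = (\underline{v}/(1-\gamma))M_p$, this is exactly the stated $R(\underline{v})$.

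\textbf{Monotonicity and continuity of $R$.} Avoid differentiating the closed form, since $M_g$ and $M_p$ may jump at atoms. Instead, note that $q_{\underline{v}} = \kappa\min\{p,\underline{r}g\}$, using $\max\{\underline{r}/r,1\}p = \max\{\underline{r}g,p\}$. The equation (\ref{definingbarrbarv}) says $\kappa\int\max\{p,\underline{r}g\} = 1$. As $\underline{v}$ increases, $\underline{r}$ increases, so $\kappa$ must decrease.

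The revenue per point is $p^\alpha q^{1-\alpha}$. On $\{r\ge\underline{r}\}$ it equals $\kappa^{1-\alpha}p$, which is decreasing in $\underline{v}$. On $\{r<\underline{r}\}$ it equals $(\kappa\underline{r})^{1-\alpha}p^\alpha g^{1-\alpha}$. To control this, show that $\kappa\underline{r}$ is nonincreasing: the normalization $\int q = 1$ together with $\int\max\{\underline{r}g,p\}\kappa = 1$ implies that $\int[\kappa\underline{r}g - \kappa p]_+$ equals $\int[\kappa p - \kappa\underline{r}g]_+$ minus a fixed offset. The plan is a pointwise comparison argument: if $\kappa\underline{r}$ increased while $\kappa$ decreased, then mass would be lost on the $p$-part but gained on the $g$-part, and normalization would be violated. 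A cleaner route is to observe that $R = \gamma\int p^\alpha q^{1-\alpha}$ is concave in $q$ and maximized at $q = p$ subject to $\int q = 1$. As $\underline{v}$ increases, the family $q_{\underline{v}}$ moves monotonically away from $p$, since the set where $q = \kappa p$ shrinks and the ratio $q/p$ moves further from $1$ on both sides. A majorization-type argument applied to $q/p$ under $p$ then gives the monotone decrease.

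Continuity follows because $q_{\underline{v}} = \kappa\min\{p,\underline{r}g\}$ depends continuously on $(\kappa,\underline{r})$ in $L^1$, and both $\kappa$ and $\underline{r}$ are continuous in $\underline{v}$. Atoms only cause jumps in the separate pieces $M_g$ and $M_p$, and these cancel in the sum.

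\textbf{Profit formulas.} By Proposition \ref{generalequilibriumproposition}, exactly the creators in $\mathcal{X}_{\text{IN}}$ receive compensation, namely $w = \widetilde{V}^{\underline{v}}(g)+c-\underline{v}$. The contents receiving compensation are those with $V(x;q)\ge\underline{v}$, i.e.\ the set $r^{-1}[\underline{r},\infty)$, whose $q$-mass is $\kappa\int_{r\ge\underline{r}}p = M_p$. So the total payment is $wM_p$. Substitute $\widetilde{V}^{\underline{v}}(g)$ from (\ref{mgmpvbarvdefinition}) and subtract from $R$. Collecting the $M_p$ terms uses $\underline{v}\kappa^{\alpha}$-type identities together with $1/(1-M_p)$. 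The routine algebra should produce the $\frac{\gamma-M_p}{1-M_p}$ factor on the $M_g$ term and the $-c\underline{r}\kappa\frac{M_p}{1-M_p}$ term. In the pure-AI regime, Proposition \ref{pureailemma} gives $q = g$ and no content reaches $\underline{v}$, so no payment is made and $\Pi = \gamma\int p^\alpha g^{1-\alpha}$. Since $\underline{r}\ge\overline{\overline{r}}$ there, this integral equals $\gamma M_g(\underline{v})$.

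\textbf{Upper semicontinuity and existence (main obstacle).} With only a possible atom at $\overline{\overline{r}}$, the functions $M_g$, $M_p$ and $\underline{r}$ are continuous on $(1-\gamma,(1-\gamma)\overline{\overline{r}}^\alpha)$, so $\Pi$ is continuous there. At the endpoint $\underline{v}\nearrow(1-\gamma)\overline{\overline{r}}^\alpha$, compute the left limit of $\Pi$, using Lemma \ref{solutionv0lemma}(1) for the jump of $\widetilde{V}$. The goal is to compare this limit with the pure-AI value $\gamma\int p^\alpha g^{1-\alpha}$ and show that the value assigned at the point is at least the limit. I expect this comparison to be the delicate step. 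It should reduce to the sign of $(1-\gamma)\overline{\overline{r}}^\alpha-(1-\gamma)\mathbb{E}_gr^\alpha-c$ in (\ref{discontinuityofvvbar}), which determines which regime the endpoint belongs to.

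Next, as $\underline{v}\searrow1-\gamma$, the paper's earlier discussion shows that $\widetilde{V}\to\infty$ when there is no atom at $\inf r$. Then $w\to\infty$ and the payment $wM_p$ dominates, so $\Pi\to-\infty$, or at least $\Pi$ stays bounded away from its supremum. The admissible set is therefore effectively compact in $\underline{v}$, and an upper semicontinuous function attains its maximum there, yielding $\underline{v}^*$.
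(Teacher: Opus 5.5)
\textbf{Verdict: genuine gap in the monotonicity of $R$.} Your revenue formula, the payment $wM_p$ (the compensated contents are exactly $r^{-1}[\underline r,\infty)$, with $q$-mass $M_p$), the pure-AI case, and the plan for upper semicontinuity and existence all follow the paper's proof.

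\textbf{Your first route rests on a false claim.} It needs $\kappa\underline r$ to be nonincreasing, but the opposite holds. By (\ref{definingbarrbarv}), $\kappa\underline r=\bigl(\int_{\mathcal X}\max\{1/r(y),1/\underline r\}p(y)dy\bigr)^{-1}$, which is nondecreasing in $\underline v$. It is the GenAI mass $\int_{\mathcal X}\beta p$, which grows as compensation is withdrawn. So on $r^{-1}[0,\underline r)$ the integrand $(\kappa\underline r)^{1-\alpha}p^\alpha g^{1-\alpha}$ increases, and no pointwise comparison can work. The normalization ``contradiction'' you describe does not occur: the $p$-part loses exactly the mass the $g$-part gains.

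\textbf{Your second route is only a heuristic.} Concavity of $q\mapsto\int p^\alpha q^{1-\alpha}$ with maximum at $p$ says nothing about monotonicity along a curved path. The claim that $q/p$ moves further from $1$ on both sides is false on $\{\kappa\underline r<r<\underline r\}$: there $q/p=\kappa\underline r/r<1$, and it increases toward $1$. Also, $q=\kappa\max\{p,\underline r g\}$, not $\kappa\min\{p,\underline r g\}$.

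\textbf{Either of two repairs works.}

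\emph{(i) Make the majorization precise.} Let $Z:=q/p=\kappa\max\{\underline r/r,1\}$ at $\underline v$, and $Z'$ the same quantity at $\underline v'>\underline v$. Both $Z$ and $Z'-Z$ are nonincreasing functions of $r$; for $Z'-Z$ on $\{r<\underline r\}$ this is exactly the fact that $\kappa\underline r$ is nondecreasing. Also $\mathbb E_pZ=\mathbb E_pZ'=1$. Hence, for concave $\phi(z)=z^{1-\alpha}$ and a sign-change point $r^*$ of $Z'-Z$, $\mathbb E_p[\phi(Z')-\phi(Z)]\le\mathbb E_p[\phi'(Z)(Z'-Z)]\le\phi'(Z(r^*))\,\mathbb E_p[Z'-Z]=0$. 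This handles atoms with no extra work.

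\emph{(ii) Do what the paper does.} Eliminate $\kappa$ via (\ref{definingbarrbarv}) and write $R=\gamma N(t)/D(t)^{1-\alpha}$, where $t=\underline r(\underline v)$, $N(t)=\int\max\{t/r,1\}^{1-\alpha}p$ and $D(t)=\int\max\{t/r,1\}p$. Your reason for avoiding differentiation does not apply to this form: atoms produce kinks in $t$, not jumps. Off the atoms, $R'(t)$ has the sign of $\int_{r\ge t}p\cdot\int_{r<t}r^{-1}\bigl((r/t)^\alpha-1\bigr)p\le 0$.

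\textbf{Profit algebra.} Use $\widetilde V^{\underline v}(g)+c=\frac{\kappa\underline r}{1-M_p}\bigl(\frac{\underline v}{\underline r^\alpha}M_g+c\bigr)$, which is what the equilibrium computation in the proof of Proposition \ref{generalequilibriumproposition} actually produces. With the factor $(1-\gamma)/\underline r^\alpha$ as printed in (\ref{mgmpvbarvdefinition}), the $M_g$ coefficient does not reduce to $(\gamma-M_p)/(1-M_p)$.

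\textbf{Endpoint comparison.} The step you deferred is short. Let $\Delta$ be the bracket in (\ref{discontinuityofvvbar}) and $G:=\int_{r^{-1}\{\overline{\overline r}\}}g$. Then $R$ is continuous with limit $\gamma\mathbb E_gr^\alpha$, $M_p\to G$, and $\lim w=-\Delta/(1-G)$. This limit is nonnegative exactly when $\Delta\le 0$. That is the only regime in which the admissible set accumulates at $(1-\gamma)\overline{\overline r}^\alpha$; otherwise it is $(1-\gamma,\underline v_0]$ by Lemma \ref{solutionv0lemma}.
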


Recall that we assumed that $p$ and $g$ are not known to the platform. Thus, the closed--form expression for $\Pi^{\underline{v}, w}(q)$ in Proposition \ref{platformrevennueproposition} may have limited practical use to the platform. However, Proposition \ref{platformrevennueproposition} suggests how profit maximization is feasible for a broad class of $p$ and $g$. Moreover, the problem reduces to an optimization problem in a single variable $\underline{v} \geq 1-\gamma$ which the platform may choose to tackle via experimental methods such as A/B--testing, or estimate via other empirical techniques.

\section{Examples}\label{examplessection}

In this section, we consider some examples with $\mathcal{X} \subset \mathbb{R}$. We set $\alpha = 1/2$ and $c = 1-\gamma$ throughout.

\subsection{Two--Levels GenAI Distribution}\label{twolevelunifromexample}

This toy example represents one of the simplest setting which our model can be solved explicitly. Consider $\mathcal{X} := [0,1]$ with a uniform preference distribution $p = \Unif[0,1]$, and let the GenAI model be given by the density:
\begin{equation*}
    g(x) = \begin{cases}
        \overline{g}, &\quad x\in [0,1/2]\\
        \underline{g}, &\quad x\in (1/2, 1]
    \end{cases}
\end{equation*}
where $\underline{g} \in [0,1]$ and $\overline{g} := 2-\underline{g}$. We have $r(x) = 1/\overline{g} \in [0,1]$ if $x \in [0,1/2]$ and $r(x) = 1/\underline{g} \in [1,\infty)$ if $x\in (1/2,1]$. First, let us consider the case where $(1-\gamma)\sup_{x\in \mathcal{X}}\sqrt{r(x)} > (1-\gamma)\mathbb{E}_{x\sim g}\sqrt{r(x)} + c$, which can also be written as:
\begin{equation}\label{nopureaiconditionuniform}
    \frac{1-\gamma}{\sqrt{\underline{g}}} > \frac{1-\gamma}{2}\left(\sqrt{\underline{g}} + \sqrt{\overline{g}}\right) + c = \frac{1-\gamma}{2}\left(\sqrt{\underline{g}} + \sqrt{2-\underline{g}}\right) + c.
\end{equation}
This condition gives an upper--bound $\underline{g} < \underline{g}^* \approx 0.285$. Under this condition, an equilibrium exists where some creators will not strictly prefer GenAI, even without any compensation. Intuitively, the GenAI model which satisfies the bound $\underline{g} < \underline{g}^*$ is not sufficiently well--trained to serve all niches of consumers, leaving some gaps in the market for the manual creators. For any $\underline{v} \in (1-\gamma, (1-\gamma)/\sqrt{\underline{g}})$ we have: 
\begin{equation*}
    \underline{r}(\underline{v}) = \frac{\underline{v}^2/(1-\gamma)^2-1/2}{\overline{g}/2} = \frac{2\underline{v}^2/(1-\gamma)^2-1}{\overline{g}} \in \left(\frac{1}{\overline{g}},\frac{1}{\underline{g}}\right),
\end{equation*}
and therefore we have from (\ref{mgmpvbarvdefinition}) that:
\begin{equation*}
    M_g(\underline{v}) = \frac{\sqrt{\overline{g}}}{2}, \quad M_p(\underline{v}) = \frac{1}{2}\left(\frac{1-\gamma}{\underline{v}}\right)^2, \quad \widetilde{V}^{\underline{v}}(g) = \frac{1-\gamma}{\sqrt{2 - \left(\frac{1-\gamma}{\underline{v}}\right)^2}}+c\cdot \frac{\underline{g}}{\overline{g}}.
\end{equation*}
We can find the indifference level $\underline{v}_0$ without compensation by solving $\underline{v} = \widetilde{V}^{\underline{v}}(g) + c$, or equivalently:
\begin{equation}\label{v0equationuniform}
    \underline{v} = \frac{1-\gamma}{\sqrt{2 - \left(\frac{1-\gamma}{\underline{v}}\right)^2}} + \frac{2c}{\overline{g}}.
\end{equation}
Note that the RHS of (\ref{v0equationuniform}) is monotonically continuously decreasing such that it approaches $1-\gamma + \frac{2c}{\overline{g}}$ as $\underline{v}\searrow 1-\gamma$ and approaches $\frac{1-\gamma}{\sqrt{\overline{g}}} + \frac{2c}{\overline{g}}$ as $\underline{v}\nearrow (1-\gamma)/\sqrt{\underline{g}}$. It is clear that the RHS is greater than $\underline{v}$ as $\underline{v}\searrow 1-\gamma$. To show that the RHS of (\ref{v0equationuniform}) is lower than $\underline{v}$ as $\underline{v}\nearrow (1-\gamma)/\sqrt{\underline{g}}$ we can rearrange the inequality (\ref{nopureaiconditionuniform}):
\begin{equation*}
    \frac{1-\gamma}{\sqrt{\underline{g}}} > \frac{\underline{g}}{2}\left(\frac{1-\gamma}{\sqrt{\underline{g}}}\right) + \frac{1-\gamma}{2}\sqrt{\overline{g}} + c \quad \implies \quad \frac{1-\gamma}{\sqrt{\underline{g}}} > \frac{2}{\overline{g}}\left(\frac{1-\gamma}{2}\sqrt{\overline{g}} + c\right) = \frac{1-\gamma}{\sqrt{\overline{g}}} + \frac{2c}{\overline{g}}.
\end{equation*}
Therefore, the solution $\underline{v}_0$ to (\ref{v0equationuniform}) is guaranteed to exist in $(1-\gamma, (1-\gamma)/\sqrt{\underline{g}})$. Now, any $\underline{v} \in (1-\gamma, \underline{v}_0]$ corresponding to an equilibrium under the platform's compensation scheme $(\underline{v}, w)$ where $w := \widetilde{V}^{\underline{v}}(g) + c - \underline{v}$. Thus, the compensation $w$ increases as we lower $\underline{v}$ towards $1-\gamma$. The resulting equilibrium $(\beta, q)$ is given by Proposition \ref{generalequilibriumproposition}:
\begin{equation*}
    \beta(x) = \begin{cases}
        1, &\quad x \in [0,1/2]\\
        1 - \frac{2}{\overline{g}}\left[\left(\frac{1-\gamma}{\underline{v}}\right)^2-\underline{g}\right], &\quad x\in (1/2, 1]
    \end{cases}, \quad q(x) = \begin{cases}
        2 - \left(\frac{1-\gamma}{\underline{v}}\right)^2, &\quad x \in [0,1/2]\\
        \left(\frac{1-\gamma}{\underline{v}}\right)^2, &\quad x\in (1/2, 1]
    \end{cases}
\end{equation*}
and we have $\mathcal{X}_{\text{AI}} := [0,1/2], \mathcal{X}_{\text{IN}} := (1/2,1]$. Clearly, we have that $q(x)$ approaches $p(x)$ for each $x$ as $\underline{v}\searrow 1-\gamma$. From Proposition \ref{platformrevennueproposition}, the platform's revenue is:
\begin{equation*}
    R(\underline{v}) = \frac{\gamma}{2}\left(\frac{1-\gamma}{\underline{v}}\right) + \frac{\gamma}{2}\sqrt{2 - \left(\frac{1-\gamma}{\underline{v}}\right)^2}
\end{equation*}
and the platform's profit is:
\begin{equation*}
    \Pi^{\underline{v}, w}(q) = \frac{1}{2}\left(\frac{1-\gamma}{\underline{v}}\right) + \frac{1}{2}\sqrt{2 - \left(\frac{1-\gamma}{\underline{v}}\right)^2} - \frac{1-\gamma}{\sqrt{2 - \left(\frac{1-\gamma}{\underline{v}}\right)^2}} - \frac{c}{\overline{g}}\left(\frac{1-\gamma}{\underline{v}}\right)^2
\end{equation*}
for $\underline{v} \in (1-\gamma, (1-\gamma)/\sqrt{\underline{g}})$.

Finally, we turn our attention to the case where $(1-\gamma)\mathbb{E}_{x\sim g}\sqrt{r(x)} + c > (1-\gamma)\sup_{x\in \mathcal{X}}\sqrt{r(x)}$. From (\ref{nopureaiconditionuniform}), this condition is equivalent to $\underline{g} \geq \underline{g}^*$. We know from Proposition \ref{pureailemma} that if the platform does not provide any compensation: $w = 0$, then the only equilibrium is $q = g$, where all creators strictly use GenAI. To encourage original content creation, the platform can choose $\underline{v} \in (1-\gamma, (1-\gamma)\sup_{x\in \mathcal{X}}\sqrt{r(x)}) = (1-\gamma, (1-\gamma)/\sqrt{\underline{g}})$, then we have from Proposition \ref{solutionv0lemma} that we automatically have $\underline{v} < (1-\gamma)/\sqrt{\underline{g}} < \widetilde{V}^{\underline{v}}(g)+c$. It follows that $w := \widetilde{V}^{\underline{v}}(g)+c-\underline{v}$ is bounded below, away from zero, by $\underline{w} := \frac{1-\gamma}{\sqrt{\overline{g}}} - \frac{1-\gamma}{\sqrt{\underline{g}}} + \frac{2c}{\overline{g}}$ for all $\underline{v} \in (1-\gamma, (1-\gamma)/\sqrt{\underline{g}})$, that is, there is a minimum compensation needed for the scheme to take effect.

Figure \ref{fig:optimal_compensation_uniform} shows a plot of the platform's profit under an example set of parameters. When $\underline{g} = 0.2 < \underline{g}^*$, we can see that it is optimal for the platform to set $w^*\approx 0.09$. When $\underline{g} = 0.3 > \underline{g}^*$, under no compensation, the equilibrium is $q = g$, and a minimum amount of compensation $\underline{w} = \frac{1-\gamma}{\sqrt{\overline{g}}} - \frac{1-\gamma}{\sqrt{\underline{g}}} + \frac{2c}{\overline{g}} \approx 0.017$ is needed for the compensation scheme to be effective. However, the platform can benefit from the compensation scheme, as we can see that it is optimal to set $w^*\approx 0.095$. Lastly, when $\underline{g} = 0.4$, the GenAI model is relatively well--trained and a large compensation $\underline{w} \approx 0.068$ is needed to incentivize creators to manually compete. In this case, it is best for the platform to pay no compensation and let all creators strictly use GenAI.
 
\begin{figure}[ht!]
    \centering
    \includegraphics[width=0.8\linewidth]{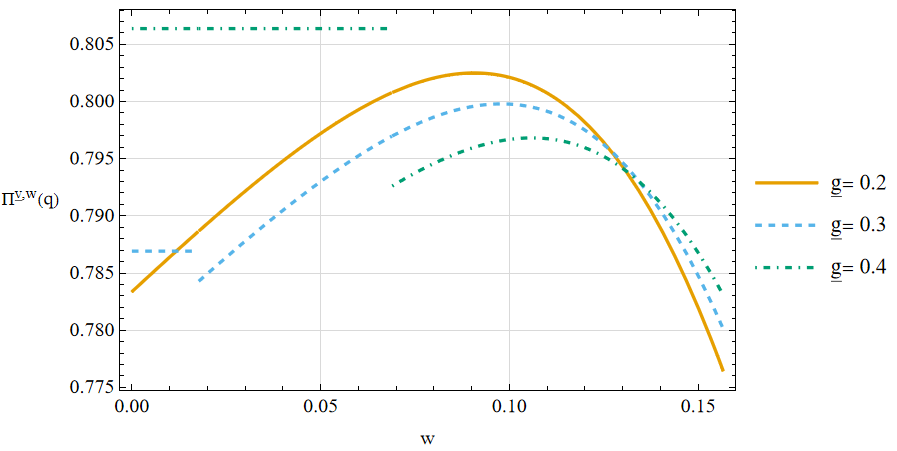}
    \caption{A plot of platform's profit $\Pi^{\underline{v}, w}(q)$ with $c = 1-\gamma = 0.15$, under the compensation scheme $(\underline{v}, w)$, as a function of $w$ when $\underline{g} = 0.2$, $0.3$, and $0.4$. When $\underline{g} \geq \underline{g}^*$, the equilibrium is $q = g$ and the compensation scheme only effective when $w > \underline{w} \geq 0$. We follow Proposition \ref{pureailemma} and assume that the platform implement the dominant strategy of no compensation when $w\leq \underline{w}$, hence a negative jump when $w\geq \underline{w}$ as the platform starts paying the compensation.}
    \label{fig:optimal_compensation_uniform}
\end{figure}

\subsection{Mixed Gaussian Simulation}\label{mixedgaussiansim}

Consider the distribution of consumer preferences given by: $p := 0.4\mathcal{N}(-2,0.5) + 0.6\mathcal{N}(+2,0.5)$. This is a distribution on $\mathbb{R}$, but we shall consider $\mathcal{X} := [-4,+4]$, clipping all data points to be in this compact range. We will use $\gamma = 0.9$, $c = 1-\gamma$, $\alpha=1/2$, and $N=26500$ for sampling purposes, throughout this example. For the purpose of modeling the platform's recommender system in our simulation, we partition $\mathcal{X}$ into $100$ bins of equal size. A consumer $x \in \mathcal{X}$ will only be presented with content $y \in \mathcal{X}$ in the same bin, and we assume that the revenue produced from a given bin follows the Cobb--Douglas form: $\sqrt{(\#\text{consumers in the bin})\cdot (\#\text{contents in the bin})}$. 

We use the score--based SDE diffusion architecture (\cite{song2020score}) for our GenAI model $g$ which will be available to all creators. First, we consider when $g$ is not well--trained. We use a small training dataset of $n = 2650$ i.i.d. drawn points from $p$ to train $g$ over $8$ epochs. The simulation proceeds under the platform's compensation scheme $(\underline{v}, w)$ as follows. We draw $N$ i.i.d. consumers from $p$ and $N$ i.i.d. data points from $p$ as the existing contents. A creator observes the consumers and existing contents in the bin she belongs to and forms a belief on the compensated revenue from her manual creation. We generate $N$ contents from GenAI, compute the compensated revenue from each generated content, then let the average be the current expected revenue from using GenAI for the creators. We draw $N$ i.i.d. creators from $p$. Each creator makes a utility maximization creation decision between manual creation and using GenAI, based on the drawn consumers, the existing contents, and the current expected revenue from GenAI. We update the existing contents and repeat the process, after several rounds, the content distribution converges to the equilibrium $q$.

We show the market outcome under a few different choices of $(\underline{v}, w)$ compensation scheme in Figure \ref{fig:oneshotexperiment}. Without compensation, the contents on the entire platform appears to be GenAI generated, as creators choose to avoid the manual production cost. Given that the GenAI is not well--trained, we can see a clear distortion between the consumer preference and the available contents on the platform. By implementing $(\underline{v},w) = (0.110, 0.150)$, the platform incentivizes many creators to create original contents, especially those with $|x| > 2$ where the consumers along the Gaussian tails are under--served by GenAI. The platform can further expand the compensation scheme by lowering the revenue--threshold. This improves revenue, but at a higher cost. We can see in Table \ref{tab:oneshotrevenue} that $(0.110, 0.150)$ yields a higher profit compared to no compensation, but the profit decreases when the platform further lowers the threshold to $(0.105, 0.150)$.

\begin{figure}[ht!]
    \centering
    \includegraphics[width=0.33\linewidth]{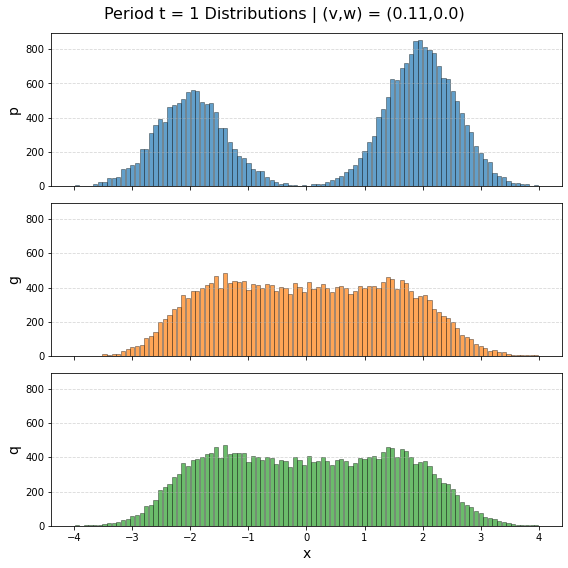}\includegraphics[width=0.33\linewidth]{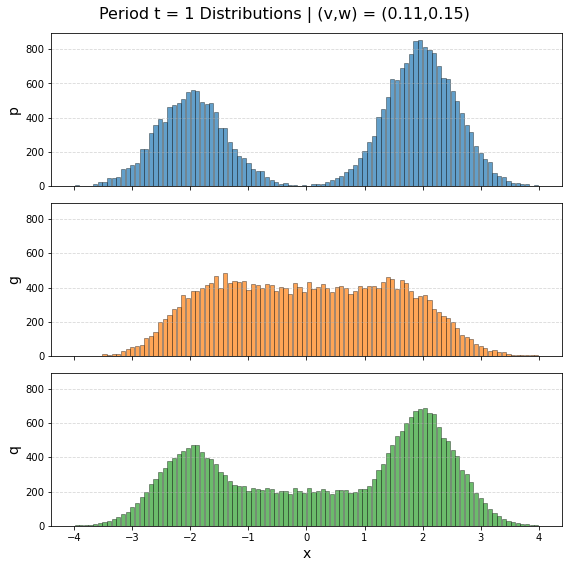}\includegraphics[width=0.33\linewidth]{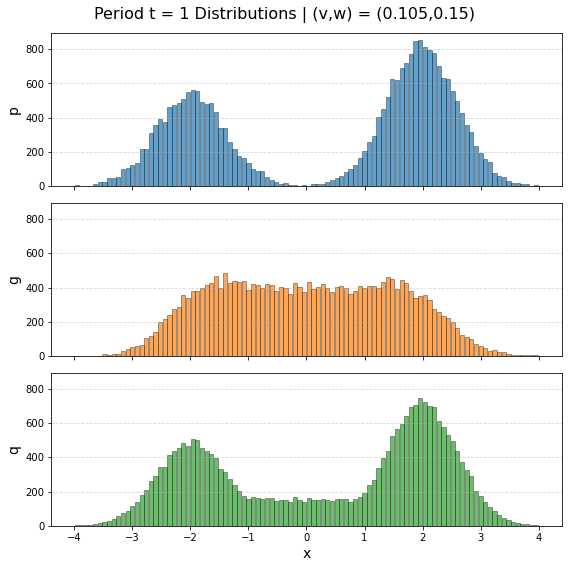}
    \caption{Simulation outcome under the compensation scheme $(\underline{v}, w) = (0.110,0)$ (no compensation), $(0.110,0.150)$, and $(0.105,0.150)$. The $N$ consumers is shown in top--row, and the $N$ contents on the platform is shown in bottom--row. The middle--row shows $N$ contents generated from the GenAI model we trained.}
    \label{fig:oneshotexperiment}
\end{figure}

\begin{table}[ht!]
    \centering
    \begin{tabular}{@{} l *{3}{c} @{}}
        \toprule
        $(\underline{v}, w)$ & $(0.110, 0.0)$ & $(0.110, 0.150)$ & $(0.105, 0.150)$ \\
        \midrule
        $R^{\underline{v}, w}(q)$ & 0.781 & 0.862 & 0.877 \\
        $\Pi^{\underline{v}, w}(q)$ & 0.781 & 0.806 & 0.789 \\
        \bottomrule
    \end{tabular}
    \caption{Simulated platform's revenue and profit under $(\underline{v}, w)$.}
    \label{tab:oneshotrevenue}
\end{table}

Although we have confined ourselves to a single--period game in our analysis, an extension to multi--period with short--lived consumers and creators is straight--forward, and can easily be explored via simulation. Therefore, let us continue with the setting in this example, with the game repeated for $t=1,\cdots,T:=10$ periods, where the $t$ period version of GenAI is trained on the $t-1$ period contents on the platform ($N$ training data points). We also increase the number of training epochs to $50$. The equilibrium outcome in each period is simulated as previously described. We compare the market outcome under various choices of the platform's compensation scheme $(\underline{v}, w)$. For simplicity, we restrict our attention to only static compensation strategies, where the platform applies a fixed $(\underline{v}, w)$ to all periods, leaving dynamic strategies to future research. We consider this simulation to be a game--theoretic extension of the \emph{model collapse} results in \cite{shumailov2023curse}. 

From Figure \ref{fig:multiperiodsimulation}, when the platform offers no compensation $(\underline{v}, w) = (0.110,0.0)$, we observe a classic case of model collapse. The initial GenAI model at $t = 1$ is well--trained on the human--generated content, but from the small initial distortion in distribution, the later version of GenAI collapses into a distribution centered around $x=0$ by $t=10$. As a result, without compensation scheme, the platform's contents are mostly concentrated around $x=0$ by $t=10$, despite most consumers prefer contents around $x=\pm 2$. We experimented with different levels of compensation, and we show that with the lower threshold $\underline{v}$ the platform can regulate the level of GenAI usage and retains the qualitative properties of both GenAI model and the contents distribution even after $t=10$ periods. From Table \ref{tab:multiperiodrevenue}, the compensation scheme $(\underline{v}, w) = (0.105, 0,150)$ appears to offer the highest total profit as well as the highest final period $t = 10$ profit. 

\begin{figure}[ht!]
    \centering
    \includegraphics[width=0.32\linewidth]{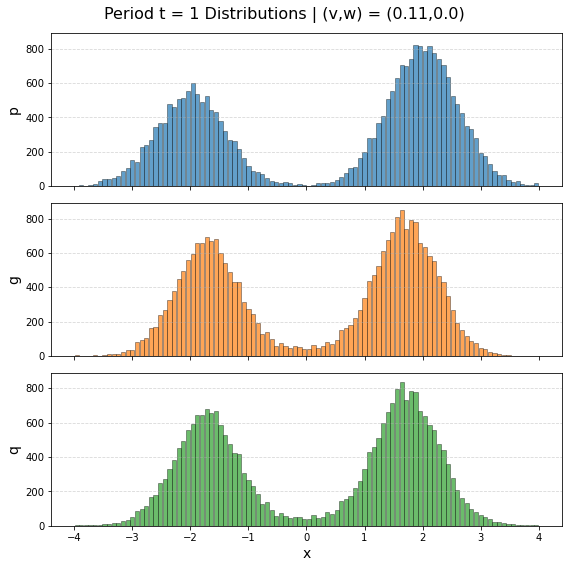}    \includegraphics[width=0.32\linewidth]{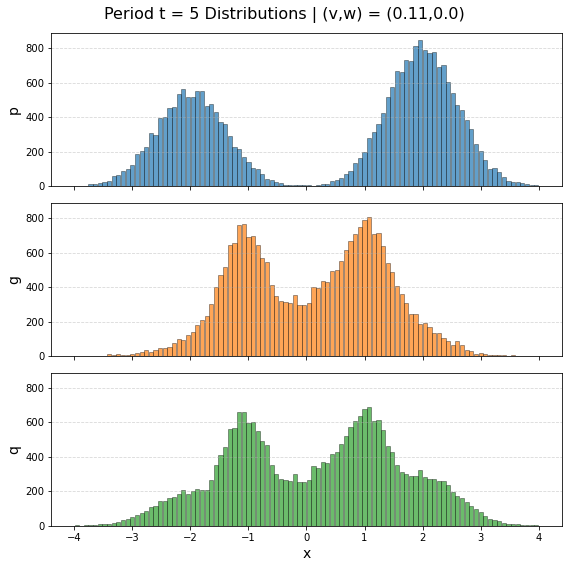}    \includegraphics[width=0.32\linewidth]{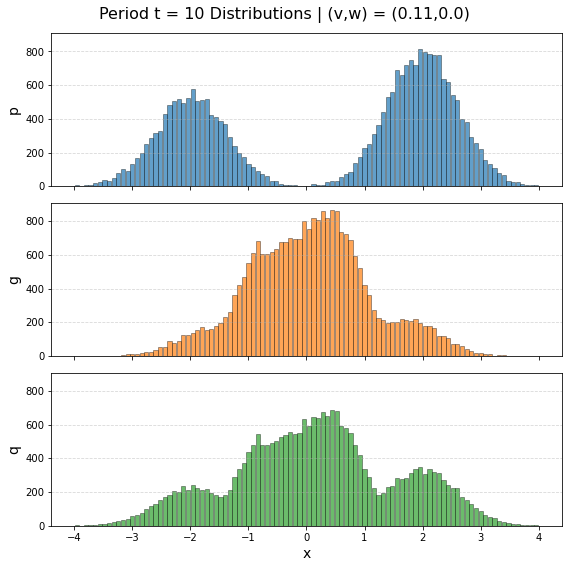}\\
    \includegraphics[width=0.32\linewidth]{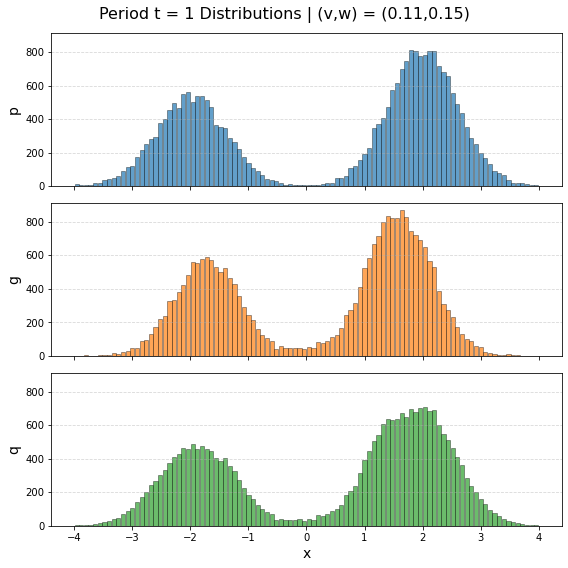}    \includegraphics[width=0.32\linewidth]{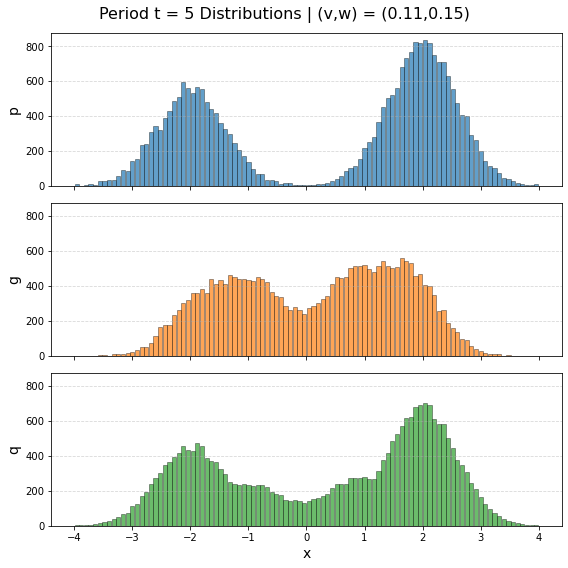}    \includegraphics[width=0.32\linewidth]{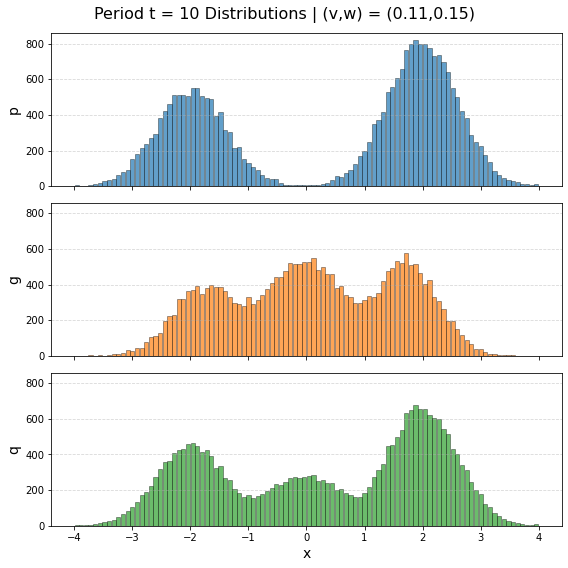}\\
    \includegraphics[width=0.32\linewidth]{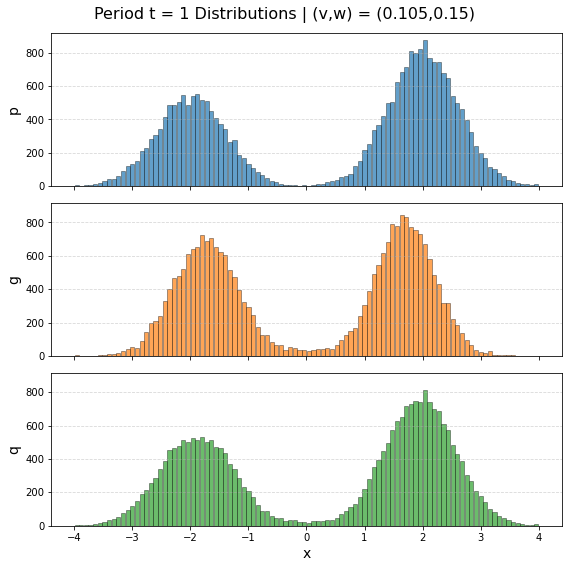}    \includegraphics[width=0.32\linewidth]{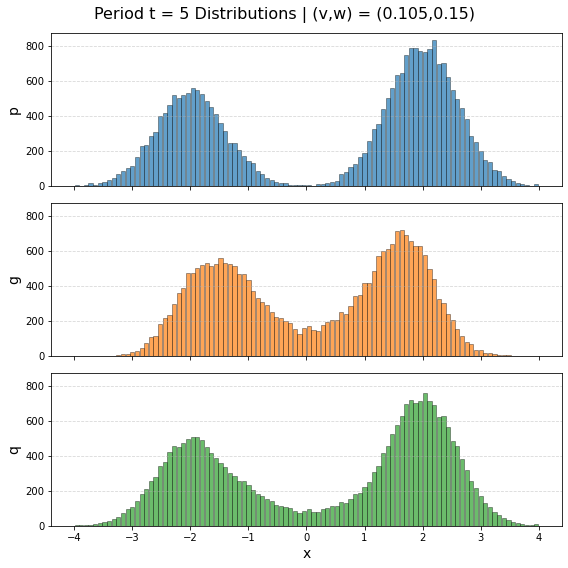}    \includegraphics[width=0.32\linewidth]{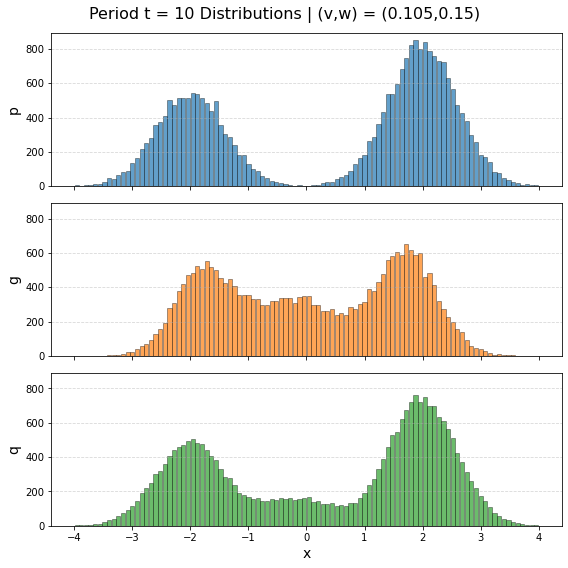}\\
    \includegraphics[width=0.32\linewidth]{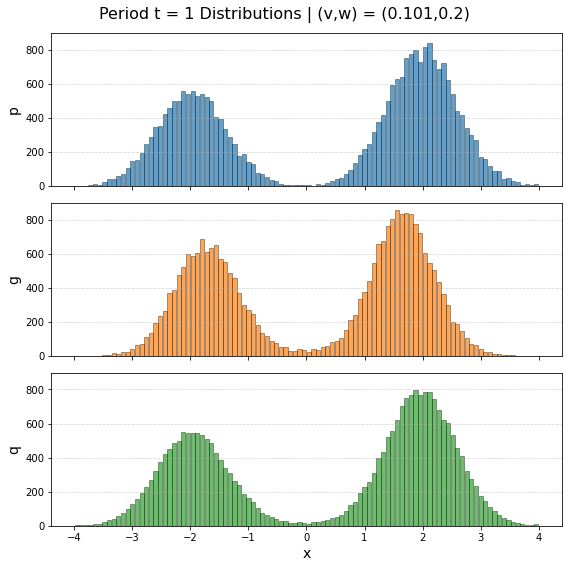}    \includegraphics[width=0.32\linewidth]{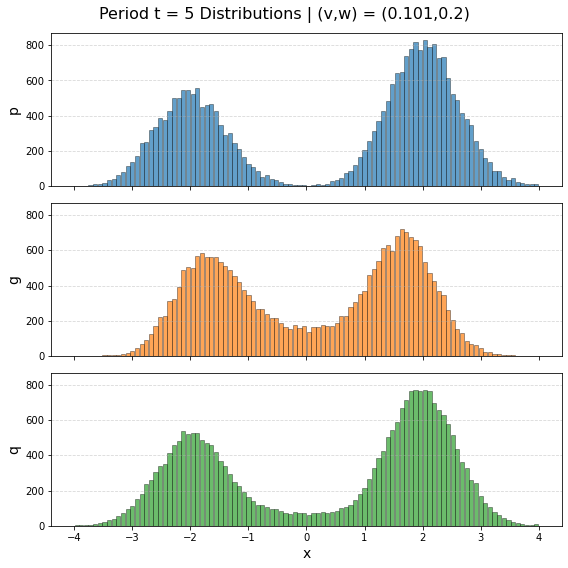}    \includegraphics[width=0.32\linewidth]{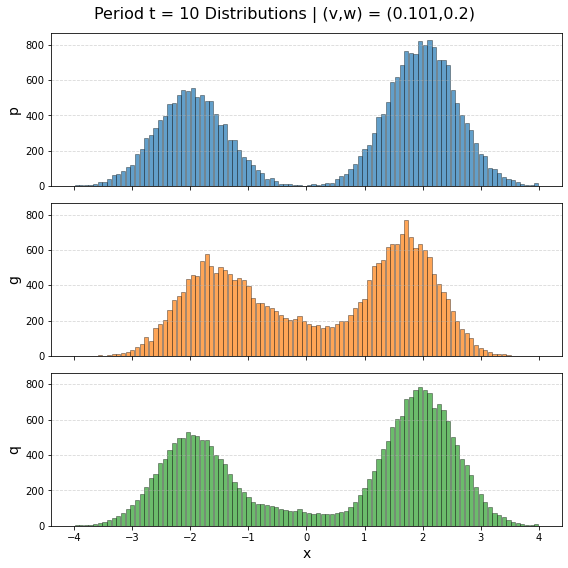}\\
    \caption{Multi--periods simulation outcomes under the compensation scheme $(\underline{v}, w) = (0.110,0.0)$ (no compensation, 1st row), $(0.110, 0.150)$ (2nd row), $(0.105, 0.150)$ (3rd row), and $(0.101, 0.200)$ (4th row), at $t = 1$ (1st column), $t=5$ (2nd column), and $t=10$ (3rd column).}
    \label{fig:multiperiodsimulation}
\end{figure}

\begin{table}[ht!]
    \centering
    \begin{tabular}{@{} l *{4}{c} @{}}
        \toprule
        $(\underline{v}, w)$ & $(0.110, 0.0)$ & $(0.110, 0.150)$ & $(0.105, 0.150)$ & $(0.101, 0.200)$ \\
        \midrule
        $R^{\underline{v}, w}_{t=T}(q)$ & 0.705 & 0.854 & 0.876 & 0.891\\
        $\Pi^{\underline{v}, w}_{t=T}(q)$ & 0.705 & 0.788 & 0.801 & 0.762\\
        $\sum_{t=1}^T\Pi^{\underline{v}, w}_{t}(q)$ & 7.602 & 8.163 & 8.191 & 7.759 \\
        \bottomrule
    \end{tabular}
    \caption{Simulated platform's revenue and profit under $(\underline{v}, w)$ at the final period $t = T$, and the total profit over $t=1,\cdots,T=10$ periods.}
    \label{tab:multiperiodrevenue}
\end{table}

Our two simulations highlights two--folds benefits of compensation scheme. First, the short--term benefit: to reduce the content distribution distortion by encouraging more human--generated contents in the given period. This is especially relevant when the GenAI model is poorly trained. Second, the long--term benefit: to reduce the data pollution, sustaining the GenAI model from collapsing. This is relevant even for a well-trained GenAI model.


\section{Conclusion}

This this paper, we argued that a simple economically--driven compensation scheme based on revenue--threshold can incentivizes more creation of high--value human--generated contents, reduces the data pollution, and improves the platform's profit. We showed that even with access to more information, or an AI--detector, the platform's optimal choice of a compensation scheme is not too different from the revenue--threshold scheme which does not require any expensive computation. We make some further remarks and comments on potential future directions as follows. 

In our model, the introduction of GenAI appears to universally improve the creators welfare. This can be understood since we have assumed that all creators are equally able to access GenAI. Therefore, the new technology such as GenAI does not harm those who can adopt, but the problem is when there is an inequality of adoption, see \cite{kim2026does} for a similar conclusion. For example, let us assume that $1-\gamma > c$ and that there are some (zero measure amount of) creators $x \in \mathcal{X}$ who cannot adopt GenAI. At the pre--GenAI equilibrium, we have $q = p$ and each creator revenue would be $1-\gamma-c > 0$. After GenAI is introduced, any non--adopter $x$ with $r(x) = p(x)/g(x) < \int_{\mathcal{X}}\beta(y)p(y)dy$ (i.e. $V(x;q) = (1-\gamma)(p(x)/q(x))^\alpha < 1-\gamma$) would face a high competition from GenAI contents and obtains a lower revenue than $1-\gamma-c$. Such a non--adopter $x$ is likely a mainstream creator who have contributed to the GenAI training dataset in the previous period, leading to a high $g(x)$. Our result shows that a compensation scheme would have lowered $\int_{\mathcal{X}}\beta(y)p(y)dy$, and improves the welfare of a traditional creator such as $x$ as the platform becomes less flooded with repetitive contents in the style of $x$. An extension of our model to account for manual--only and GenAI--only creators would be a natural future direction.

One of the motivations for creator compensation is to acknowledge their contribution to the future GenAI training data. Although we focused on a single--period game in this work, apart from what we touched briefly in \S\ref{mixedgaussiansim}, we may interpret any additional compensated human--generated content is also a compensated contribution to the future GenAI training data. A promising research direction is to analyze the multi--period version of our model, characterize the optimal dynamic platform's compensation strategy, and quantify the impacts of the compensated human--generated contents on preventing the model collapse.


\printbibliography

\appendix
\begin{center}
{\Large Appendix}
\end{center}
\section{Omitted Proofs}

\subsection{Proof of Lemma \ref{pregenailemma}}
\begin{proof}
    Suppose that $(\beta, q)$ is an equilibrium under $W$. From (\ref{generaldecomposedformwitho}) without GenAI, $q$ must take the form $q(x) = \beta_{\text{H}}(x)p(x)$ for some $\beta_{\text{H}}(x) \in [0,1]$. The creator $x$ profit for creating a content $x$ is $U^W(x;q) = (1-\gamma)/\beta_{\text{H}}(x)^\alpha + W(x;q)-c$. If $1-\gamma > c - W(x;q)$ then $U^W(x;q) > 0$, and therefore, it must be the case that $\beta_{\text{H}}(x) = 1$. Otherwise, we have an indifference condition $U^W(x;q) = 0$, which implies $\beta_{\text{H}}(x) = \left(\frac{1-\gamma}{c-W(x;q)}\right)^{1/\alpha}$. Therefore, the given $(\beta, q)$ is the unique equilibrium under the compensation scheme $W$ as claimed. If $1-\gamma\geq c$ then we already have $q = p$, and the platform can set $W = 0$ to achieve the best possible profit: $\Pi^W(q) = \gamma$. Now, suppose that $1-\gamma < c$, we can also restrict our attention to $W$ such that $W(x;q) < c$. Substituting the equilibrium content distribution $q(x) = \left(\frac{1-\gamma}{c-W(x;q
    )}\right)^{1/\alpha}p(x)$ into (\ref{platformrevenue}), we find the platform equilibrium profit to be:
    \begin{equation}\label{pregenaiplatformprofit}
        \Pi^W(q) = \int_{\mathcal{X}}\left(\frac{c\gamma-W(x;q)}{1-\gamma}\right)\left(\frac{1-\gamma}{c-W(x;q)}\right)^{1/\alpha}p(x)dx.
    \end{equation}
    The platform can maximize profit by choosing $W^*(x;q)$ which maximizes the integrand at each $x$. Evidently, we have $W^*(x;q) \in [0,c\gamma]$, and since the integrand is independent of $x$, apart from the factor of $p(x)$, we have that $W^*(x;q) = W^*$ is independent of $x$ and $q$. The sign of the derivative of the first two factors in the integrand is determined by the sign of $c(\gamma-\alpha) - (1-\alpha)W(x;q)$. We conclude that $W^* = \max\left\{0,c\frac{\gamma-\alpha}{1-\alpha}\right\}\leq c\gamma$ and the corresponding optimal platform's profit follows.
\end{proof}

\subsection{Proof of Lemma \ref{xbasedexistencelemma}}
\begin{proof}
    Let us fix a small $\delta > 0$, then we consider the set $\mathcal{S}_\delta := L^2(\mathcal{X},[0,1])\times L^2(\mathcal{X}, [0,\overline{V}_\delta]) \subset L^2(\mathcal{X}, \mathbb{R})^2$, where $\overline{V}_\delta := \frac{1-\gamma}{\delta^\alpha}\sup_{x\in \mathcal{X}}r(x)^\alpha$, and a map $\Phi_\delta : \mathcal{S}_\delta \rightarrow \mathcal{S}_\delta$ given by $\Phi_\delta(\beta, V) = (\tilde{\beta}, \widetilde{V})$ where we have for each $x \in \mathcal{X}$:
    \begin{equation}\label{phimapbetavdefinition}
        \begin{aligned}
            \tilde{\beta}(x) &:= \clip\left(1 + \frac{\int_{\mathcal{X}}\beta(z)p(z)dz}{r(x)}-\left(\frac{1-\gamma}{\max\{V^W(g)+c-W(x),\delta\}}\right)^{1/\alpha},\delta,1\right)\\
            \widetilde{V}(x) &:= (1-\gamma)\left(1-\tilde{\beta}(x) + \frac{\int_{\mathcal{X}}\tilde{\beta}(z)p(z)dz}{r(x)}\right)^{-\alpha}
        \end{aligned},
    \end{equation}
    $V^W(g) := \int_{\mathcal{X}}(V(z) + W(z))g(z)dz$, and $\clip(v,\underline{v},\overline{v}) := \max\{\underline{v}, \min\{\overline{v}, v\}\}$. Endow $\mathcal{S}_\delta$ with the product of weak topologies of $L^2(\mathcal{X};[0,1])$ and $L^2(\mathcal{X};[0, \overline{V}_\delta])$. By Banach–Alaoglu theorem, $\mathcal{S}_\delta$ is a compact convex subset of $L^2(\mathcal{X};\mathbb{R})^2$ which is a locally convex space under the weak topology. We show that $\Phi_\delta$ is continuous as follows. Consider $\{(\beta_i, V_i)\}_{i=1}^\infty$ such that $(\beta_i, V_i) \rightharpoonup (\beta, V) \in \mathcal{S}_\delta$, then we have $\int_{\mathcal{X}}\beta_i(z)p(z)dz \rightarrow \int_{\mathcal{X}}\beta(z)p(z)dz$ and $\int_{\mathcal{X}}V_i(z)g(z)dz \rightarrow \int_{\mathcal{X}}V(z)g(z)dz$ by definition. Let $(\tilde{\beta}_i, \widetilde{V}_i) := \Phi_\delta(\beta_i, V_i)$ and $(\tilde{\beta}, \widetilde{V}) := \Phi_\delta(\beta, V)$, then we can see from the continuity of the RHS of (\ref{phimapbetavdefinition}) in $\int_{\mathcal{X}}\beta(z)p(z)dz$ and $\int_{\mathcal{X}}V(z)g(z)dz$ that we have a point--wise convergence: $\tilde{\beta}_i(x) \rightarrow \tilde{\beta}(x), \widetilde{V}_i(x)\rightarrow \widetilde{V}(x)$, for all $x\in \mathcal{X}$. Since $\tilde{\beta}_i$ and $\widetilde{V}_i$ are uniformly bounded, the point--wise convergence implies strong convergence by dominated convergence theorem, which implies weak convergence: $(\tilde{\beta}_i, \widetilde{V}_i)\rightharpoonup (\tilde{\beta}, \widetilde{V})$, proving that $\Phi_\delta$ is continuous with respect to the weak topology. From Schauder-–Tychonoff fixed--point theorem, there exists a fixed--point $(\beta^*_\delta, V^*_\delta) \in \mathcal{S}_\delta$ of $\Phi_\delta$.

    Recall that $\delta > 0$ is arbitrary, let us consider the sequence of fixed--points $\{(\beta^*_\delta, V^*_\delta)\}_{\delta > 0} \subset L^2(\mathcal{X},[0,1])\times L^2(\mathcal{X}, \mathbb{R}_{\geq 0})$ and the limit $\delta \searrow 0$ in the following. Suppose that $\int_{\mathcal{X}}\beta^*_\delta(z)p(z)dz \searrow 0$, or that we are able to refine to such a subsequence. By further refining a subsequence if necessary, we have $\beta^*_\delta(x)\searrow 0$ for almost every $x\in \mathcal{X}$ as $\delta \searrow 0$. Consequently, we have from (\ref{phimapbetavdefinition}) that $V^*_\delta(x)\rightarrow 1-\gamma$ for almost all $x \in \mathcal{X}$, hence $V^{*W}_\delta(g) + c - W(x) \rightarrow 1-\gamma + c + \mathbb{E}_{y \sim g}W(y)-W(x)$, where $V^{*W}_\delta(g) := \int_{\mathcal{X}}(V^*_\delta(z)+W(z))g(z)dz$. But $W(x) \leq \mathbb{E}_{y \sim g}W(y)$ on some subset $\Delta \subset \mathcal{X}$ of positive measure, so $\frac{1-\gamma}{V^{*W}_\delta(g) + c - W(x)} < \frac{1-\gamma}{1-\gamma+c} < 1$, and hence $\beta^*_\delta(x) \geq 1 - \frac{1-\gamma}{V^{*W}_\delta(g) + c - W(x)} > \frac{c}{1-\gamma+c} > 0$ over $\Delta$ by (\ref{phimapbetavdefinition}) for all sufficiently small $\delta > 0$, contradicting $\int_{\mathcal{X}}\beta^*_\delta(z)p(z)dz\searrow 0$. We conclude that $\int_{\mathcal{X}}\beta^*_\delta(z)p(z)dz$ is bounded away from $0$ for all sufficiently small $\delta > 0$. Consequently, $V^*_\delta$ is uniformly bounded by some $\overline{V} > 0$ for all sufficiently small $\delta > 0$. Hence, $\{(\beta^*_\delta, V^*_\delta)\}_{\delta > 0}$ is a sequence of fixed--points of $\Phi_\delta$ in a weakly compact set $\mathcal{S} := L^2(\mathcal{X};[0,1])\times L^2(\mathcal{X};[0,\overline{V}])\subset L^2(\mathcal{X};\mathbb{R})^2$. 
    
    Note that the definition (\ref{phimapbetavdefinition}) also works at $\delta=0$ to define $\Phi_0 := \Phi_{\delta = 0}$ at any $(\beta, V)$ such that $\widetilde{V}(x)$ is well--defined for all $x$, where we set $\tilde{\beta}(x) = 0$ if $W(x) \geq V^W(g) + c$. By extracting a weakly convergence subsequence if needed, let us assume that $(\beta^*_\delta, V^*_\delta) \rightharpoonup (\beta^*, V^*) \in \mathcal{S}$. We claim that $(\tilde{\beta}^*_\delta, \widetilde{V}^*_\delta) := \Phi_{\delta}(\beta^*_\delta, V^*_\delta) \rightharpoonup (\tilde{\beta}^*, \widetilde{V}^*) := \Phi_0(\beta^*, V^*)$ as $\delta \searrow 0$. Note that $(\tilde{\beta}^*_\delta, \widetilde{V}^*_\delta) = \Phi_\delta(\beta^*_\delta, V^*_\delta) = (\beta^*_\delta, V^*_\delta)$ for all $\delta > 0$ by the fixed--point property, but we denote them differently to conceptually distinguishes the LHS of (\ref{phimapbetavdefinition}) from the RHS. To see the convergence, we note that: $\int_{\mathcal{X}}\beta^*_\delta(z)p(z)dz \rightarrow \int_{\mathcal{X}}\beta^*(z)p(z)dz$ and $\int_{\mathcal{X}}V^*_\delta(z)g(z)dz \rightarrow \int_{\mathcal{X}}V^*(z)g(z)dz$, and that $\tilde{\beta}^*_\delta(x) = \delta$ for all sufficiently small $\delta > 0$ if $W(x) \geq \int_{\mathcal{X}}(V^*_\delta(z) + W(z))g(z)dz + c$. This establishes the point--wise convergence: $\tilde{\beta}^*_\delta(x) \rightarrow \tilde{\beta}^*(x)$ and $\widetilde{V}^*_\delta(x)\rightarrow \widetilde{V}^*(x)$, for all $x \in \mathcal{X}$, which implies $\tilde{\beta}^*(x) \in [0,1], \widetilde{V}^*(x) \in [0,\overline{V}]$ because $\tilde{\beta}^*_\delta(x) = \beta^*_\delta(x) \in [0,1], \widetilde{V}^*_\delta(x) = V^*_\delta(x) \in [0,\overline{V}]$ for all $x\in \mathcal{X}$ and $\delta > 0$. Hence $(\tilde{\beta}^*_\delta, \widetilde{V}^*_\delta)\rightharpoonup (\tilde{\beta}^*, \widetilde{V}^*)$ as claimed. It follows that $(\beta^*_\delta, V^*_\delta) \rightharpoonup \Phi_0(\beta^*, V^*) \in \mathcal{S}$, hence $(\beta^*, V^*) = \Phi_0(\beta^*, V^*)$ by the uniqueness of the limit. Finally, we recognize from (\ref{phimapbetavdefinition}) with $\delta = 0$ that if $(\beta^*, V^*) = \Phi_0(\beta^*, V^*)$ then $(\beta^*, q^*)$ is an equilibrium under $W$, where $q^*(x) = (1-\beta^*(x))p(x) + g(x)\int_{\mathcal{X}}\beta^*(z)p(z)dz$. 
\end{proof}

\subsection{Proof of Lemma \ref{rationalexpectationlemma}}
\begin{proof}
    The proof follows by checking the definition. Note that $V^W(x;q) = V(x;q) + \widetilde{W}(x;\tilde{q})$, therefore we have at $q = \tilde{q}$ that: $V^W(x;\tilde{q}) = V^{\widetilde{W}}(x;\tilde{q})$ and $V^W(g;\tilde{q}) = V^{\widetilde{W}}(g;\tilde{q})$. It follows that $\pi^W(x,\beta;\tilde{q}) = \pi^{\widetilde{W}}(x,\beta; \tilde{q})$ which means $\tilde{\beta}(x) \in \argmax_{\beta \in \Delta \{\text{H}, \text{AI}, \text{O}\}}\pi^{\widetilde{W}}(x, \beta; \tilde{q}) = \argmax_{\beta \in \Delta \{\text{H}, \text{AI}, \text{O}\}}\pi^W(x, \beta; \tilde{q})$, and lastly we have $\tilde{q}(x) = \tilde{\beta}_{\text{H}}(x)p(x) + g(x)\int_{\mathcal{X}}\tilde{\beta}_{\text{AI}}(z)p(z)dz$ by definition, proving that $(\tilde{\beta}, \tilde{q})$ is an equilibrium under $W$.
\end{proof}

\subsection{Proof of Lemma \ref{decisionundercompensationlemma}}
\begin{proof}
    Suppose that $\underline{v} \leq V^{\underline{v}, w}(g;q) + c\leq \underline{v}+w$. Then for any $x\in \mathcal{X}$ such that $V(x;q) \geq \underline{v}$, we have $U^{\underline{v}, w}(x;q) = V(x;q) + w - c \geq \underline{v} + w - c\geq V^{\underline{v}, w}(g;q)$. Note that the inequality $U^{\underline{v}, w}(x;q) \geq V^{\underline{v}, w}(g;q)$ is strict precisely if $V(x;q) > \underline{v}$ or $\underline{v}+w > V^{\underline{v}, w}(g;q)+c$, in other words, the creator $x$ strictly prefers manual creation if $V(x;q) > \underline{v}$ or $\underline{v}+w > V^{\underline{v}, w}(g;q)+c$. Therefore, if $V(x;q) = \underline{v}$, and $\underline{v}+w = V^{\underline{v}, w}(g;q)+c$ then $U^{\underline{v}, w}(x;q) = V^{\underline{v}, w}(g;q)$, so $x$ is indifferent between manual creation and the use of GenAI. If $V(x;q) < \underline{v}$ then $U^{\underline{v}, w}(x;q) = V(x;q)-c < \underline{v} - c \leq V^{\underline{v},w}(g;q)$ then $x$ strictly prefers to use GenAI, as claimed.

    Suppose that $\underline{v} + w < V^{\underline{v}, w}(g;q) + c$ then $\tilde{\underline{v}} := V^{\underline{v}, w}(g;q) + c - w > \underline{v}$ is the indifference threshold under $(\underline{v}, w)$ for the given common belief that the content distribution density is $q$. For all $\tilde{w} \in [0,w]$ we have: $V^{\tilde{\underline{v}}, \tilde{w}}(g;q) + c \geq \tilde{\underline{v}}$ because $V^{\underline{v}, w}(g;q) - w = V(g;q) - w\int_{\mathcal{X}}\mathbbm{1}[V(y;q)<\underline{v}]\cdot g(y)dy \leq V(g;q) \leq V^{\tilde{\underline{v}}, \tilde{w}}(g;q)$. Additionally, we have $\tilde{\underline{v}} + w = V^{\underline{v}, w}(g;q)+c \geq V^{\tilde{\underline{v}}, w}(g;q) + c$ because we stop paying compensation $w$ to any contents $y \in \mathcal{X}$ with $V(y;q) \in [\underline{v}, \tilde{\underline{v}})$. Since $V^{\tilde{\underline{v}}, \tilde{w}}(g;q)+c$ is linear in $\tilde{w} \in [0,w]$ with slope $\int_{\mathcal{X}}\mathbbm{1}[V(y;q)\geq \tilde{\underline{v}}]\cdot g(y)dy \leq 1$ which is less than the slope of $\tilde{\underline{v}}+\tilde{w}$, while $\tilde{\underline{v}}+w\geq V^{\tilde{\underline{v}}, w}(g;q)$ and $\tilde{\underline{v}} \leq V^{\tilde{\underline{v}}, \tilde{w}=0}(g;q) = V(g;q)$, therefore we can lower $\tilde{w} \in [0,w]$ until the two quantity intercepts, at which point we have: $\tilde{\underline{v}} + \tilde{w} = V^{\tilde{\underline{v}}, \tilde{w}}(g;q)+c \geq \tilde{\underline{v}}$.

    Suppose that $\underline{v} > V^{\underline{v}, w}(g;q) + c$ and consider any $x\in \mathcal{X}$ whose strictly prefers GenAI when there is no compensation: $U(x;q) = V(x;q) - c < V(g;q)$. Then $V(x;q) < \underline{v}$ and so $U^{\underline{v}, w}(x;q) = V^{\underline{v}, w}(x;q) - c = V(x;q)-c < V(g;q) \leq V^{\underline{v}, w}(g;q)$, thus, $x$ also strictly prefers GenAI under $(\underline{v}, w)$. Additionally, if none of $x \in \mathcal{X}$ strictly prefers manual creation, then $V(x;q) \leq V^{\underline{v}, w}(x;q) \leq V^{\underline{v}, w}(g;q)+c < \underline{v}$ for all $x\in \mathcal{X}$. Therefore, $V^{\underline{v}, w}(g;q) = V(g;q)$ and so $\underline{v}_0 = V^{\underline{v}, w}(g;q) + c$ is also the indifference threshold under $(\underline{v}, w)$.
\end{proof}

\subsection{Proof of Proposition \ref{rthresholdschemeproposition}}

\begin{proof}
    To simplify the notation, we define $M(\beta, \mathcal{X}) := \int_{\mathcal{X}}\beta(z)p(z)dz$.

    \emph{Part 1 (The compensation given to any $x,y\in \mathcal{X}$ with $r(x) = r(y)$ should be the same):} 
    
    Let $(\tilde{\beta}, \tilde{q})$ be an equilibrium under the given platform's compensation scheme $\widetilde{W}$. If $\tilde{\beta}(x) < 1$ for any $x \in \mathcal{X}$ then $V^{\widetilde{W}}(x;\tilde{q}) = V(x;\tilde{q}) + \widetilde{W}(x;\tilde{q}) \geq V^{\widetilde{W}}(g;\tilde{q}) + c$. Then it is safe to assume that $V^{\widetilde{W}}(x;\tilde{q}) = V^{\widetilde{W}}(g;\tilde{q}) + c$, otherwise lowering the compensation $\widetilde{W}(x;\tilde{q})$ will not change the creator $x$'s behavior while saving compensation costs for the platform. We denote by $\widetilde{\mathcal{X}}_{\text{IN}} \subset \mathcal{X}$ the set of creators who are indifferent between manual creation and using GenAI under $\widetilde{W}$. Similarly, if $x$ strictly prefers to use GenAI under $\widetilde{W}$, then $V^{\widetilde{W}}(x;\tilde{q}) = V(x;\tilde{q}) + \widetilde{W}(x;\tilde{q}) < V^{\widetilde{W}}(g;\tilde{q}) + c$ and it is safe to assume that $\widetilde{W}(x;\tilde{q}) = 0$, otherwise lowering $\widetilde{W}(x;\tilde{q})$ to zero will not change $\tilde{\beta}$ nor $\tilde{q}$ but will save the compensation costs for the platform. We denote by $\widetilde{\mathcal{X}}_{\text{AI}}\subset \mathcal{X}$ the set of creators who strictly prefers GenAI under $\widetilde{W}$. We obtain the decomposition $\mathcal{X} = \widetilde{\mathcal{X}}_{\text{AI}}\sqcup \widetilde{\mathcal{X}}_{\text{IN}}$, just as in \S\ref{equilibriumanalysissection}. Using the general form (\ref{generaldecomposedformwitho}) of $\tilde{q}$, for any $x \in \widetilde{\mathcal{X}}_{\text{IN}}$ we have
    \begin{equation}\label{qtildeindifferencecondition}
        \frac{\tilde{q}(x)}{p(x)} = 1-\tilde{\beta}(x) + \frac{M(\tilde{\beta}, \mathcal{X})}{r(x)} = \left(\frac{1-\gamma}{V^{\widetilde{W}}(g;\tilde{q}) + c - \widetilde{W}(x;\tilde{q})}\right)^{1/\alpha},
    \end{equation}
    and for any $x\in \mathcal{X}_{\text{AI}}$ we have $\tilde{q}(x)/p(x) = M(\tilde{\beta}, \mathcal{X})/r(x)$.
    Let us choose an arbitrary small $\delta > 0$ and any $r_0 \in [\essinf_{x\in \mathcal{X}}r(x), \esssup_{x\in \mathcal{X}}r(x)] \subset \mathbb{R}_{>0}$, then define $\Delta_0 := r^{-1}[r_0-\delta/2, r_0 + \delta/2]\cap \widetilde{\mathcal{X}}_{\text{IN}} \subset \mathcal{X}$. By the atomless assumption on the distribution of $r(x)$, we have $\int_{\Delta_0}p(x)dx = O(\delta) > 0$. Let us define:
    \begin{equation}\label{phihelperfuncdefinition}
    \phi^{\widetilde{W}}(w;\tilde{q}) := \left(\frac{1-\gamma}{V^{\widetilde{W}}(g;\tilde{q}) + c - w}\right)^{1/\alpha}\left(\frac{\gamma V^{\widetilde{W}}(g;\tilde{q})+\gamma c - w}{1-\gamma}\right),
    \end{equation}
    then from (\ref{platformrevenue}) and (\ref{qtildeindifferencecondition}) we can express the platform's profit as follows:
    \begin{multline*}\label{profitbeforeaverageperturbation}
        \Pi^{\widetilde{W}}(\tilde{q}) = \gamma\int_{\mathcal{X}}\left(\frac{\tilde{q}(x)}{p(x)}\right)^{1-\alpha}p(x)dx - \int_{\mathcal{X}}\widetilde{W}(x;\tilde{q})\left(\frac{\tilde{q}(x)}{p(x)}\right)p(x)dx\\
        = \gamma M(\tilde{\beta},\mathcal{X})^{1-\alpha}\int_{\mathcal{X}_{\text{AI}}}\frac{p(x)}{r(x)^{1-\alpha}}dx + \int_{\Delta_0}\phi^{\widetilde{W}}(\widetilde{W}(x;\tilde{q});\tilde{q})p(x)dx\\
        + \gamma\int_{\mathcal{X}_{\text{IN}}\setminus\Delta_0}\left(\frac{\tilde{q}(x)}{p(x)}\right)^{1-\alpha}p(x)dx - \gamma\int_{\mathcal{X}_{\text{IN}}\setminus\Delta_0}\widetilde{W}(x;\tilde{q})\left(\frac{\tilde{q}(x)}{p(x)}\right)p(x)dx.\numberthis
    \end{multline*}
    If $\gamma \leq \frac{2\alpha}{\alpha+1}$ then it can be shown that $w\mapsto \phi^{\widetilde{W}}(w;\tilde{q})$ is concave for all $w \in [0,V^{\widetilde{W}}(x;\tilde{q}) + c)$, thus, we have by Jensen's inequality that:
    \begin{multline*}\label{jensenineqonphi}
        \int_{\Delta_0}\phi^{\widetilde{W}}(\widetilde{W}(x;\tilde{q});\tilde{q})p(x)dx = \mathbb{E}_{x\sim p}\left[\phi^{\widetilde{W}}(\widetilde{W}(x;\tilde{q});\tilde{q})|\Delta_0\right]\int_{\Delta_0}p(x)dx\\
        \leq \phi^{\widetilde{W}}\left(\mathbb{E}_{x\sim p}\left[\widetilde{W}(x;\tilde{q})|\Delta_0\right];\tilde{q}\right)\int_{\Delta_0}p(x)dx.\numberthis
    \end{multline*}
    Consider a new compensation scheme $W : \mathcal{X}\times \mathcal{D}(\mathcal{X}) \rightarrow \mathbb{R}_{\geq 0}$ given by $W(x;q) = W(x) := (\widetilde{W}(x;\tilde{q}) + O(\delta))\cdot \mathbbm{1}[x\in \mathcal{X}\setminus \Delta_0] + \mathbb{E}_{x\sim p}\left[\widetilde{W}(x;\tilde{q})|\Delta_0\right]\cdot \mathbbm{1}[x\in \Delta_0]$, where the $\delta$--order part will soon be discussed. Note that $W(x;q)$ is independent of $q$, it depends only on $x$ (and the given $\tilde{q}$) over $\mathcal{X}\setminus \Delta_0$ and it is constant on $\Delta_0$. We will find an equilibrium $(\beta, q)$ under $W$ such that $\beta(x) = \tilde{\beta}(x) + O(\delta)$ for all $x \in \mathcal{X}\setminus \Delta_0$, while we allow for $\beta(x) - \tilde{\beta}(x)$ to have order $O(1)$ over $\Delta_0$ which has a measure $O(\delta)$. Therefore, let us assume for now that $V^W(g;q) = V^{\widetilde{W}}(g;\tilde{q}) + O(\delta)$. This also implies that $\phi^{W}(w;q) = \phi^{\widetilde{W}}(w;\tilde{q}) + O(\delta)$. Let the corresponding decomposition under $W$ be $\mathcal{X}=\mathcal{X}_{\text{AI}}\sqcup \mathcal{X}_{\text{IN}}$. The indifference condition for $x \in \mathcal{X}_{\text{IN}}$ is given by:
    \begin{equation}\label{qindifferencecondition}
        \frac{q(x)}{p(x)} = 1 - \beta(x) + \frac{M(\beta,\mathcal{X})}{r(x)} = \left(\frac{1-\gamma}{V^W(g;q) + c - W(x;q)}\right)^{1/\alpha}.
    \end{equation} 
    Since $\inf_{x\in \Delta_0}\widetilde{W}(x;\tilde{q}) \leq \mathbb{E}_{x\sim p}\left[\widetilde{W}(x;\tilde{q})|\Delta_0\right]\leq \sup_{x\in \Delta_0}\widetilde{W}(x;\tilde{q})$ while $r(x) = r_0 + O(\delta)$ for $x \in \Delta_0$, we have that $x \in \Delta_0$ are indifferent between manual creation and using GenAI under $W$ as well as under $\widetilde{W}$, up to $O(\delta)$. In particular, using $M(\beta,\mathcal{X}) = M(\tilde{\beta},\mathcal{X}) + O(\delta)$, $V^W(g;q) = V^{\widetilde{W}}(g;\tilde{q}) + O(\delta)$, $r(x) = r_0 + O(\delta)$, and $W(x;q)=\mathbb{E}_{x\sim p}\left[\widetilde{W}(x;\tilde{q})|\Delta_0\right]$, we can solve (\ref{qindifferencecondition}) to find $\beta(x) = \beta(\Delta_0) + O(\delta)$ for some constant $\beta(\Delta_0) \in [0,1]$, for all $x \in \Delta_0$. By the atomless assumption, the symmetric difference between $\mathcal{X}_{\text{AI}}$ and $\widetilde{\mathcal{X}}_{\text{AI}}$ has measure $O(\delta)$, while all creators in the symmetric difference set are profit--wise indifferent up to order $O(\delta)$. This means any difference in revenue contribution from the symmetric difference will be of order $O(\delta^2)$, therefore, we will not make any distinction between $\mathcal{X}_{\text{AI}}$ and $\widetilde{\mathcal{X}}_{\text{AI}}$ (or $\mathcal{X}_{\text{IN}}$ and $\widetilde{\mathcal{X}}_{\text{IN}}$) for the remainder of this part of the proof. In particular, we have $\Delta_0\subset \mathcal{X}_{\text{IN}}\cap \widetilde{\mathcal{X}}_{\text{IN}}$. Integrating the second equality in the condition (\ref{qindifferencecondition}) against $p$ over $\Delta_0$ gives:
    \begin{multline*}
        \int_{\mathcal{X}\setminus\Delta_0}g(z)dz\cdot M(\beta,\Delta_0) = \int_{\Delta_0}p(z)dz + M(\tilde{\beta},\mathcal{X}\setminus \Delta_0)\cdot \int_{\Delta_0}g(z)dz\\
        - \left(\frac{1-\gamma}{V^{\widetilde{W}}(g;\tilde{q}) + c - \mathbb{E}_{x\sim p}\left[\widetilde{W}(x;\tilde{q})|\Delta_0\right]}\right)^{1/\alpha}\int_{\Delta_0} p(z)dz + O(\delta^2)\\
        \geq \int_{\Delta_0}p(z)dz + M(\tilde{\beta},\mathcal{X}\setminus \Delta_0)\cdot \int_{\Delta_0}g(z)dz\\
        - \int_{\Delta_0}\left(\frac{1-\gamma}{V^{\widetilde{W}}(g;\tilde{q}) + c - \widetilde{W}(z;\tilde{q})}\right)^{1/\alpha} p(z)dz + O(\delta^2) = \int_{\mathcal{X}\setminus\Delta_0}g(z)dz\cdot M(\tilde{\beta},\Delta_0) + O(\delta^2).
    \end{multline*}
    Where we used $M(\beta,\mathcal{X}) = M(\tilde{\beta},\mathcal{X}) + O(\delta)$, $V^W(g;q) = V^{\widetilde{W}}(g;\tilde{q}) + O(\delta)$ and $O(\delta)\cdot \int_{\Delta_0}p(z)dz = O(\delta^2)$ in the first equality, we used Jensen's inequality and the fact that $w\mapsto 1/(A-w)^{1/\alpha}$ is convex to obtain the inequality, and we recognized the last equality by integrating the second equality in (\ref{qtildeindifferencecondition}) against $p$ over $\Delta_0$. It follows that $M(\beta, \Delta_0) \geq M(\tilde{\beta},\Delta_0) + O(\delta^2)$.
    
    Next, let us discuss how to choose $W(x;q)$ for $x \in \mathcal{X}\setminus \Delta_0$. Since we can set $W(x;q) = 0$ for $x\in \mathcal{X}_{\text{AI}}$, we will only need to consider $x\in \mathcal{X}_{\text{IN}}\setminus \Delta_0$. We will show that we can choose $W(x;q)$ such that $q(x)/p(x) = \tilde{q}(x)/p(x)$ for all $x \in \mathcal{X}_{\text{IN}}\setminus \Delta_0$.
    Next, we integrate the first equality of (\ref{qindifferencecondition}), with $\tilde{q}(x)/p(x)$ replacing $q(x)/p(x)$, against $p$ over $\mathcal{X}_{\text{IN}}\setminus \Delta_0$ to get:
    \begin{multline*}\label{genaiusagequantitybeforeafter}
        M(\beta, \mathcal{X}_{\text{IN}}\setminus\Delta_0) = \frac{\int_{\mathcal{X}_{\text{IN}}\setminus \Delta_0}p(z)dz - \int_{\mathcal{X}_{\text{IN}}\setminus \Delta_0}\tilde{q}(z)dz + (M(\beta,\Delta_0)+M(\beta, \mathcal{X}_{\text{AI}}))\int_{\mathcal{X}_{\text{IN}}\setminus\Delta_0}g(z)dz}{1 - \int_{\mathcal{X}_{\text{IN}}\setminus \Delta_0}g(z)dz}\\
        \geq \frac{\int_{\mathcal{X}_{\text{IN}}\setminus \Delta_0}p(z)dz - \int_{\mathcal{X}_{\text{IN}}\setminus \Delta_0}\tilde{q}(z)dz + (M(\tilde{\beta},\Delta_0) + M(\tilde{\beta}, \mathcal{X}_{\text{AI}}))\int_{\mathcal{X}_{\text{IN}}\setminus\Delta_0}g(z)dz}{1 - \int_{\mathcal{X}_{\text{IN}}\setminus \Delta_0}g(z)dz} + O(\delta^2)\\
        = M(\tilde{\beta},\mathcal{X}\setminus \Delta_0) + O(\delta^2),\numberthis
    \end{multline*}
    where $M(\beta, \mathcal{X}_{\text{AI}}) = \int_{\mathcal{X}_{\text{AI}}}p(z)dz = M(\tilde{\beta}, \mathcal{X}_{\text{AI}}) + O(\delta^2)$. From the first line above, we obtain the expression for $M(\beta, \mathcal{X}) = M(\beta, \mathcal{X}_{\text{AI}}) + M(\beta, \mathcal{X}_{\text{IN}}\setminus \Delta_0) + M(\beta, \Delta_0)\geq M(\tilde{\beta}, \mathcal{X}_{\text{AI}}) + M(\tilde{\beta}, \mathcal{X}_{\text{IN}}\setminus \Delta_0) + M(\tilde{\beta}, \Delta_0) + O(\delta^2) = M(\tilde{\beta}, \mathcal{X}) + O(\delta^2)$, substituting this back into the first equality of (\ref{qindifferencecondition}) with $q(x)/p(x) = \tilde{q}(x)/p(x)$, we obtain the needed $\beta(x)$ for $x \in \mathcal{X}_{\text{IN}}\setminus \Delta_0$. Furthermore, we have
    \begin{multline*}\label{genairevenuebeforeafter}
        V^W(g;q)+c = \frac{\frac{1-\gamma}{M(\beta,\mathcal{X})^{\alpha}}\int_{\mathcal{X}_{\text{AI}}}p(z)/r(z)^{1-\alpha}dz + c}{\int_{\mathcal{X}_{\text{AI}}}p(z)/r(z)dz}\\
        \leq \frac{\frac{1-\gamma}{M(\tilde{\beta},\mathcal{X})^{\alpha}}\int_{\mathcal{X}_{\text{AI}}}p(z)/r(z)^{1-\alpha}dz + c}{\int_{\mathcal{X}_{\text{AI}}}p(z)/r(z)dz}+O(\delta^2) = V^{\widetilde{W}}(g;\tilde{q}) + O(\delta^2),\numberthis
    \end{multline*}
    from this and (\ref{qindifferencecondition}) we can determine the needed compensation for $x \in \mathcal{X}_{\text{IN}}\setminus \Delta_0$ to be:
    \begin{multline*}\label{compensationbeforeafter}
        W(x;q) = V^W(g;q) + c - (1-\gamma)\left(\frac{p(x)}{q(x)}\right)^{\alpha}\\
        \leq V^{\widetilde{W}}(g;\tilde{q}) + c - (1-\gamma)\left(\frac{p(x)}{\tilde{q}(x)}\right)^{\alpha} + O(\delta^2) = \widetilde{W}(x;\tilde{q}) + O(\delta^2),\numberthis
    \end{multline*}
    where the last equality followed from (\ref{qtildeindifferencecondition}). In other words, $W(x;q) = \widetilde{W}(x;\tilde{q}) + O(\delta)$ where the $\delta$--order term is negative. Assembling all the results, we can write the platform's profit under $W$ as:
    \begin{multline*}\label{profitafteraverageperturbation}
        \Pi^{W}(q)
        = \gamma M(\beta,\mathcal{X})^{1-\alpha}\int_{\mathcal{X}_{\text{AI}}}\frac{p(x)}{r(x)^{1-\alpha}}dx + \int_{\Delta_0}\phi^{W}(W(x;q);q)p(x)dx\\
        + \gamma\int_{\mathcal{X}_{\text{IN}}\setminus\Delta_0}\left(\frac{q(x)}{p(x)}\right)^{1-\alpha}p(x)dx - \gamma\int_{\mathcal{X}_{\text{IN}}\setminus\Delta_0}W(x;q)\left(\frac{q(x)}{p(x)}\right)p(x)dx + O(\delta^2)\\
        \geq \Pi^{\widetilde{W}}(\tilde{q}) + O(\delta^2),\numberthis
    \end{multline*}
    where the inequality follows by comparing each term of $\Pi^{W}(q)$ in (\ref{profitafteraverageperturbation}) to the counter--part of $\Pi^{\widetilde{W}}(\tilde{q})$ in (\ref{profitbeforeaverageperturbation}). The inequality for the first term follows from (\ref{genaiusagequantitybeforeafter}), the inequality for the second term follows from (\ref{jensenineqonphi}) and the fact that $\phi^{W}(w;q) = \phi^{\widetilde{W}}(w;\tilde{q}) + O(\delta)$ due to $V^{W}(w;q) = V^{\widetilde{W}}(w;\tilde{q}) + O(\delta)$, the third term remains the same as $q(x)/p(x) = \tilde{q}(x)/p(x)$ for $x\in \mathcal{X}_{\text{IN}}\setminus\Delta_0$, and the inequality for the fourth term follows from (\ref{compensationbeforeafter}). This completes this part of the proof.

    \emph{Part 2 (If $x\in \mathcal{X}$ receives compensation then so should any $y \in \mathcal{X}$ with $r(y) > r(x)$):}
    
    Let $(\tilde{\beta}, \tilde{q})$ be an equilibrium under the given platform's compensation scheme $\widetilde{W}$ and let $\mathcal{X} = \widetilde{\mathcal{X}}_{\text{AI}}\sqcup\widetilde{\mathcal{X}}_{\text{IN}}$ be the resulting decomposition. Define: $\overline{r} := \esssup_{x\in \mathcal{X}_{\text{AI}}}r(x)$ and $\underline{r} := \essinf_{x\in \mathcal{X}_{\text{IN}}}r(x)$, and suppose that $\overline{r} > \underline{r}$. Since the distribution of $r(x)$ under $p$ is atomless, the map $\delta\mapsto \int_{r^{-1}[\overline{r} - \delta,\overline{r}]}p(x)dx$ and $\delta\mapsto \int_{r^{-1}[\underline{r}, \underline{r}+\delta]}p(x)dx$ are monotonically increasing and continuous in $\delta \geq 0$. Let us choose a small $\delta > 0$ such that $\int_{r^{-1}[\overline{r} - \delta,\overline{r}]}p(x)dx = O(\delta) > 0$, then there exists a continuous $\kappa:\mathbb{R}_{\geq 0}\rightarrow \mathbb{R}_{\geq 0}$ with $\kappa(0) = 0$ such that $\int_{r^{-1}[\underline{r}, \underline{r}+\kappa(\delta)]}p(x)dx = \int_{r^{-1}[\overline{r} - \delta,\overline{r}]}p(x)dx$. We shall assume that $\underline{r} + \kappa(\delta) < \overline{r} - \delta$, or we can always choose a smaller $\delta > 0$ otherwise. Fix the chosen $\delta > 0$, then to shorten the notation, we define: $\overline{\Delta} := r^{-1}[\overline{r}-\delta, \overline{r}]\cap\widetilde{\mathcal{X}}_{\text{AI}}\subset \mathcal{X}$ and $\underline{\Delta} := r^{-1}[\underline{r}, \underline{r}+\kappa(\delta)]\cap \widetilde{\mathcal{X}}_{\text{IN}} \subset \mathcal{X}$. 
    
    It is safe to assume that $\widetilde{W}(x;\tilde{q})=0$ for $x \in \overline{\Delta}$, and $V^{\widetilde{W}}(x;\tilde{q}) = V^{\widetilde{W}}(g;\tilde{q})+c$ for $x\in \underline{\Delta}$. Moreover, from the first part, we can also assume that $\widetilde{W}(x;\tilde{q})$ is constant over $\underline{\Delta}$, and let us denote this constant by $\widetilde{W}(\underline{\Delta})$. Consequently, we can assume that $\tilde{\beta}(x) = \tilde{\beta}(\underline{\Delta}) + O(\delta)$ for some constant $\tilde{\beta}(\underline{\Delta}) \in [0,1]$, for all $x \in \underline{\Delta}$. We must have $\widetilde{W}(\underline{\Delta}) > 0$ since $V^{\widetilde{W}}(x;\tilde{q}) = V(x;\tilde{q}) + \widetilde{W}(\underline{\Delta}) = V^{\widetilde{W}}(g;\tilde{q})+c$ but 
    \begin{multline*}\label{bothdeltasuseai}
        V(x;\tilde{q}) = (1-\gamma)\left(1-\tilde{\beta}(x)+\frac{M(\tilde{\beta}, \mathcal{X})}{r(x)}\right)^{-\alpha} \leq (1-\gamma)\left(\frac{r(x)}{M(\tilde{\beta}, \mathcal{X})}\right)^{\alpha}\\
        < (1-\gamma)\left(\frac{r(y)}{M(\tilde{\beta}, \mathcal{X})}\right)^{\alpha} = V(y;\tilde{q}) = V^{\widetilde{W}}(y;\tilde{q}) < V^{\widetilde{W}}(g;\tilde{q})+c,\numberthis
    \end{multline*}
    for any $x \in \underline{\Delta}, y\in \overline{\Delta}$. To study what happen when we remove the compensation from creators in $\underline{\Delta}$ and give compensation to creators in $\overline{\Delta}$ instead, let us start from $\widetilde{\widetilde{W}} : \mathcal{X}\times\mathcal{D}(\mathcal{X})\rightarrow \mathbb{R}_{\geq 0}$ given by $\widetilde{\widetilde{W}}(x;\tilde{\tilde{q}}) := \widetilde{W}(x;\tilde{q})\cdot \mathbbm{1}[x \notin \underline{\Delta}]$, where neither $\overline{\Delta}$ nor $\underline{\Delta}$ receive any compensation. Let $(\tilde{\tilde{\beta}}, \tilde{\tilde{q}})$ and $\mathcal{X} = \widetilde{\widetilde{\mathcal{X}}}_{\text{AI}}\sqcup \widetilde{\widetilde{\mathcal{X}}}_{\text{IN}}$ denotes the corresponding equilibrium and decomposition, respectively. In particular, we have $\overline{\Delta}\cup \underline{\Delta}\subset \widetilde{\widetilde{\mathcal{X}}}_{\text{AI}}$ as we have argued in (\ref{bothdeltasuseai}). Now, we consider a new compensation scheme $W : \mathcal{X}\times \mathcal{D}(\mathcal{X}) \rightarrow \mathbb{R}_{\geq 0}$ given by raising compensation to creators in $\overline{\Delta}$: $W(x;q) := (\widetilde{\widetilde{W}}(x;\tilde{\tilde{q}}) + O(\delta))\cdot \mathbbm{1}[x \notin \overline{\Delta}] + W(\overline{\Delta})\cdot \mathbbm{1}[x\in \overline{\Delta}]$, where $W(\overline{\Delta}) \geq 0$ is a parameter to be chosen, and the $O(\delta)$ part will be discussed. Let $(\beta, q)$ denotes the corresponding equilibrium. We have: 
    \begin{equation}\label{betaoverlinedelta}
        \beta(y) = 1 + \left[\frac{M(\tilde{\tilde{\beta}}, \mathcal{X})}{\overline{r}} - \left(\frac{1-\gamma}{V^{\widetilde{\widetilde{W}}}(g;\tilde{\tilde{q}}) + c - W(\overline{\Delta})}\right)^{1/\alpha} + O(\delta)\right]\cdot \mathbbm{1}[W(\overline{\Delta}) \geq W^*(\overline{\Delta})],
    \end{equation}
    for any $y \in \overline{\Delta}$, where $W^*(\overline{\Delta})$ is the minimum compensation for $y$ to be indifferent between manual creation and using GenAI under $W$; $r(y) = \overline{r} +O(\delta)$ by construction, and $V^W(g;q) = V^{\widetilde{\widetilde{W}}}(g;\tilde{\tilde{q}}) + O(\delta)$ due to the changes in compensation over $\overline{\Delta}$ which has measure $O(\delta)$. More precisely, we can write:
    \begin{multline*}\label{vwgqasafunctionofwoverlinedelta}
        V^W(g;q)+c = \\
        \frac{\frac{1-\gamma}{M(\beta, \mathcal{X})^\alpha}\int_{\widetilde{\widetilde{\mathcal{X}}}_{\text{AI}}\setminus \overline{\Delta}}\frac{p(z)}{r(z)^{1-\alpha}} dz + \int_{\overline{\Delta}}\frac{1-\gamma}{M(\beta, \mathcal{X})^\alpha}\frac{p(z)}{r(z)^{1-\alpha}}+\frac{W(\overline{\Delta})}{r(z)}p(z) dz\cdot \mathbbm{1}[W(\overline{\Delta}) < W^*(\overline{\Delta})] + c}{\int_{\widetilde{\widetilde{\mathcal{X}}}_{\text{AI}}\setminus \overline{\Delta}}p(z)/r(z) dz + \int_{\overline{\Delta}}p(z)/r(z) dz\cdot \mathbbm{1}[W(\overline{\Delta}) < W^*(\overline{\Delta})]}\\
        + O(\delta^2).\numberthis
    \end{multline*}
    It is true that not all $y \in \overline{\Delta}$ will require the same $W(\overline{\Delta})$ to be indifferent, due to the variation of order $O(\delta)$ of $r(y)$ in $\overline{\Delta}$. However, taking this into account give $O(\delta)$ correction to $W(\overline{\Delta})$ and $\beta(y)$ over the $\overline{\Delta}$ which has $O(\delta)$ measure, thus, only contribute $O(\delta^2)$ to the rest of the calculation and can be ignored. We will compare $W$ to $\widetilde{W}':\mathcal{X}\times \mathcal{D}(\mathcal{X}) \rightarrow \mathbb{R}_{\geq 0}$ given by: $\widetilde{W}'(x;\tilde{q}) := (\widetilde{\widetilde{W}}(x;\tilde{\tilde{q}}) + O(\delta))\cdot \mathbbm{1}[x \notin \underline{\Delta}] + \widetilde{W}'(\underline{\Delta})\cdot \mathbbm{1}[x \in \underline{\Delta}]$. Let $(\tilde{\beta}', \tilde{q}')$ denotes the corresponding equilibrium. Note that if $\widetilde{W}'(\underline{\Delta}) = \widetilde{W}(\underline{\Delta})$ then we have $\widetilde{W} = \widetilde{W}'$ and we recover the original equilibrium: $(\tilde{\beta}',\tilde{q}') = (\tilde{\beta},\tilde{q})$. For other choices of $\widetilde{W}'(\underline{\Delta})$, we have:
    \begin{equation}\label{betaunderlinedelta}
        \tilde{\beta}'(x) = 1 + \left[\frac{M(\tilde{\tilde{\beta}}, \mathcal{X})}{\underline{r}} - \left(\frac{1-\gamma}{V^{\widetilde{\widetilde{W}}}(g;\tilde{\tilde{q}}) + c - \widetilde{W}'(\underline{\Delta})}\right)^{1/\alpha} + O(\delta)\right]\cdot \mathbbm{1}[\widetilde{W}'(\underline{\Delta}) \geq \widetilde{W}'^*(\underline{\Delta})]
    \end{equation}
    for any $x \in \underline{\Delta}$, where $\widetilde{W}'^*(\underline{\Delta}) \leq \widetilde{W}(\underline{\Delta})$ is the minimum compensation for $x$ to be indifferent between manual creation and using GenAI under $\widetilde{W}'$; $r(x) = \underline{r} + O(\delta)$, and $V^{\widetilde{W}}(g;\tilde{q}) = V^{\widetilde{\widetilde{W}}}(g;\tilde{\tilde{q}}) + O(\delta)$. Similarly to what we had previously, we can write:
    \begin{multline*}\label{vtildewgqasafunctionofwoverlinedelta}
        V^{\widetilde{W}'}(g;\tilde{q})+c = \\
        \frac{\frac{1-\gamma}{M(\tilde{\beta}, \mathcal{X})^\alpha}\int_{\widetilde{\widetilde{\mathcal{X}}}_{\text{AI}}\setminus \underline{\Delta}}\frac{p(z)}{r(z)^{1-\alpha}} dz + \int_{\underline{\Delta}}\frac{1-\gamma}{M(\tilde{\beta}, \mathcal{X})^\alpha}\frac{p(z)}{r(z)^{1-\alpha}}+\frac{\widetilde{W}'(\underline{\Delta})}{r(z)}p(z) dz\cdot \mathbbm{1}[\widetilde{W}'(\underline{\Delta}) < \widetilde{W}'^*(\underline{\Delta})] + c}{\int_{\widetilde{\widetilde{\mathcal{X}}}_{\text{AI}}\setminus \underline{\Delta}}p(z)/r(z) dz + \int_{\underline{\Delta}}p(z)/r(z) dz\cdot \mathbbm{1}[\widetilde{W}'(\underline{\Delta}) < \widetilde{W}'^*(\underline{\Delta})]}\\
        + O(\delta^2).\numberthis
    \end{multline*} 
    If $W(\overline{\Delta}) \in [0,W^*(\overline{\Delta}))$ then $(\beta,q) = (\tilde{\tilde{\beta}}, \tilde{\tilde{q}})$ and $M(\beta, \mathcal{X}) = M(\tilde{\tilde{\beta}}, \mathcal{X})$, in other words, any compensation $W(\overline{\Delta}) \in [0,W^*(\overline{\Delta}))$ is deemed too low to be effective in changing any creators' decision from the one at equilibrium under $\widetilde{\widetilde{W}}$. For $W(\overline{\Delta}) \in [0,W^*(\overline{\Delta}))$ we can see from (\ref{betaunderlinedelta}) that $V^W(g;q)$ increases linearly in $W(\overline{\Delta})$ from $V^{\widetilde{\widetilde{W}}}(g;\tilde{\tilde{q}})$ with a slope $\int_{\overline{\Delta}}p(z)/\overline{r}dz/\int_{\widetilde{\widetilde{\mathcal{X}}}_{\text{AI}}}p(z)/r(z)dz$, up to the $O(\delta^2)$ order. Similarly, if $\widetilde{W}'(\underline{\Delta}) \in [0, \widetilde{W}'^*(\underline{\Delta}))$ then we have $(\tilde{\beta}',\tilde{q}') = (\tilde{\tilde{\beta}}, \tilde{\tilde{q}})$, $M(\tilde{\beta}, \mathcal{X}) = M(\tilde{\tilde{\beta}}, \mathcal{X})$ and $V^{\widetilde{W}}(g;\tilde{q})$ increases linearly in $W(\underline{\Delta})$ with a slope $\int_{\underline{\Delta}}p(z)/\underline{r}dz/\int_{\widetilde{\widetilde{\mathcal{X}}}_{\text{AI}}}p(z)/r(z)dz$, up to the $O(\delta^2)$ order. Since $\overline{r} > \underline{r}$, the slope of $V^W(g;q)$ is less than the slope of $V^{\widetilde{W}}(g;\tilde{q})$. But $\overline{r} > \underline{r}$ also means that $V(y;\tilde{\tilde{q}}) = (1-\gamma)\left(\frac{\overline{r}}{M(\tilde{\tilde{\beta}},\mathcal{X})}\right)^{\alpha} > (1-\gamma)\left(\frac{\underline{r}}{M(\tilde{\tilde{\beta}},\mathcal{X})}\right)^\alpha = V(x;\tilde{\tilde{q}})$ for all $x\in \underline{\Delta}, y\in \overline{\Delta}$. Therefore $W^*(\overline{\Delta}) < \widetilde{W}'^*(\underline{\Delta})$, and we have:
    \begin{equation}\label{vwvwtildeatstar}
        V^W(g;q)|_{W(\overline{\Delta}) = W^*(\overline{\Delta})} < V^{\widetilde{W}'}(g;\tilde{q}')|_{\widetilde{W}'(\underline{\Delta}) = \widetilde{W}'^*(\underline{\Delta})}.
    \end{equation}

    We will choose $W$ such that the equilibrium $(\beta, q)$ satisfies: $\beta(y) := \beta(\overline{\Delta})+O(\delta) = \tilde{\beta}(\underline{\Delta}) + O(\delta)$ for all $y \in \overline{\Delta}$, $\beta(x) = 1$ for $x \in \widetilde{\widetilde{\mathcal{X}}}_{\text{AI}}\setminus \overline{\Delta}$, and $\beta(x) = \tilde{\beta}(x) = \tilde{\tilde{\beta}}(x)+O(\delta)$ for $x \in \widetilde{\widetilde{\mathcal{X}}}_{\text{IN}}$. Under these conditions, we would have $M(\beta, \mathcal{X}) = M(\tilde{\beta}, \mathcal{X}) + O(\delta^2)$ and $q(x)/p(x) = 1-\beta(x) + \frac{M(\beta, \mathcal{X})}{r(x)} = 1-\tilde{\beta}(x) + \frac{M(\tilde{\beta},\mathcal{X})}{r(x)} + O(\delta^2) = \tilde{q}(x)/p(x)+O(\delta^2)$ for all $x \in \mathcal{X}\setminus (\overline{\Delta}\cup \underline{\Delta})$. From (\ref{vwgqasafunctionofwoverlinedelta}) we can see that $V^W(g;q)$ is linear in $(1-\gamma)M(\beta, \mathcal{X})^{-\alpha}$ for $W(\overline{\Delta}) \geq W^*(\overline{\Delta})$ with slope $\int_{\widetilde{\widetilde{\mathcal{X}}}_{\text{AI}}\setminus \overline{\Delta}}p(z)/r(z)^{1-\alpha}dz/\int_{\widetilde{\widetilde{\mathcal{X}}}_{\text{AI}}\setminus \overline{\Delta}}p(z)/r(z)dz$. Similarly, from (\ref{vtildewgqasafunctionofwoverlinedelta}) we can see that $V^{\widetilde{W}'}(g;\tilde{q}')$ is linear in $(1-\gamma)M(\tilde{\beta}',\mathcal{X})^{-\alpha}$ for $\widetilde{W}'(\underline{\Delta}) \geq \widetilde{W}'^*(\underline{\Delta})$, including at $\widetilde{W}'(\underline{\Delta}) = \widetilde{W}(\underline{\Delta})$ where $(\tilde{\beta}', \tilde{q}') = (\tilde{\beta}, \tilde{q})$, with slope $\int_{\widetilde{\widetilde{\mathcal{X}}}_{\text{AI}}\setminus \underline{\Delta}}p(z)/r(z)^{1-\alpha}dz/\int_{\widetilde{\widetilde{\mathcal{X}}}_{\text{AI}}\setminus \underline{\Delta}}p(z)/r(z)dz$. Note that both slopes are equal up to the $O(\delta)$ order, while $M(\beta, \mathcal{X}) - M(\tilde{\tilde{\beta}}, \mathcal{X}) = M(\tilde{\beta}, \mathcal{X})-M(\tilde{\tilde{\beta}}, \mathcal{X}) + O(\delta^2)$ has order $O(\delta)$, therefore, we have from (\ref{vwvwtildeatstar}) that $V^W(g;q) \leq V^{\widetilde{W}}(g;\tilde{q}) + O(\delta^2)$.
    
    By comparing (\ref{betaoverlinedelta}) and (\ref{betaunderlinedelta}), to get $\beta(y) = \tilde{\beta}(\underline{\Delta}) + O(\delta)$ for $y \in \overline{\Delta}$, it follows that we must choose $W(\overline{\Delta}) \in [W^*(\overline{\Delta}), \widetilde{W}(\underline{\Delta})]$. For $x \in \widetilde{\widetilde{\mathcal{X}}}_{\text{AI}}\setminus \overline{\Delta}$ we set $W(x;q) = 0$. Finally, for $x\in \widetilde{\widetilde{\mathcal{X}}}_{\text{IN}}$ we find $W(x;q)$ from the indifference condition given by the second equality in (\ref{qindifferencecondition}) with $\beta(x) = \tilde{\beta}(x)$. But $\tilde{\beta}(x)$ also satisfies the indifference condition given by the second equality of (\ref{qtildeindifferencecondition}), while $M(\beta, \mathcal{X}) = M(\tilde{\beta}, \mathcal{X}) + O(\delta^2)$ and $V^W(g;q) \leq V^{\widetilde{W}}(g;\tilde{q}) + O(\delta^2)$, which means $W(x;q) \leq \widetilde{W}(x;\tilde{q}) + O(\delta^2)$ for all $x \in \widetilde{\widetilde{\mathcal{X}}}_{\text{IN}}$. Finally, using (\ref{platformrevenue}) we compare the platform's profit at equilibrium under $W$ and $\widetilde{W}$:
    \begin{multline*}
        \Pi^W(q) - \Pi^{\widetilde{W}}(\tilde{q}) = \gamma\left[\left(\frac{M(\beta, \mathcal{X})}{\underline{r}}\right)^{1-\alpha} - \left(1-\tilde{\beta}(\underline{\Delta}) + \frac{M(\tilde{\beta}, \mathcal{X})}{\underline{r}}\right)^{1-\alpha}\right]\int_{\underline{\Delta}}p(x)dx\\
        + \gamma\left[\left(1-\beta(\overline{\Delta}) + \frac{M(\beta, \mathcal{X})}{\overline{r}}\right)^{1-\alpha} - \left(\frac{M(\tilde{\beta}, \mathcal{X})}{\overline{r}}\right)^{1-\alpha}\right]\int_{\overline{\Delta}}p(x)dx\\
        - W(\overline{\Delta})\left(1-\beta(\overline{\Delta}) + \frac{M(\beta, \mathcal{X})}{\overline{r}}\right)\int_{\overline{\Delta}}p(x)dx + \widetilde{W}(\underline{\Delta})\left(1-\tilde{\beta}(\underline{\Delta}) + \frac{M(\tilde{\beta}, \mathcal{X})}{\underline{r}}\right)\int_{\underline{\Delta}}p(x)dx\\
        - \int_{\widetilde{\widetilde{\mathcal{X}}}_{\text{IN}}}\left(W(x;q)\frac{q(x)}{p(x)} - \widetilde{W}(x;q)\frac{\tilde{q}(x)}{p(x)}\right)p(x)dx + O(\delta^2) \geq 0,
    \end{multline*}
    keeping in mind that $\int_{\overline{\Delta}}p(x)dx = \int_{\underline{\Delta}}p(x)dx$, $\beta(\overline{\Delta}) = \tilde{\beta}(\underline{\Delta}) + O(\delta)$, $M(\beta, \mathcal{X}) = M(\tilde{\beta}, \mathcal{X}) + O(\delta^2)$, $q(x)/p(x) = \tilde{q}(x)/p(x) + O(\delta^2)$ for $x \in \mathcal{X}\setminus (\overline{\Delta}\cup \underline{\Delta})$, and that $W(x;q) = \widetilde{W}(x;\tilde{q}) = 0$ for $x\in \widetilde{\widetilde{\mathcal{X}}}_{\text{AI}}\setminus (\overline{\Delta}\cup \underline{\Delta})$. To the first order of $\delta$, the sum of the first two terms is positive due to concavity of $r\mapsto r^{1-\alpha}$, the fourth term is greater than the third term since $W(\overline{\Delta})\leq \widetilde{W}(\underline{\Delta})$ and $1/\overline{r}\leq 1/\underline{r}$, finally, the last term is positive since $q(x)/p(x) = \tilde{q}(x)/p(x) + O(\delta^2)$ and $W(x;q) \leq \widetilde{W}(x;\tilde{q}) + O(\delta^2)$. This completes this part of the proof. 

    \emph{Part 3 (Finalize):}

    Start from the given compensation scheme $\widetilde{W} : \mathcal{X}\times \mathcal{D}(\mathcal{X})\rightarrow \mathbb{R}_{\geq 0}$ with an equilibrium $(\tilde{\beta}, \tilde{q})$. By repeated application of Part 1 and Part 2, we can find $W$ with an equilibrium $(\beta, q)$, where $W$ is characterized by some threshold $\underline{r} > 0$ such that $W(x;q) = W(x) > 0$ if $r(x) \geq \underline{r}$, and $W(x) = 0$ otherwise, such that $\Pi^W(q) \geq \Pi^{\widetilde{W}}(\tilde{q})$. We have the decomposition $\mathcal{X} = \mathcal{X}_{\text{AI}}\sqcup \mathcal{X}_{\text{IN}}$ where $\mathcal{X}_{\text{AI}} := r^{-1}[0,\underline{r})$ and $\mathcal{X}_{\text{IN}} := r^{-1}[\underline{r}, \infty)$. If there exists $\overline{\Delta} \subset \mathcal{X}_{\text{IN}}$ and $\underline{\Delta}\subset \mathcal{X}_{\text{IN}}$ with positive measure $\int_{\overline{\Delta}}p(z)dz = \int_{\underline{\Delta}}p(z)dz = O(\delta) > 0$ such that $\inf_{y\in \overline{\Delta}}\beta(y) > \sup_{x \in \underline{\Delta}}\beta(x)$ but $\inf_{y\in \overline{\Delta}}r(y) > \sup_{x\in \underline{\Delta}}r(x)$ then we repeat the similar argument to Part 2 to transfer an appropriate amount of compensation from creators in $\underline{\Delta}$ to those in $\overline{\Delta}$ while improving the platform's profit. Therefore, we can assume that under $W$, we have $W(y) \geq W(x)$ if $r(y) \geq r(x)$ for all $x,y\in \mathcal{X}$. Lastly, we can apply the similar argument to Part 1 to any $\Delta_0\subset \mathcal{X}_{\text{IN}}$ with measure $\int_{\Delta_0}p(z)dz = O(\delta) > 0$ to improve the platform's profit by replacing $W$ over $\Delta_0$ with the constant $\mathbb{E}_{x\sim p}[W(x)|\Delta_0]$. By repeating this process, we monotonically improving the platform's profit, and we conclude that there exists $W : \mathcal{X}\times \mathcal{D}(\mathcal{X})\rightarrow \mathbb{R}_{\geq 0}$ given by paying a fixed compensation $w \geq 0$ to all $x \in \mathcal{X}_{\text{IN}} := r^{-1}[\underline{r}, \infty)$ for some threshold $\underline{r} \geq 0$, such that $\Pi^W(q) \geq \Pi^{\widetilde{W}}(\tilde{q})$. 
\end{proof}

\subsection{Proof of Lemma \ref{xhisemptylemma}}
\begin{proof}
    If $x \in \mathcal{X}_{\text{H}}$ then $U^{\underline{v}, w}(x;q) > V^{\underline{v}, w}(g;q)$ and $\beta(x) = 0$. It follows from (\ref{generaldecomposedform}) that $q(x) = p(x) + g(x)\cdot \int_{\mathcal{X}}\beta(y)p(y)dy \geq p(x)$, therefore $V(x;q) \leq 1-\gamma$ for all $x \in \mathcal{X}_{\text{H}}$. Since $U^{\underline{v}, w}(x;q) < U^{\underline{v}, w}(y;q)$ if and only if $V(x;q) < V(y;q)$, we have for any $x\in \mathcal{X}\setminus \mathcal{X}_{\text{H}}$ and $y \in \mathcal{X}_{\text{H}}$ that
    \begin{equation*}
        (1-\gamma)\left(\frac{p(x)}{q(x)}\right)^\alpha = V(x;q) < V(y;q) \leq 1-\gamma \qquad \implies \qquad q(x) > p(x).
    \end{equation*}
    since $U^{\underline{v},w}(x;q) \leq V^{\underline{v},w}(g;q) < U^{\underline{v}, w}(y;q)$. This implies that $q(x) \geq p(x)$ for all $x\in \mathcal{X}$ where the inequality is strict for $x \in \mathcal{X}\setminus \mathcal{X}_{\text{H}}$. Therefore, $\int_{\mathcal{X}}q(x)dx > \int_{\mathcal{X}}p(x)dx = 1$, contradicting the normalization of $q$, unless either $\mathcal{X}\setminus\mathcal{X}_{\text{H}}$ has measure zero or our initial assumption of $x\in \mathcal{X}_{\text{H}}$ is incorrect and we have $\mathcal{X}_{\text{H}} = \emptyset$. Let us analyze the first possibility: $\mathcal{X}\setminus\mathcal{X}_{\text{H}}$ has measure zero, this is the same as $\beta(x) = 0$ for almost all $x$, so we have $q = p$ almost everywhere. Then $V(x;q) = 1-\gamma$, which gives $V^{\underline{v}, w}(x;q) = 1-\gamma+w\cdot \mathbbm{1}[1-\gamma\geq \underline{v}]$ for almost all $x \in \mathcal{X}$, after compensation. On the other hand, the expected revenue from GenAI becomes:
    \begin{multline*}
        V^{\underline{v}, w}(g;q) = \int_{\mathcal{X}}\left(1-\gamma+w\cdot \mathbbm{1}[1-\gamma\geq \underline{v}]\right)g(y)dy = 1-\gamma+w\cdot \mathbbm{1}[1-\gamma\geq \underline{v}]\\
        = V^{\underline{v}, w}(x;q) = U^{\underline{v}, w}(x;q) + c > U^{\underline{v}, w}(x;q).
    \end{multline*}
    for almost all $x\in \mathcal{X}$. Therefore it is profitable for almost all $x \in \mathcal{X}_{\text{H}}$ to deviate to use GenAI, which is a contradiction. This leaves us with the only conclusion: $\mathcal{X}_{\text{H}} = \emptyset$, which completes the proof of the first claim.

    Now, suppose that $\mathcal{X}_{\text{AI}}$ has a zero measure, then combining with $\mathcal{X}_{\text{H}}=\emptyset$ we have just proven, we have $x\in\mathcal{X}_{\text{IN}}$ for almost every $x\in \mathcal{X}$. Then we have $U^{\underline{v}, w}(x;q) = V^{\underline{v}, w}(g;q)$ for almost every $x \in \mathcal{X}$, but this means
    \begin{multline*}
        V^{\underline{v}, w}(g;q) = \int_{\mathcal{X}}V^{\underline{v}, w}(y;q)g(y)dy = \int_{\mathcal{X}_{\text{IN}}}V^{\underline{v}, w}(y;q)g(y)dy\\
        = (V^{\underline{v}, w}(g;q)+c)\int_{\mathcal{X}_{\text{IN}}}g(y)dy = (V^{\underline{v}, w}(g;q)+c)\int_{\mathcal{X}}g(y)dy = V^{\underline{v}, w}(g;q)+c
    \end{multline*}
    which is false since $c > 0$. Thus, we conclude that $\mathcal{X}_{\text{AI}}$ must have a positive measure.
\end{proof}

\subsection{Proof of Lemma \ref{aislopresult}}
\begin{proof}
    Suppose that $q = p$ and $g\neq p$. Recall from Lemma \ref{xhisemptylemma} that we have a decomposition $\mathcal{X} = \mathcal{X}_{\text{AI}}\sqcup \mathcal{X}_{\text{IN}}$, then it must be the case that $\mathcal{X}_{\text{IN}}\neq \emptyset$, otherwise we simply have $q = g\neq p$. Since $q = p$ implies that $V(x;q) = 1-\gamma$ for all $x \in \mathcal{X}$, this means the compensated revenue $V^{\underline{v}, w}(x;q)$ is also independent of $x \in \mathcal{X}$. But then, given any $x \in \mathcal{X}_{\text{IN}}$ we have $V^{\underline{v}, w}(g;q) = \int_{\mathcal{X}}V^{\underline{v}, w}(y;q)g(y)dy = V^{\underline{v}, w}(x;q) > V^{\underline{v}, w}(x;q) - c = U^{\underline{v}, w}(x;q)$, so it would be profitable for any $x\in \mathcal{X}_{\text{IN}}$ to deviates to strictly using GenAI, contradicting $\mathcal{X}_{\text{IN}}\neq \emptyset$. 
\end{proof}

\subsection{Proof of Proposition \ref{generalequilibriumproposition}}
\begin{proof}
    First, we argue that the condition $1-\gamma < \underline{v} < (1-\gamma)\sup_{x\in\mathcal{X}}r(x)^\alpha$ implies that $r^{-1}[0, \underline{r}(\underline{v}))$ has a positive measure and $r^{-1}[\underline{r}(\underline{v}), \infty)$ is non--empty. The positive measure of $r^{-1}[0, \underline{r}(\underline{v}))$ is essential to ensure that the denominator of $\widetilde{V}^{\underline{v}}(g)$ is positive, so that it is well--defined. Suppose that $r^{-1}[0, \underline{r}(\underline{v}))$ has a zero measure then the LHS of (\ref{definingbarrbarv}) would be exactly $1$ while the RHS is strictly greater than $1$ since $\underline{v} > 1-\gamma$, a contradiction. Next, suppose that $r^{-1}[\underline{r}(\underline{v}), \infty) = \emptyset$ then $r(x) < \underline{r}(\underline{v})$ for all $x\in \mathcal{X}$ so (\ref{definingbarrbarv}) becomes:
    \begin{equation*}
        \underline{r}(\underline{v}) = \int_{\mathcal{X}}\frac{\underline{r}(\underline{v})}{r(y)}p(y)dy =  \left(\frac{\underline{v}}{1-\gamma}\right)^{1/\alpha} < \sup_{x \in \mathcal{X}}r(x).
    \end{equation*}
    But $\underline{r}(\underline{v}) > r(x)$ for all $x\in \mathcal{X}$ also implies that $\underline{r}(\underline{v})\geq \sup_{x\in \mathcal{X}}r(x)$, which is a contradiction.

    It is straightforward to verify that $q$ as given in (\ref{generalequilibriumform}) is normalized by noting that (\ref{definingbarrbarv}) can also be written as:
    \begin{equation}\label{mgformula}
        \underline{r}(\underline{v})\int_{r^{-1}[0, \underline{r}(\underline{v}))}g(y)dy + \int_{r^{-1}[\underline{r}(\underline{v}), \infty)}p(y)dy = \left(\frac{\underline{v}}{1-\gamma}\right)^{1/\alpha}
    \end{equation}
    and that $q$ as given in (\ref{generalequilibriumform}) can be obtained by substituting $\beta(x)$ as given in (\ref{generalequilibriumform}) into the general form (\ref{generaldecomposedform}). Next, we observe that $\beta(x) \in [0,1]$ for all $x\in \mathcal{X}$. This is clear if $x \in r^{-1}[0,\underline{r}(\underline{v}))$ since we would simply have $\beta(x) = 1$, but if $x \in r^{-1}[\underline{r}(\underline{v}), \infty)$ then $1-\underline{r}(\underline{v})/r(x) \in [0,1]$, and since $\underline{v} > 1-\gamma$ we have
    \begin{equation}\label{betain01inequality}
        \beta(x) = 1 - \left(\frac{1-\gamma}{\underline{v}}\right)^{1/\alpha}\left(1 - \frac{\underline{r}(\underline{v})}{r(x)}\right) \in [0,1],\numberthis
    \end{equation}
    as needed. In particular, it is valid to interpret $q(x)$ given in (\ref{generalequilibriumform}) as a probability distribution of contents obtained from the GenAI usage propensity probability $\beta:\mathcal{X}\rightarrow [0,1]$. 
    
    Using $q(x)$ as given in  (\ref{generalequilibriumform}), we find that, for all $x \in r^{-1}[\underline{r}(\underline{v}), \infty)$: \begin{equation}\label{indifferencecondition}
        V(x;q) = (1-\gamma)\left(\frac{p(x)}{q(x)}\right)^{\alpha} = \underline{v},
    \end{equation}
    and for all $x \in r^{-1}[0, \underline{r}(\underline{v}))$:
    \begin{equation}\label{prefereneforaicondition}
        V(x;q) = (1-\gamma)\left(\frac{p(x)}{q(x)}\right)^{\alpha} = \underline{v}\left(\frac{r(x)}{\underline{r}(\underline{v})}\right)^\alpha < \underline{v}.
    \end{equation}
    It remains for us to verify that (\ref{indifferencecondition}) and (\ref{prefereneforaicondition}) in fact reinforce the indifference condition and the strict preference for GenAI condition, respectively, under the compensation scheme $(\underline{v}, w)$ where $w := \widetilde{V}^{\underline{v}}(g) + c - \underline{v}$. We have by definition and from (\ref{indifferencecondition}) that 
    \begin{equation*}
        V^{\underline{v}, w}(x;q) = V(x;q) + W^{\underline{v}, w}(x;q) = \underline{v} + \widetilde{V}^{\underline{v}}(g) + c - \underline{v} = \widetilde{V}^{\underline{v}}(g) + c,
    \end{equation*}
    for all $x \in r^{-1}[\underline{r}(\underline{v}), \infty)$, while $V^{\underline{v}, w}(x;q) = V(x;q)$, for all $x \in r^{-1}[0, \underline{r}(\underline{v}))$. We can compute the profit from GenAI assuming the content distribution density $q$ using (\ref{manualandaiutility}):
    \begin{multline*}\label{generaldecomposedformsimplified}
        V^{\underline{v}, w}(g;q) = \int_{\mathcal{X}}V^{\underline{v}, w}(y;q)g(y)dy\\
        = (1-\gamma)\int_{r^{-1}[0, \underline{r}(\underline{v}))}\left(\frac{p(y)}{q(y)}\right)^\alpha g(y)dy + (\widetilde{V}^{\underline{v}}(g)+c)\int_{r^{-1}[\underline{r}(\underline{v}),\infty)}g(y)dy\\
        = \frac{\underline{v}}{\underline{r}(\underline{v})^\alpha}M_g(\underline{v}) + (\widetilde{V}^{\underline{v}}(g)+c)\left(1 - \left(\frac{\underline{v}}{1-\gamma}\right)^{1/\alpha}\frac{1 - M_p(\underline{v})}{\underline{r}(\underline{v})}\right)\\
        = (\widetilde{V}^{\underline{v}}(g) + c)\left(\frac{\underline{v}}{1-\gamma}\right)^{1/\alpha}\frac{1-M_p(\underline{v})}{\underline{r}(\underline{v})} - c + (\widetilde{V}^{\underline{v}}(g)+c)\left(1 - \left(\frac{\underline{v}}{1-\gamma}\right)^{1/\alpha}\frac{1 - M_p(\underline{v})}{\underline{r}(\underline{v})}\right)\\
        = \widetilde{V}^{\underline{v}}(g).
    \end{multline*}
    Where we have used (\ref{mgformula}) for the second term the second equality. It follows that, $U^{\underline{v}, w}(x;q) = V^{\underline{v}, w}(x;q) - c = V^{\underline{v}, w}(g;q)$ for all $x \in r^{-1}[\underline{r}(\underline{v}), \infty)$ which is consistent with $\beta(x) \in [0,1]$, and $U^{\underline{v}, w}(x;q) = V(x;q) - c < \underline{v}-c \leq V^{\underline{v}, w}(g;q)$ for all $x \in r^{-1}[0, \underline{r}(\underline{v}))$ which is consistent with $\beta(x) = 1$. Therefore, $(\beta, q)$ is an equilibrium under $(\underline{v}, w)$, and we have $\mathcal{X}_{\text{IN}} = r^{-1}[\underline{r}(\underline{v}), \infty)$ and $\mathcal{X}_{\text{AI}} = r^{-1}[0,\underline{r}(\underline{v}))$, as claimed. 

    Conversely, consider an equilibrium $(\beta, q)$ under the platform compensation scheme $(\underline{v}, w)$ characterized by the decomposition $\mathcal{X} = \mathcal{X}_{\text{AI}}\sqcup \mathcal{X}_{\text{IN}}$ with $\mathcal{X}_{\text{IN}} \neq \emptyset$. From Lemma \ref{decisionundercompensationlemma}, the equilibrium content distribution $q$ satisfies the indifference condition: $V(x;q) = (1-\gamma)(p(x)/q(x))^\alpha = \underline{v}$ for all $x \in \mathcal{X}_{\text{IN}}$, and satisfies the strict preference for GenAI condition: $V(x;q) < \underline{v}$ for all $x\in \mathcal{X}_{\text{AI}}$. The first condition shows that $q(x) = \left(\frac{1-\gamma}{\underline{v}}\right)^{1/\alpha}p(x)$ for all $x\in \mathcal{X}_{\text{IN}}$, while we know from (\ref{generaldecomposedform}) that $q(x) = g(x)\cdot \int_{\mathcal{X}}\beta(y)p(y)dy$ for all $x \in \mathcal{X}_{\text{AI}}$. Using the normalization condition $\int_{\mathcal{X}}q(y)dy = 1$, we can determine 
    \begin{equation}\label{integralbetap}
    \int_{\mathcal{X}}\beta(y)p(y)dy = \frac{1 - \left(\frac{1-\gamma}{\underline{v}}\right)^{1/\alpha}\int_{\mathcal{X}_{\text{IN}}}p(y)dy}{\int_{\mathcal{X}_{\text{AI}}}g(y)dy}
\end{equation}
    which in turn give us:
\begin{equation}\label{generaldecomposedformsimplified}
    \begin{aligned}
    q(x) &= \left(\frac{1-\gamma}{\underline{v}}\right)^{1/\alpha}p(x) \cdot \mathbbm{1}[x\in \mathcal{X}_{\text{IN}}] + \frac{1 - \left(\frac{1-\gamma}{\underline{v}}\right)^{1/\alpha}\int_{\mathcal{X}_{\text{IN}}}p(y)dy}{\int_{\mathcal{X}_{\text{AI}}}g(y)dy}\cdot g(x)\cdot \mathbbm{1}[x\in \mathcal{X}_{\text{AI}}]\\
    \beta(x) &= \left[1 - \left(\frac{1-\gamma}{\underline{v}}\right)^{1/\alpha} + \frac{1 - \left(\frac{1-\gamma}{\underline{v}}\right)^{1/\alpha}\int_{\mathcal{X}_{\text{IN}}}p(y)dy}{\int_{\mathcal{X}_{\text{AI}}}g(y)dy}\cdot \frac{g(x)}{p(x)}\right]\cdot \mathbbm{1}[x\in \mathcal{X}_{\text{IN}}] + \mathbbm{1}[x\in \mathcal{X}_{\text{AI}}]
    \end{aligned}.
\end{equation}
Where $\beta(x)$ can be determined for $x \in \mathcal{X}_{\text{IN}}$ from $\left(\frac{1-\gamma}{\underline{v}}\right)^{1/\alpha}p(x) = q(x) = (1-\beta(x))p(x) + g(x)\int_{\mathcal{X}}\beta(y)p(y)dy$. It remains for us to show that $\mathcal{X}_{\text{AI}}$ and $\mathcal{X}_{\text{IN}}$ coincide with their counter--parts given by the solution $\underline{r}(\underline{v})$ to (\ref{definingbarrbarv}), then it would follow from (\ref{mgformula}) that (\ref{generaldecomposedformsimplified}) coincide with (\ref{generalequilibriumform}). For any $x \in \mathcal{X}_{\text{AI}}$ and any $x' \in \mathcal{X}_{\text{IN}}$ we have
\begin{equation}\label{separatinginequality}
    \frac{p(x)}{g(x)}\int_{\mathcal{X}_{\text{AI}}}g(y)dy + \int_{\mathcal{X}_{\text{IN}}}p(y)dy < \left(\frac{\underline{v}}{1-\gamma}\right)^{1/\alpha}\leq \frac{p(x')}{g(x')}\int_{\mathcal{X}_{\text{AI}}}g(y)dy + \int_{\mathcal{X}_{\text{IN}}}p(y)dy.
\end{equation}
The first inequality followed by expending $V(x;q) = (1-\gamma)(p(x)/q(x))^\alpha < \underline{v}$ using (\ref{generaldecomposedformsimplified}), while the second inequality followed using (\ref{generaldecomposedformsimplified}) and the necessary fact that $\beta(x') \in [0,1]$ at any equilibrium. Let $\underline{r}(\underline{v})$ be the solution to (\ref{definingbarrbarv}). Let $\overline{r} := \inf_{x\in \mathcal{X}_{\text{IN}}}r(x)$ and $\underline{r} := \sup_{x\in \mathcal{X}_{\text{AI}}}r(x)$, then $r(x) \leq \underline{r}\leq \overline{r}\leq r(x')$ for any $x\in \mathcal{X}_{\text{AI}}$ and $x'\in \mathcal{X}_{\text{IN}}$, hence it follows from (\ref{separatinginequality}) that:
\begin{equation*}
    \int_{\mathcal{X}}\max\left\{\frac{\underline{r}}{r(y)},1\right\}p(y)dy \leq \int_{\mathcal{X}}\max\left\{\frac{\underline{r}(\underline{v})}{r(y)},1\right\}p(y)dy \leq \int_{\mathcal{X}}\max\left\{\frac{\overline{r}}{r(y)},1\right\}p(y)dy.
\end{equation*}
Therefore, we have $\underline{r} \leq \underline{r}(\underline{v})\leq \overline{r}$, which means $\mathcal{X}_{\text{AI}} = r^{-1}[0,\underline{r}(\underline{v}))$ and $\mathcal{X}_{\text{IN}} = r^{-1}[\underline{r}(\underline{v}),\infty)$, as needed. The claim $V^{\underline{v}, w}(g;q) = \widetilde{V}^{\underline{v}}(g)$ follows from an explicit computation. Finally, we have from (\ref{separatinginequality}) that 
\begin{equation}\label{pureaibetacondition}
        1 < \left(\frac{\underline{v}}{1-\gamma}\right)^{1/\alpha} \leq \int_{\mathcal{X}}\max\left\{\frac{\overline{r}}{r(y)},1\right\}p(y)dy\\
        \leq \sup_{x\in \mathcal{X}}r(x).\numberthis
    \end{equation}
    where the first inequality is strict because we know from Lemma \ref{xhisemptylemma} that $\mathcal{X}_{\text{AI}} = r^{-1}[0,\underline{r}(\underline{v}))$ has a positive measure at any equilibrium.
\end{proof} 

\subsection{Proof of Lemma \ref{solutionv0lemma}}
\begin{proof}

    \emph{Part 1:} It is clear from the definition that, as functions of $\underline{v}$: $M_g(\underline{v})$, $M_p(\underline{v})$ are continuous everywhere at except at the point masses, while $\underline{r}(\underline{v})$ is continuous. Therefore, focus on the domain $\underline{v}\in (1-\gamma, (1-\gamma)\overline{\overline{r}}^\alpha)$ we have that $\widetilde{V}^{\underline{v}}(g)$ is continuous everywhere except at the point masses. To derive (\ref{discontinuityofvvbar}), note that we have from (\ref{definingbarrbarv}) that $\underline{r}(\underline{v})\nearrow \overline{\overline{r}}$ as $\underline{v}\nearrow (1-\gamma)\overline{\overline{r}}^\alpha$ and so we have from (\ref{mgmpvbarvdefinition}) that:
    \begin{multline*}
        \lim_{\underline{v}\nearrow (1-\gamma)\overline{\overline{r}}^\alpha}\widetilde{V}^{\underline{v}}(g) = \frac{(1-\gamma)\int_{r^{-1}[0,\overline{\overline{r}})}r(y)^\alpha g(y)dy + c\int_{r^{-1}[\overline{\overline{r}},\infty)}g(y)dy}{\int_{r^{-1}[0,\overline{\overline{r}})}g(y)dy}\\
        = \frac{(1-\gamma)\mathbb{E}_{x\sim g}r(x)^\alpha - (1-\gamma)\overline{\overline{r}}^\alpha\int_{r^{-1}[\overline{\overline{r}}, \infty)}g(y)dy + c\int_{r^{-1}[\overline{\overline{r}},\infty)}g(y)dy}{\int_{r^{-1}[0,\overline{\overline{r}})}g(y)dy}\\
        = (1-\gamma)\mathbb{E}_{x\sim g}r(x)^\alpha - \frac{\int_{r^{-1}[\overline{\overline{r}},\infty)}g(y)dy}{\int_{r^{-1}[0,\overline{\overline{r}})}g(y)dy}\left((1-\gamma)\overline{\overline{r}}^\alpha - (1-\gamma)\mathbb{E}_{x\sim g}r(x)^\alpha - c\right)
    \end{multline*}
    Rearranging this yields (\ref{discontinuityofvvbar}).
    
    \emph{Part 2:} Consider some $\underline{v}_1$ such that $\underline{v}_1 \leq \widetilde{V}^{\underline{v}_1}(g) + c$ and let us show that $\widetilde{V}^{\underline{v}}(g)+c$ is a monotonically decreasing function in $\underline{v} \in (1-\gamma, \underline{v}_1]$. To simplify the exposition, we focus on the case where $(1-\gamma, (1-\gamma)\overline{\overline{r}}^\alpha)$ is free of point masses, and we denote the density of $r(x)$ under $p$ by $h$. Let $t := \underline{r}(\underline{v})$, then we can write $\widetilde{V}^{\underline{v}}(g)$ from (\ref{mgmpvbarvdefinition}) as:
    \begin{equation*}
        \widetilde{V}^{\underline{v}}(g)+c = \frac{(1-\gamma)\left(\frac{1}{t}\int_0^\infty \max\left\{\frac{t}{r},1\right\}h(r)dr\right)^\alpha\int_0^t\frac{h(r)}{r^{1-\alpha}}dr + c}{\int_0^t\frac{h(r)}{r}dr} =: V(t).
    \end{equation*}
    Computing the derivative of $V(t)$ we find that $V'(t) \leq 0$ if:
    \begin{equation*}
        D(t) := (1-\gamma)\left(\frac{1}{t}\int_0^\infty \max\left\{\frac{t}{r},1\right\}h(r)dr\right)^\alpha\left(t^\alpha\int_0^t\frac{h(r)}{r}dr - \int_0^t\frac{h(r)}{r^{1-\alpha}}dr\right) - c \leq 0.
    \end{equation*}
    Let $t_1 := \underline{r}(\underline{v}_1)$, then $\underline{v}_1\leq \widetilde{V}^{\underline{v}_1}(g) + c$ implies that:
    \begin{equation*}
        \left(\int_0^\infty \max\left\{\frac{t_1}{r},1\right\}h(r)dr\right)^{\alpha} = \frac{\underline{v}_1}{1-\gamma} \leq \frac{V(t_1)}{1-\gamma}  \quad \implies \quad D(t_1) \leq 0.
    \end{equation*}
    The implication can be seen straightforwardly by expanding the definition of $V(t_1)$ and rearranging the inequality. Meanwhile, we can show that:
    \begin{equation*}
        D'(t) = \alpha (1-\gamma)\frac{t^{1+\alpha}\left(\int_0^t\frac{h(r)}{r}dr\right)^2 + \int_t^\infty h(r)dr\int_0^t\frac{h(r)}{r^{1-\alpha}}dr}{2t^2\left(\frac{1}{t}\int_0^\infty \max\left\{\frac{t}{r},1\right\}h(r)dr\right)^{1-\alpha}} \geq 0,
    \end{equation*}
    therefore, $D(t) \leq D(t_1) \leq 0$ for all $t \in (\inf_{x\in \mathcal{X}}r(x), t_1]$. It follows that $V'(t) \leq 0$ for all $t \in (\inf_{x\in \mathcal{X}}r(x), t_1]$ which is equivalent to $\widetilde{V}^{\underline{v}}(g) + c$ being monotonically decreasing for $\underline{v}\in (1-\gamma, \underline{v}_1]$. The proof can readily be translated to the general case by adding the sum over finite point masses to the density $h$. In particular, $\widetilde{V}^{\underline{v}}(g)$ decreases discontinuously at any point--masses.
    
    \emph{Part 3:} We have already argued in the main text that $\underline{v} \leq \widetilde{V}^{\underline{v}}(g) + c$ for all $\underline{v} > 1-\gamma$ sufficiently close to $1-\gamma$. Suppose that $(1-\gamma)\sup_{x\in \mathcal{X}}r(x)^\alpha > (1-\gamma)\mathbb{E}_{x\sim g}r(x)^\alpha + c$, then we know that (\ref{discontinuityofvvbar}) that:
    \begin{equation*}
        \lim_{\underline{v}\nearrow (1-\gamma)\overline{\overline{r}}^\alpha}(\widetilde{V}^{\underline{v}}(g) + c) < (1-\gamma)\mathbb{E}_{x\sim g}r(x)^\alpha+c < (1-\gamma)\overline{\overline{r}}^\alpha.
    \end{equation*}
    It follows that we have $\widetilde{V}^{\underline{v}}(g) + c < \underline{v}$ for all $\underline{v} < (1-\gamma)\overline{\overline{r}}^\alpha$ sufficiently close to $(1-\gamma)\overline{\overline{r}}^\alpha$. Since we have assume no point masses aside from $\overline{\overline{r}}$, it follows that $\widetilde{V}^{\underline{v}}(g)$ is continuous on $(1-\gamma, (1-\gamma)\overline{\overline{r}}^\alpha)$, and we conclude that there exists $\underline{v}_0 \in (1-\gamma, (1-\gamma)\overline{\overline{r}}^\alpha)$ such that $\underline{v}_0 = V^{\underline{v}_0}(g)+c$. On the other hand, if $(1-\gamma)\mathbb{E}_{x\sim g}r(x)^\alpha + c \geq (1-\gamma)\sup_{x\in \mathcal{X}}r(x)^\alpha$, then repeating the computation above with the inequality reversed, we find that $\underline{v} < \widetilde{V}^{\underline{v}}(g) + c$ for all $\underline{v} < (1-\gamma)\overline{\overline{r}}^\alpha$ sufficiently close to $(1-\gamma)\overline{\overline{r}}^\alpha$. Then it follows from the second part of this Lemma that $\underline{v} < \widetilde{V}^{\underline{v}}(g) + c$ for all $\underline{v} \in (1-\gamma, (1-\gamma)\overline{\overline{r}}^\alpha)$, as claimed.
\end{proof}

\subsection{Proof of Proposition \ref{pureailemma}}
\begin{proof}
    Suppose that the condition (\ref{pureaicondition}) holds and let us consider any equilibrium $(\beta, q)$ under the compensation scheme $(\underline{v}, w)$. Let us first assume that $\underline{v} \leq V^{\underline{v}, w}(g;q)+c \leq \underline{v}+ w$ and $w > 0$. From Lemma \ref{xhisemptylemma} we have the decomposition $\mathcal{X} = \mathcal{X}_{\text{AI}}\sqcup \mathcal{X}_{\text{IN}}$. If $\mathcal{X}_{\text{IN}} = \emptyset$ then $\beta=1,q=g$ and there is nothing to prove. Suppose that $\mathcal{X}_{\text{IN}}\neq \emptyset$, then it follows from Proposition \ref{generalequilibriumproposition} that $\underline{v} \leq (1-\gamma)\sup_{x\in \mathcal{X}}r(x)^\alpha$. If $w > 0$, then from (\ref{pureaicondition}) it must be the case that (see (\ref{separatinginequality})):
    \begin{equation}\label{pureaibetacondition2}
        \left(\frac{\underline{v}}{1-\gamma}\right)^{1/\alpha} = \frac{p(x)}{g(x)}\int_{\mathcal{X}_{\text{AI}}}g(y)dy + \int_{\mathcal{X}_{\text{IN}}}p(y)dy = \sup_{x\in \mathcal{X}}r(x)
    \end{equation}
    for all $x\in \mathcal{X}_{\text{IN}}$, and hence $\beta(x) = 1$ for all $x\in \mathcal{X}$, therefore we have $q = g$.
    
    Next, we consider the no compensation case, i.e. the scheme $(\underline{v}_0, w_0)$ with $\underline{v}_0 = V(g;q) + c$ and $w_0 = 0$. Extra care is needed since the compensation scheme $(\underline{v}, w=0)$ is actually independent of $\underline{v}$. We still have the decomposition $\mathcal{X} = \mathcal{X}_{\text{AI}}\sqcup \mathcal{X}_{\text{IN}}$, and suppose that $\mathcal{X}_{\text{IN}}\neq \emptyset$, then it follows from Proposition \ref{generalequilibriumproposition} that $\underline{v}_0 \leq (1-\gamma)\sup_{x\in \mathcal{X}}r(x)^\alpha$ and that $\underline{v}_0 = \widetilde{V}^{\underline{v}_0}(g)+c$. However, (\ref{pureaicondition}) does not implies that $\underline{v}_0 = V(g;q) + c \geq (1-\gamma)\sup_{x\in \mathcal{X}}r(x)^\alpha$ in this case. But we have from the $(1-\gamma)\mathbb{E}_{x\sim g}r(x)^\alpha+c \geq (1-\gamma)\sup_{x\in \mathcal{X}}r(x)^\alpha$ case of Lemma \ref{solutionv0lemma} that $\widetilde{V}^{\underline{v}}(g)$ is monotonically decreasing in $\underline{v} \in (1-\gamma, (1-\gamma)\overline{\overline{r}}^\alpha)$ and that $\lim_{\underline{v}\nearrow (1-\gamma)\overline{\overline{r}}^\alpha}\widetilde{V}^{\underline{v}}(g) + c\geq (1-\gamma)\sup_{x\in \mathcal{X}}r(x)^\alpha$, which means $\underline{v}_0 = \widetilde{V}^{\underline{v}_0}(g)+c\geq (1-\gamma)\sup_{x\in \mathcal{X}}r(x)^\alpha$. Therefore, we find that $\underline{v}_0$ also satisfies (\ref{pureaibetacondition2}), which implies $\beta(x) = 1$ for all $x\in \mathcal{X}$. Overall, we conclude that $q = g$ is the only possible equilibrium if $\underline{v} \leq V^{\underline{v}, w}(g;q)+c \leq \underline{v}+ w$ for any $w \geq 0$.

    For the case where $\underline{v} + w < V^{\underline{v}, w}(g;q) + c$, we know from Lemma \ref{decisionundercompensationlemma} that there exists a compensation scheme $(\tilde{\underline{v}}, \tilde{w})$ such that $\tilde{\underline{v}} + \tilde{w} = V^{\tilde{\underline{v}}, \tilde{w}}(g;q) + c \geq \tilde{\underline{v}}$, and that $\beta$ is also a best creation response under $(\tilde{\underline{v}}, \tilde{w})$ given the belief $q$. In other words, the given $(\beta, q)$ is also an equilibrium under $(\tilde{\underline{v}}, \tilde{w})$, but we see from our previous analysis, which covers the compensation scheme satisfying $\underline{v} + w \geq V^{\underline{v}, w}(g;q) + c \geq \underline{v}$, that this is not possible, hence a contradiction. The case where $\underline{v} > V^{\underline{v}, w}(g;q) + c$ is similar, since none of creators strictly prefers manual creation at any equilibrium by Lemma \ref{xhisemptylemma}, then by Lemma \ref{decisionundercompensationlemma}: $(\beta, q)$ is also an equilibrium under no compensation $(\underline{v}_0, w_0)$. Thus, we have once again reduced to the case covered in the previous analysis which led to a contradiction.

    We have shown that $q = g$ is the only possible equilibrium. We will now show that $q = g$ is indeed an equilibrium whenever the inequality in (\ref{pureaicondition}) is strict, or $\underline{v} + w \leq (1-\gamma)\mathbb{E}_{x\sim g}r(x)^\alpha + c$. We note from (\ref{pureaicondition}) for any $x\in \mathcal{X}$ that $V(x;q) = (1-\gamma)(p(x)/q(x))^\alpha = (1-\gamma)r(x)^\alpha\leq (1-\gamma)\sup_{x\in \mathcal{X}}r(x)^\alpha \leq \min\left\{V(g;q) + c, \underline{v}\right\}$. If $V(x;q) < \underline{v}$, then $V^{\underline{v}, w}(x;q) = V(x;q) \leq V(g;q) + c \leq V^{\underline{v}, w}(g;q) + c$, hence $x$ weakly prefers to use GenAI. If the inequality in (\ref{pureaicondition}) is not strict and $V(x;q) = \underline{v} \leq V(g;q) + c \leq V^{\underline{v}, w}(g;q) + c$, then $\underline{v}+w\leq (1-\gamma)\mathbb{E}_{x\sim g}r(x)^\alpha + c = V(g;q) + c \leq V^{\underline{v}, w}(g;q) + c$ implies that $x$ either strictly prefers GenAI or indifferent. It follows that $\beta(x) = 1$ for all $x\in \mathcal{X}$ is a valid response decision that sustains the equilibrium content density $q = g$. Of course, the platform obtains the same resulting equilibrium with: $(\underline{v}_0, w_0 = 0)$, therefore the platform weakly prefers $(\underline{v}_0, w_0)$ as it is guaranteed to pays no compensation.

    Conversely, if we know that $(\beta = 1, q = g)$ is an equilibrium under some compensation scheme $(\underline{v}, w)$, then the expected profit from GenAI usage is equal to, or greater than, the profit from manual creation, for all creators. But this is precisely the condition: $(1-\gamma)\mathbb{E}_{x\sim g}r(x)^\alpha \geq (1-\gamma)r(y)^\alpha - c$, for all $y \in \mathcal{X}$, which is equivalent to $(1-\gamma)\mathbb{E}_{x\sim g}r(x)^\alpha+c \geq (1-\gamma)\sup_{x\in \mathcal{X}}r(x)^\alpha$.
\end{proof}

\subsection{Proof of Proposition \ref{equilibriumnonexistenceproposition}}
\begin{proof}
    Suppose that $(\beta, q)$ is an equilibrium under the compensation scheme $(\underline{v}, w)$ satisfying the condition (\ref{admissibleconditionforvbar}) and $\underline{v} + w > \widetilde{V}^{\underline{v}}(g) + c$. From Lemma \ref{xhisemptylemma}, we have the decomposition $\mathcal{X} = \mathcal{X}_{\text{AI}}\sqcup \mathcal{X}_{\text{IN}}$. Let us assume for now that $\mathcal{X}_{\text{IN}} \neq\emptyset$. Suppose that
    $\underline{v} \leq V^{\underline{v}, w}(g;q) + c \leq \underline{v}+w$. Consider the case when the inequality is strict: $\underline{v} \leq V^{\underline{v}, w}(g;q) + c < \underline{v}+w$, otherwise, it follows from Proposition \ref{generalequilibriumproposition} that $V^{\underline{v}, w}(g;q) = \widetilde{V}^{\underline{v}}(g)$, but this contradicts $\underline{v} + w > \widetilde{V}^{\underline{v}}(g) + c$. Then for any creators $x\in \mathcal{X}$ if $V(x;q) < \underline{v}$, we have $V^{\underline{v}, w}(x;q) = V(x;q) < \underline{v} \leq V^{\underline{v}, w}(g;q) + c$ then $x$ strictly prefers GenAI, but if $V(x;q) \geq \underline{v}$ then $V^{\underline{v}, w}(x;q) = V(x;q) + w \geq \underline{v} + w > V^{\underline{v}, w}(g;q) + c$ then $x$ strictly prefers manual creation. This contradicts our assumption that $\mathcal{X}_{\text{IN}} \neq \emptyset$ and contradicts Lemma \ref{xhisemptylemma} that $\mathcal{X}_{\text{H}} = \emptyset$ at equilibrium.
    
    Suppose that $\underline{v} > V^{\underline{v}, w}(g;q) + c$ then $V^{\underline{v}, w}(g;q) = V(g;q)$ since none of the creator strictly prefers manual creation according to Lemma \ref{xhisemptylemma}, meaning that: $V(x;q) \leq V^{\underline{v}, w}(g;q) + c < \underline{v}$. By Lemma \ref{decisionundercompensationlemma}, $(\beta, q)$ is an equilibrium under no compensation $(\underline{v}_0, w_0)$ where $\underline{v}_0 := V(g;q) + c < \underline{v}, w_0 = 0$, and it follows from Proposition \ref{generalequilibriumproposition} that  $\underline{v}_0 = \widetilde{V}^{\underline{v}_0}(g) + c$. But since $\underline{v}_0 < \underline{v} \leq \widetilde{V}^{\underline{v}}(g)+c$, where the last inequality is due to the condition (\ref{admissibleconditionforvbar}), then $\widetilde{V}^{\underline{v}_0}(g) \geq \widetilde{V}^{\underline{v}}(g)$ by Lemma \ref{solutionv0lemma} that $\widetilde{V}^{\underline{v}}(g)$ is monotonically decreasing. But then $\underline{v} > \underline{v}_0 = \widetilde{V}^{\underline{v}_0}(g) + c \geq \widetilde{V}^{\underline{v}}(g) + c$, which contradicts the condition (\ref{admissibleconditionforvbar}).

    Suppose that $\underline{v}+w < V^{\underline{v}, w}(g;q)+c$, then according to Lemma \ref{decisionundercompensationlemma}, $(\beta, q)$ is also an equilibrium under a compensation scheme $(\tilde{\underline{v}}, \tilde{w})$ satisfying $\tilde{\underline{v}} \leq V^{\tilde{\underline{v}}, \tilde{w}}(g;q) + c = \tilde{\underline{v}} + \tilde{w}$ where $\tilde{\underline{v}} > \underline{v}$. Since $\mathcal{X}_{\text{IN}} \neq \emptyset$, the expression for $q$ is given by (\ref{generalequilibriumform}) according to Proposition \ref{generalequilibriumproposition}. Meanwhile, let us choose $w' := \widetilde{V}^{\underline{v}}(g) + c - \underline{v}$. It follows from condition (\ref{admissibleconditionforvbar}) and $\underline{v} + w > \widetilde{V}^{\underline{v}}(g) + c$ that $w' \in [0, w)$. Let $(\beta', q')$ be the equilibrium under $(\underline{v}, w')$ according to Proposition \ref{generalequilibriumproposition}, we have $\widetilde{V}^{\underline{v}}(g) = V^{\underline{v}, w'}(g;q')$. We present the expression for $q$ and $q'$ according to (\ref{generalequilibriumform}) for reference:
    \begin{equation*}
        \begin{aligned}
            q(x) &= \left(\frac{1-\gamma}{\tilde{\underline{v}}}\right)^{1/\alpha}\left(p(x)\cdot \mathbbm{1}[r(x) \geq \underline{r}(\tilde{\underline{v}})] + \underline{r}(\tilde{\underline{v}}) g(x)\cdot \mathbbm{1}[r(x) < \underline{r}(\tilde{\underline{v}})]\right)\\
            q'(x) &= \left(\frac{1-\gamma}{\underline{v}}\right)^{1/\alpha}\left(p(x)\cdot \mathbbm{1}[r(x) \geq \underline{r}(\underline{v})] + \underline{r}(\underline{v}) g(x)\cdot \mathbbm{1}[r(x) < \underline{r}(\underline{v})]\right)
        \end{aligned}.
    \end{equation*}
    We will compare the expected revenue $V^{\underline{v}, w'}(g;q')$ from using GenAI on the platform when the content distribution is at equilibrium $q'$ to $V^{\underline{v}, w'}(g;q)$ when the content distribution is off--equilibrium $q$, under the same compensation scheme $(\underline{v}, w')$.
    On one hand, keeping the equilibrium $(\beta, q)$ fixed, then $V^{\underline{v}, w}(g;q)$ is linear in $w$ with slope less than $1$, therefore lowering $w$ to $w' < w$ we get: $\underline{v} + w' < V^{\underline{v}, w'}(g;q)+c$, hence $V^{\underline{v}, w'}(g;q') < V^{\underline{v}, w'}(g;q)$. On the other hand, $V(x;q) = \tilde{\underline{v}}$ and $V^{\underline{v}, w}(x;q) = V(x;q) + w = V^{\underline{v}, w}(g;q)+c$ for all $x \in r^{-1}[\underline{r}(\tilde{\underline{v}}), \infty)$, since $r^{-1}[\underline{r}(\tilde{\underline{v}}), \infty)$ is the set of indifferent creators under $(\tilde{\underline{v}}, \tilde{w})$ which is the same as under $(\underline{v}, w)$ by construction from Lemma \ref{decisionundercompensationlemma}.  But then $V^{\underline{v}, w'}(x;q) = V(x;q) + w' \leq V^{\underline{v}, w'}(g;q)+c$ for all $x \in r^{-1}[\underline{r}(\tilde{\underline{v}}), \infty)$, since $w' < w$ and $V^{\underline{v}, w'}(g;q)$ is linear in $w'$ with slope less than $1$ for a fixed $(\beta, q)$. Similarly, $r^{-1}[0,\underline{r}(\tilde{\underline{v}}))$ is a set of creators who strictly prefer GenAI under $(\underline{v}, w)$. More precisely, we have that $V^{\underline{v}, w'}(x;q) = V(x;q) = \frac{\tilde{\underline{v}}}{\underline{r}(\tilde{\underline{v}})^\alpha}r(x)^\alpha\leq \frac{\underline{v}}{\underline{r}(\underline{v})^\alpha}r(x)^\alpha = V(x;q') < \underline{v}$ for all $x \in r^{-1}[0, \underline{r}(\underline{v}))$, and that $V^{\underline{v}, w}(x;q) \leq V(x;q) + w < V^{\underline{v}, w}(g;q) + c$ which implies $V(x;q) + w' < V^{\underline{v}, w'}(g;q) + c$, therefore $V^{\underline{v}, w'}(x;q) < V^{\underline{v}, w'}(g;q) + c$ for all $x \in r^{-1}[\underline{r}(\underline{v}), \underline{r}(\tilde{\underline{v}}))$. Note that $\underline{v}/\underline{r}(\underline{v})^\alpha$ is decreasing in $\underline{v}$. It follows that:
    \begin{multline*}
        V^{\underline{v}, w'}(g;q) = \int_{\mathcal{X}}V^{\underline{v}, w'}(y;q)g(y)dy\\
        \leq \int_{r^{-1}[0, \underline{r}(\underline{v}))}V(y;q')g(y)dy + (V^{\underline{v}, w'}(g;q)+c)\int_{r^{-1}[\underline{r}(v), \infty)}g(y)dy\\
        \implies V^{\underline{v}, w'}(g;q)+c \leq \frac{\int_{r^{-1}[0, \underline{r}(\underline{v}))}V(y;q')g(y)dy + c}{\int_{r^{-1}[0,\underline{r}(\underline{v}))}g(y)dy} = V^{\underline{v}, w'}(g;q')+c
    \end{multline*}
    which is a contradiction to the previous strict inequality.

    Let us now turn our attention to the case where $\mathcal{X}_{\text{IN}} = \emptyset$, we have $\beta = 1, q = g$. By condition (\ref{admissibleconditionforvbar}), $\underline{v} \in (1-\gamma, (1-\gamma)\sup_{x\in \mathcal{X}}r(x)^\alpha)$. Then for all creators to strictly prefer GenAI, it must be the case that: $\underline{v} + w \leq (1-\gamma)\sup_{y\in \mathcal{X}}r(y)^\alpha+w < V^{\underline{v}, w}(g;q)+c$, otherwise, we can find $x$ such that $V^{\underline{v}, w}(x;q) = (1-\gamma)r(x)^\alpha+w \geq V^{\underline{v}, w}(g;q) + c$. We can follow a similar argument as in the previous case to force a contradiction, given that we already know $q = g$ so Proposition \ref{generalequilibriumproposition} which requires $\mathcal{X}_{\text{IN}}\neq \emptyset$ can be by--passed.
\end{proof}

\subsection{Proof of Proposition \ref{platformrevennueproposition}}
\begin{proof}
    First, we consider $\underline{v}$ satisfying the condition (\ref{admissibleconditionforvbar}) of Proposition \ref{generalequilibriumproposition}. Let us suppose that $r(x)$ is distributed under $p$ by a density $h$ and finite number of point masses $\sum_{i=1}^nm_i\delta_{r_i}$. Let $t := \underline{r}(\underline{v})$, then we can write $R(\underline{v})$ in terms of $t$ and the distribution of $r(x)$ as:
    \begin{equation*}
        R(t) = \gamma\frac{\int_0^\infty\max\{(t/r)^\alpha,1\}h(r)dr + \sum_{i=1}^nm_i\max\{(t/r_i)^\alpha,1\}}{\left(\int_0^\infty\max\{t/r,1\}h(r)dr + \sum_{i=1}^nm_i\max\{t/r_i,1\}\right)^{\alpha}}.
    \end{equation*}
    It is clear that $R(t)$ is continuous in $t$ and it is differentiable when $t \neq r_i$ for any $i=1,\cdots,n$. For any $t \neq r_i$, $i=1,\cdots,n$, we can compute that:
    \begin{equation*}
        R'(t) = \alpha\gamma\frac{\left(\int_t^\infty h(r)dr + \sum_{i=1;r_i\geq t}^nm_i\right)\left[\int_0^t(\frac{1}{t^{1-\alpha}r^\alpha}-\frac{1}{r})h(r)dr+\sum_{i=1;r_i<t}^nm_i(\frac{1}{t^{1-\alpha}r_i^\alpha}-\frac{1}{r_i})\right]}{\left(\int_0^\infty\max\{t/r,1\}h(r)dr + \sum_{i=1}^nm_i\max\{t/r_i,1\}\right)^{\alpha+1}}.
    \end{equation*}
    The sign of $R'(t)$ is determined by the the sign of the square--bracket factor, which is negative, since $1/(t^{1-\alpha}r^\alpha) \leq 1/r$ when $r\leq t$. Therefore, we conclude that $R(t)$ is a monotonically decreasing continuous function in $t$. But since $t=\underline{r}(\underline{v})$ is a monotonically increasing continuous function of $\underline{v}$, the claim about $R(\underline{v})$ follows. To obtain the platform's profit (\ref{platformcompensationschemeprofitvbar}), we start from (\ref{platformrevenue}) which simplifies in the case of the compensation scheme $(\underline{v}, w)$ to:
    \begin{equation*}
        \Pi(\underline{v}) = R(\underline{v}) - (\widetilde{V}^{\underline{v}}(g) + c - \underline{v})\int_{r^{-1}[\underline{r}(\underline{v}), \infty)}q(x)dx.
    \end{equation*}
    Substituting the expression for $q(x)$ from (\ref{generalequilibriumform}) yields (\ref{platformcompensationschemeprofitvbar}). 

    On the other hand, in the case where $\underline{v}$ satisfies the condition (\ref{pureaicondition}) of Proposition \ref{pureailemma}, we have $q = g$. Since $V(x;q) < \underline{v}$ for all $x\in \mathcal{X}$ in this case, none of the content receive a compensation, hence: $\Pi(\underline{v}) = R(\underline{v})$ and the rest of (\ref{pureaiplatformcompensationschemeprofitvbar}) follows.
    
    Lastly, we consider the special case where $\overline{\overline{r}}$ is the only possible point mass. It follows from Lemma \ref{solutionv0lemma} that $\widetilde{V}^{\underline{v}}(g)$, and hence $\Pi(\underline{v})$, is continuous over the domain where $\underline{v}$ satisfies the condition (\ref{admissibleconditionforvbar}) of Proposition \ref{generalequilibriumproposition}. In this case, the platform's profit $\Pi(\underline{v})$ is given by (\ref{platformcompensationschemeprofitvbar}). Since there is no other atom apart from a possible one at $\overline{\overline{r}}$, we have that $M_p(\underline{v})\nearrow 1$ as $\underline{v}\searrow 1-\gamma$ and $\Pi(\underline{v})\rightarrow -\infty$ as $1/(1-M_p(\underline{v}))\rightarrow \infty$. If $(1-\gamma)\sup_{x\in \mathcal{X}}r(x)^\alpha > (1-\gamma)\mathbb{E}_{x\sim g}r(x)^\alpha + c$ then Lemma \ref{solutionv0lemma} further tells us that the solution $\underline{v}_0 \in (1-\gamma, (1-\gamma)\overline{\overline{r}}^\alpha)$ to $\underline{v} = \widetilde{V}^{\underline{v}}(g) + c$ exists, and that $\underline{v} < \widetilde{V}^{\underline{v}}(g)+c$ for all $\underline{v} < \underline{v}_0$. In this case, the domain where $\underline{v}$ satisfies the condition (\ref{admissibleconditionforvbar}) simplifies to $(1-\gamma, \underline{v}_0]$. If $(1-\gamma)\sup_{x\in \mathcal{X}}r(x)^\alpha \leq (1-\gamma)\mathbb{E}_{x\sim g}r(x)^\alpha + c$, then $\lim_{\underline{v}\nearrow (1-\gamma)\overline{\overline{r}}^\alpha}\Pi(\underline{v})\leq \lim_{\underline{v}\searrow(1-\gamma)\overline{\overline{r}}^\alpha}\gamma M(\underline{v}) = \gamma \mathbb{E}_{x\sim g}r(x)^\alpha$, where the discontinuity is due to the compensation $\lim_{\underline{v}\nearrow (1-\gamma)\overline{\overline{r}}^\alpha}(\widetilde{V}^{\underline{v}}(g)+c-\underline{v}) \geq 0$ paid to the point mass of creators at $\underline{v} = (1-\gamma)\overline{\overline{r}}^\alpha$ and note that $\gamma \mathbb{E}_{x\sim g}r(x)^\alpha$ is the platform's profit under no compensation: $(\underline{v}, w=0)$. Therefore, in the context of the platform's profit maximization problem, we can restrict ourselves to a compact subset of $\underline{v}$ inside $(1-\gamma, (1-\gamma)\overline{\overline{r}}^\alpha)$ where $\Pi(\underline{v})$ is continuous, and the possible maximum at $\underline{v} = (1-\gamma)\mathbb{E}_{x\sim g}r(x)^\alpha+c$ corresponding to no compensation. Overall, we have shown that the platform's profit maximizer $\underline{v}^*$ exists.
\end{proof}

\end{document}